\documentclass[11pt,reqno]{amsart}
\usepackage{amssymb,mathrsfs,graphicx,subfigure, enumerate}
\usepackage{amsmath,amsfonts,amssymb,amscd,amsthm,bbm}
\usepackage{graphicx,colortbl,here}
\usepackage{extpfeil}
\numberwithin{figure}{section}
\numberwithin{table}{section}

\usepackage{kotex}
\usepackage[utf8]{inputenc}
\usepackage[T1]{fontenc}
\usepackage{booktabs}
\usepackage{tabularx, makecell}

\def\mi {{\mathrm i}}

\def \rpsi_i {|\psi_i \rangle}
\def \lpsi_i {\langle \psi_i|}
\def \lrpsi_i{\langle \psi_i | \psi_i \rangle}
\def \rpsi_k {|\psi_k \rangle}
\def \lpsi_k {\langle \psi_k|}
\def \lrpsi_k{\langle \psi_k | \psi_k \rangle}

\newcommand{\bbr}{\mathbb R}

\newcommand{\bbs}{\mathbb S}

\newcommand{\kp}{\kappa}

\newcommand{\veps}{\varepsilon}

\newcommand{\ba}{\begin{aligned}}
\newcommand{\ea}{\end{aligned}}

\newcommand{\be}{\begin{equation}}
\newcommand{\ee}{\end{equation}}

\newcommand{\bbsm}{        {      \bbs^{m}  }    }

\newtheorem{theorem}{Theorem}[section]
\newtheorem{lemma}{Lemma}[section]

\newtheorem{proposition}{Proposition}[section]
\newtheorem{remark}{Remark}[section]
\newtheorem{definition}{Definition}[section]

\begin{document}


\title[]{Dimension-dependent order parameter selection in higher-order Kuramoto dynamics on spheres: continuous versus quantized regimes}

 \author[H. Huh]{Hyungjin Huh}
\address[H. Huh]{\newline Department of Mathematics, \newline Chung-Ang University, Seoul 06974, Republic of Korea}
\email{huh@cau.ac.kr}

\author[D. Kim]{Dohyun Kim}
\address[D. Kim]{\newline Department of Mathematics Education and Institute of Pure and Applied Mathematics, \newline Sungkyunkwan University, Seoul 03063, Republic of Korea, \newline School of Computational Sciences, \newline  Korea Institute for Advanced Study,   Seoul 02455, Republic of Korea}
\email{dohyunkim@skku.edu}

\thanks{\textbf{Acknowledgment.}
The work of H. Huh was supported by  the National Research Foundation of Korea (NRF) grant funded by the Korea government (RS-2025-16067086)  and the work of D. Kim was supported by the National Research Foundation of Korea(NRF) grant funded by the Korea government(MSIT)(RS-2024-00454452) and by the Visiting Professorship at Korea Institute for Advanced Study.}

\begin{abstract}
We study a high-dimensional Kuramoto model with attractive pairwise coupling and a repulsive higher-order effect.  Although the pairwise interaction favors synchronization,  the higher-order interaction prevents complete synchronization and selects an intermediate level of coherence corresponding  to the balanced equilibria. For the unit sphere of dimension at least two, we provide  an explicit basin of attraction for the balanced equilibria. We also classify nonzero-mean equilibria and show that all non-balanced equilibria  are linearly unstable. This indicates that balanced states are the only natural stable candidates in the higher dimensions. 

In contrast, on the circle, the same model reduces to a higher-order Kuramoto-type model which exhibits a qualitatively different selection mechanism that depends fundamentally on the dimension. In this case,   population imbalance at finite $N$ prevents exact stationary balance and  instead generates a common angular drift producing phase-locked states. We construct basins of attraction for two-cluster locked states, in which the phases split into two groups, and identify the corresponding finite-$N$ selected values of the order parameters. This phenomenon is called continuous-versus-quantized order-parameter selection.  We further demonstrate, through a root-selection mechanism, why such two-cluster locked states are  typically observed in most simulations. Finally, we show that two-cluster locked states with a large population imbalance are linearly unstable, which explains why only certain locked states are dynamically robust.

\end{abstract}

\keywords{Kuramoto model, high-dimensional Kuramoto model, higher-order interaction, finite-size corrections, phase-locked states, balanced states}

\makeatletter
\@namedef{subjclassname@2020}{%
  \textup{2020} Mathematics Subject Classification}
\makeatother

\subjclass[2020]{34D20, 34D05, 34D06, 82C22.}  

\date{\today}

\maketitle

\tableofcontents

\section{Introduction}
\setcounter{equation}{0}

Synchronization of weakly coupled phase oscillators has long served as a central paradigm for collective behavior in nonlinear dynamics. The Kuramoto model \cite{Ku} provides one of the simplest and most significant mathematically tractable frameworks for studying emergent dynamics. In the classical Kuramoto model, attractive pairwise coupling promotes phase synchronization and is naturally measured by the order parameter. The basin of attraction  for synchronization, the role of the frequency distribution, finite-size effects and many variants of the model have been extensively studied in the literature.  

\subsection{Higher-order interactions} 
Recently, there has been growing  interest in nonlinear dynamics involving higher-order or non-pairwise interactions. This development is part of the broader study of networks beyond pairwise interactions where the dynamics might depend not only on edges but also on group interactions encoded by hypergraphs or simplicial complexes. In oscillator networks, such higher-order couplings are known to generate phenomena that are absent  in the classical pairwise Kuramoto model including multistability, abrupt transitions, quasiperiodic behavior, cluster formation and explosive synchronization.

In particular, three-body and simplicial Kuramoto-type interactions have been shown to produce rich phase diagrams. The authors in \cite{TA} studied multistable attractors generated by three-body interactions. Later in \cite{SA}, the authors demonstrated abrupt desynchronization and extensive multistability in oscillator systems with higher-order interactions  (see Section \ref{sec:literature}).


\subsection{High-dimensional  model} 
The purpose of this paper is to show that attractive pairwise interactions do not necessarily guarantee complete synchronization once higher-order interactions are present.  We study the following high-dimensional model with higher-order interactions \cite{HK,Ko}. 
\begin{equation} \label{main}
 \dot x_i  = \kp_1(x_c-\langle x_i,x_c\rangle x_i) + \kp_2\langle x_i,x_c\rangle ( x_c - \langle x_i,x_c\rangle x_i)
\end{equation}
with the average $x_c$ and initial data $x_i^0$ on the unit sphere:
\begin{equation} \label{init}
x_c:= \frac1N \sum_{k=1}^N x_k,\quad x_i(0) = x_i^0 \in \bbs^m, \quad m\geq 2, \quad  i\in [N]:=\{1,\cdots,N\}.
\end{equation}
Here $\langle \cdot, \cdot \rangle$ is an inner product in $\Bbb{R}^{m+1}$.
 The first term with $\kp_1$ is the usual   mean-field pairwise interaction, and the second term with $\kp_2$ is a higher-order interaction whose strength depends on the alignment of $x_i$ with the mean field $x_c$.   
  
 We mainly focus on the regime
\begin{equation*} 
\kp_1>0,\quad \kp_1+\kp_2<0,\quad \alpha:= \frac{\kp_1}{-\kp_2}\in (0,1).
\end{equation*}
In this regime, the attractive pairwise interaction is counteracted by the repulsive  higher-order interactions. In particular, highly synchronized oscillators experience a negative effective coupling so that complete synchronization might lose its attracting character. Precisely, if we write \eqref{main} as 
\[
\dot x_i = -\kp_2(\alpha - \langle x_i,x_c\rangle)(x_c - \langle x_i,x_c\rangle x_i),
\]
then near a completely synchronized state, where $\langle x_i,x_c\rangle \approx 1$,  the  effective coupling $-\kp_2(\alpha - \langle x_i,x_c\rangle)$ becomes negative. In addition, oscillators are attracted toward the mean-field when their projection onto it is below $\alpha$, whereas they are repelled when this projection is above $\alpha$. This sign-changing structure  provides the main mechanism underlying the emergence of intermediate coherence.

\subsection{Main results for the high-dimensional model}
Our first main result concerns the high-dimensional sphere $\bbs^m$ with $m\geq2$.  We construct an explicit basin of attraction for balanced equilibria satisfying
\begin{equation} \label{balanced}
\|x_c(t)\|^2 = \alpha.
\end{equation}
We call a state satisfying \eqref{balanced}   a balanced state. For the initial data in this basin,  we indeed show that $\langle x_i,x_c\rangle(t)\to \alpha$ for all $i\in [N]$ which in turn implies \eqref{balanced}. The proof is based on an energy estimate for the deviations $\langle x_i,x_c\rangle -\alpha$ combined with a bootstrap argument controlling the pairwise distances from a reference balanced configuration. The key coercivity estimate relies on the availability of at least two transverse directions, which is precisely why the assumption $m\geq2$ is essential. The role of the transverse direction is crucial in the construction of balanced states. Once the mean direction is fixed, a balanced configuration requires the particles to have the prescribed projection onto the mean-field while their remaining components must cancel each other in the transverse space. In dimensions $m\geq2$, this transverse space has enough freedom to arrange these components with zero sum. This geometric cancellation is what allows the system to realize a stationary state with intermediate coherence. See Section \ref{sec:3} for detailed arguments.

In fact, in our numerical simulations, balanced states are robustly observed for   generic initial configurations. 
 Our basin result alone does not explain why   balanced equilibria appear to be the dynamically relevant stable states.
  We then complement our convergence result with a classification and linear stability analysis of equilibria. For nonzero mean-field (or average), equilibria split into three types: (i) pure bipolar equilibria, (ii) pure balanced equilibria and (iii) mixed polar-balanced equilibria. We show that every nonzero-mean equilibrium that is not a pure balanced equilibrium is linearly unstable. Thus, among nonzero-mean equilibria, pure balanced equilibria are the only candidates not ruled out by linear instability under the regime $\kp_1>0$ and $\kp_1+\kp_2<0$. 

On the other hand, although it lies somewhat outside the scope of the main results, we include three additional convergence estimates for \eqref{main} (desynchronizability for $\kp_1<0$, non-desynchronizability for $\kp_1>0$ and bipolar-synchronizability for $\kp_1<0$ and $\kp_1+\kp_2>0$) to further develop the analysis initiated in our previous work \cite{HK} (see Appendix \ref{sec:app.C}).

\subsection{One-dimensional model}

The circle case $\bbs^1$ is not merely the one-dimensional version of the sphere model; it is a distinct case in which the mechanism of balanced state selection for $m\geq2$ breaks down and is instead replaced by a  finite-size mechanism. Thus, $\bbs^1$ plays a distinguished role.  Under the ansatz $x_i = (\cos\theta_i, \sin\theta_i)$, the model \eqref{main} on the sphere   reduces exactly to a Kuramoto-type  model with higher-order interactions  \cite{SA,TA,WZX,ZSBPL} 
\begin{equation} \label{kura}
  \dot \theta_i = \frac{\kp_1}{N}\sum_{k=1}^N \sin (\theta_k - \theta_i) + \frac{\kp_2}{2N^2} \sum_{j,k=1}^N \sin(\theta_j+\theta_k - 2\theta_i)
\end{equation}
with initial data
\begin{equation} \label{kurainit}
\theta_i(0) = \theta_i^0,\quad i\in [N].
\end{equation}

\subsection{Main results for the one-dimensional model}

\subsubsection{Equilibrium vs phase-locked states}
At the level of the order parameter, the mechanism which selects an intermediate coherence level is analogous to that on $\bbs^m$ with $m\geq2$; however, the resulting asymptotic states are fundamentally different. In higher dimensions, the transverse latitude has positive dimension (see Figure \ref{fig2-1}), so the oscillators can redistribute themselves along transverse directions and balance the mean-field while remaining stationary. On $\bbs^1$, by contrast, the corresponding latitude consists of only two points. Hence, a stationary balanced configuration requires the  two points to be occupied by equal numbers of oscillators. This integer constraint is generally not possible for odd $N$. The resulting population imbalance cannot be compensated geometrically and is instead converted into a common angular drift leading to phase-locked states, not a stationary equilibrium. See Sections \ref{sec:6.1} and \ref{sec:6.2}.

\subsubsection{Two-cluster locked states}

Among a rich family of phase-locked configurations, we show that two-cluster locked states, where all phases split into two groups, are observed generically by using a root-selection mechanism.  If we write \eqref{kura} in a mean-field form 
\[
\dot \theta_i = \kp_1 R\sin (\Psi-\theta_i ) + \frac{\kp_2R^2}{2} \sin (2(\Psi-\theta_i)) =: f_R(\theta_i -\Psi),\quad\frac1N \sum_{j=1}^N e^{\mi\theta_j} =:  Re^{\mi \Psi},
\]
then phase-locked states are determined by the locking equation $f_R(\phi_i) = \Omega$ where 
\[
\phi_i := \theta_i - \Psi,\quad f_R(u) := -\kp_2 R(R\cos u - \alpha) \sin u.
\]
 Since $\sum_{i=1}^N \sin \phi_i=0$ holds and  $f_R$ is a trigonometric polynomial  of degree two, we conclude that the scalar equation $f_R(\phi_i) = \Omega$ has exactly four roots in general (see Figure \ref{fig6-2}). Since the stability of these roots alternates along the circle, precisely two of the four roots are attracting, while the remaining two are repelling. Consequently, the dynamics  selects two attracting phase locations, providing a dynamical justification for the generic emergence of two-cluster locked states. This mechanism  emerges from the combination of self-consistency, the geometry of the locking equation and root stability. See  Section \ref{sec:6.3}.

In addition, we construct an explicit basin of attraction for two-cluster locked states.  See Theorem \ref{T5.2}.


\subsubsection{Finite-$N$ selection of the order parameter}
Suppose that the phases split into two clusters:
\[
\theta_i = \begin{cases}
\theta_A,\quad i\in A,\\
\theta_B,\quad i\in B, 
\end{cases}
\]
where $A\sqcup B = [N]$ and $(|A|,|B|)=:(m,\ell)$. Let  $\Delta:=\theta_A-\theta_B$  denote  the difference of the two phases and  let $k:=|m-\ell|$  denote  the population imbalance of the two clusters. Then, we see that $\Delta$ satisfies
\[
\dot \Delta = -\sin \Delta ( \kp_1 + 2\kp_2 pq + \kp_2(p^2 + q^2)\cos \Delta), \quad p:=\frac mN,\quad q:=\frac\ell N
\]
which gives
\begin{equation} \label{A-30}
 \kp_1 + 2\kp_2 pq + \kp_2(p^2 + q^2)\cos \Delta=0.
\end{equation}
In addition, the order parameter becomes 
\begin{equation} \label{A-31}
R^2 = |pe^{\mi \theta_A} + qe^{\mi \theta_B}|^2 = p^2 + q^2 + 2pq\cos \Delta.
\end{equation}
By using \eqref{A-30} and \eqref{A-31}, we derive
\[
R^2   = \alpha + \frac{2(1-\alpha)k^2}{N^2+k^2}=:R_{N,k}^2
\]
which gives a finite-$N$ selection to the squared order parameter.  Thus, the values $R_{N,k}^2$ are not arbitrary values associated with rotating profiles; rather,  they are selected by the self-consistency condition for two-cluster locked states with a prescribed population imbalance.  When $k=0$, we recover the value $\alpha$, which is frequently (or generically) observed in $\bbs^m$ with $m\geq2$. However, in $\bbs^1$, the value $\alpha$ can occur only under an equal split and is not available for odd $N$. Hence, it is considerably less likely to be observed.   Note that the population imbalance  appears as a common angular velocity $\Omega$ which is explicitly given as 
\[
\Omega = -\kp_2(1-\alpha)^\frac32 \frac{(1-\eta^2)\sqrt{\alpha+\eta^2}}{(1+\eta^2)^2} \eta,\quad \eta := \frac kN.
\]
Hence, whenever $\eta\neq0$, one has $\Omega\neq0$, so the corresponding locked state is genuinely rotating rather than stationary.   See Section \ref{sec:5.2}.

\subsubsection{Nonuniform selection of the $R_{N,k}$ branches}

Although a complete global classification of all possible locked profiles is not attempted here, the argument above identifies the robust mechanism responsible for the observed $R_{N,k}$ branches. In addition, we also analyze that not all $R_{N,k}$-branches are equally likely to be observed: sufficiently large population imbalance leads to linear instability of the corresponding branch. Precisely, we denote $\phi_+:= \theta_A-\Psi$ and $\phi_- := \theta_B-\Psi$. Then, the eigenvalues for $\phi_\pm$ are given as $\lambda_\pm := f_R'(\phi_\pm)$. After some calculation, we see that $\lambda_+$ is represented as 
\[
\lambda_+ = \kp_2 (1-\alpha) \frac{P_\alpha(\eta)}{(1+\eta^2)^2},  \quad P_\alpha(\eta) = \alpha(1-3\eta+\eta^2+\eta^3)-4\eta^3.
\]
 Since $\kp_2<0$, the condition $\lambda_+<0$ is equivalent to $P_\alpha(\eta)>0$. 
Since $P_\alpha(0)=\alpha>0$, $P_\alpha(1)=-4$ and   $P_\alpha(\cdot)$ is strictly decreasing on $[0,1]$ for $0<\alpha<1$, there exists a unique root $\eta_c=\eta_c(\alpha)\in (0,1)$ such that $P_\alpha(\eta_c)=0$ (see Figure \ref{fig6-3}). Hence, if 
\[
\eta<\eta_c, \quad \textup{i.e.,}\quad k<N\eta_c,
\]
then $\lambda_+$ is  negative. Similarly, we see that $\lambda_-<0$. Hence, the two-cluster locked state is linearly stable. Consequently, large-imbalance branches are excluded by linear instability,  whereas small-imbalance branches are stable and remain close to the balanced value $\alpha$. See Section \ref{sec:smallk}.

 Figure \ref{fig:organization} schematically summarizes the organization of phase-locked states.
 
 \begin{figure}[H]
\centering
\mbox{
\subfigure{  
\includegraphics[width=0.5\textwidth]{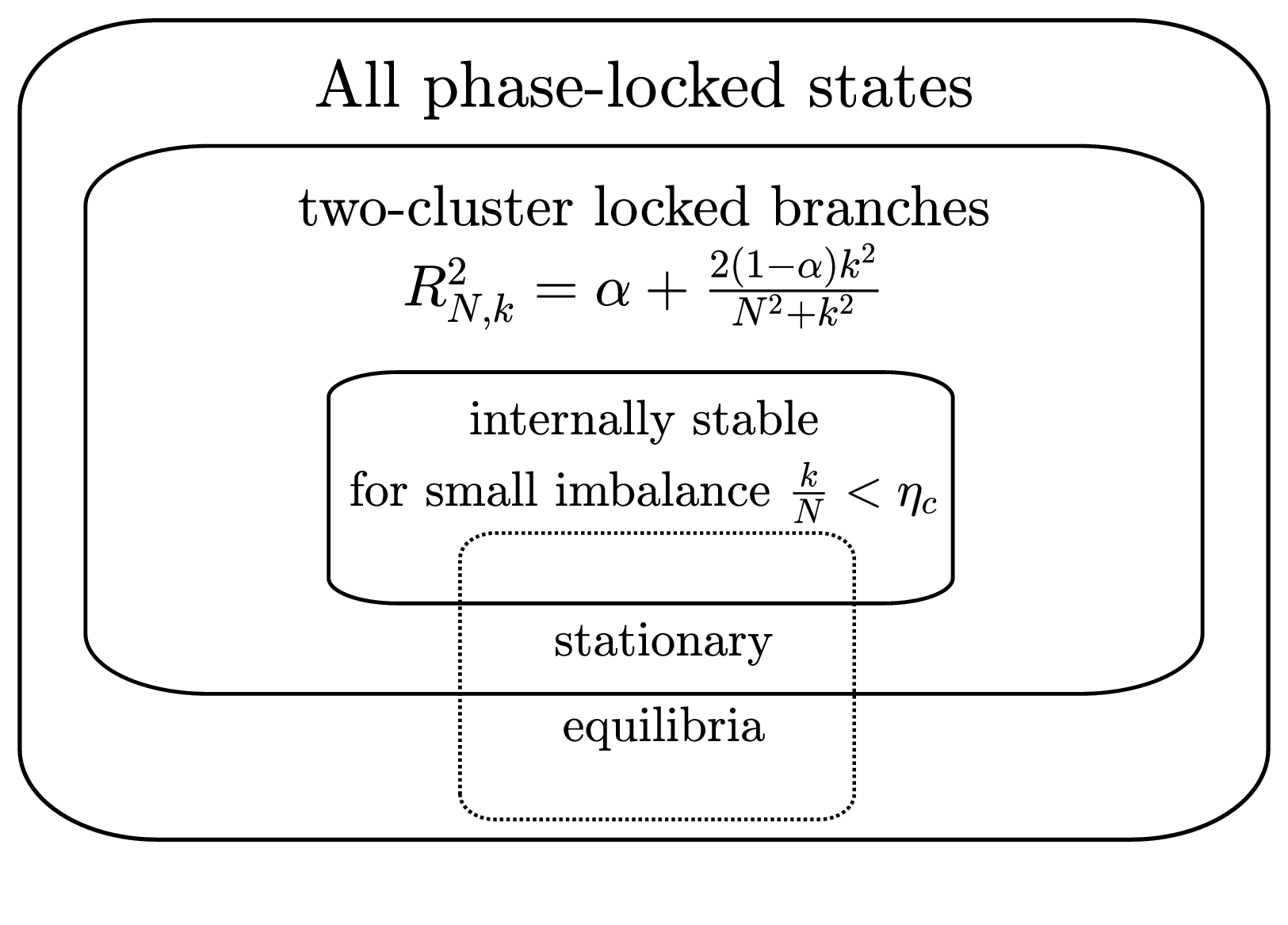}}  
}
\caption{ Organization of phase-locked states: Stationary equilibria are phase-locked states with zero angular velocity. The two-cluster branches form a distinguished subset in which small-imbalance branches satisfy the internal stability condition. For even $N$, the branch $k=0$ is simultaneously stationary, balanced and internally stable.} \label{fig:organization}
\end{figure}

\subsection{Unified picture of the main results}

The main results identify  a dimension-dependent mechanism of the order parameter selection. In the regime
\[
\kp_1>0,\quad \kp_1+\kp_2<0,\quad \alpha=\frac{\kp_1}{-\kp_2}\in (0,1),
\]
attractive pairwise interactions do not lead to complete synchronization. Instead, the dynamics selects an intermediate level of coherence whose
structure depends essentially on the dimension of the underlying sphere. For $\mathbb S^m$ with $m\geq2$, the order parameter converges to the continuously selected value
\[
R_\infty^2=\alpha.
\]
However, on $\mathbb S^1$, the transverse geometry permits only two admissible locations and the finite population imbalance produces the
discrete family
\[
R_\infty^2 = R_{N,k}^2 = \alpha+\frac{2(1-\alpha)k^2}{N^2+k^2}.
\]
Thus, the same competition between pairwise attraction and higher-order repulsion leads to \emph{continuous order-parameter selection} on
higher-dimensional spheres and \emph{quantized order-parameter selection} on the circle. The main results are summarized in Table~\ref{tab:dimension-dichotomy}.

The contrast between  $\bbs^1$ and $\bbs^m$ with $m\geq2$  can be justified as follows. On $\bbs^1$, finite-$N$ population imbalance cannot be removed by redistributing transverse components, because the fixed-projection level set consists of only two points. On the other hand, in higher dimensions, the transverse space has enough room to absorb the population imbalance geometrically which allows the system to converge to a stationary balanced state. Hence, the same higher-order interaction produces finite-$N$ phase locking on $\bbs^1$ and stationary balanced configuration on higher-dimensional spheres.

For the pairwise model, corresponding to $\kp_2=0$, the dynamics is a gradient flow with the mean-field energy. In the attractive regime, the dynamics consequently selects complete synchronization through a mechanism that is essentially insensitive to the dimension of the sphere. In particular, the equilibrium condition
\[
x_c = \langle x_i,x_c\rangle x_i \quad \Longrightarrow \quad \|x_c\|^2 = \langle x_i,x_c\rangle^2
\]
  yields that all $x_i$ and $x_c$ are parallel, except for $x_c=0$. Hence, this condition only requires each agent to be parallel or antiparallel to the mean-field and does not involve the geometry of the transverse space. On the other hand, the higher-order interaction changes this picture qualitatively. In the regime $\kp_1>0$ and $\kp_1+\kp_2<0$, the equilibrium condition factorizes into 
\[
(\alpha-\langle x_i,x_c\rangle) (x_c - \langle x_i,x_c\rangle x_i)=0.
\]
Hence, an agent might be stationary without being polar if $\langle x_i,x_c\rangle = \alpha$. This condition activates a fixed-projection latitude whose normalized transverse component belongs to $\bbs^{m-1}$. The resulting self-consistency problem requires a zero-sum configuration on this transverse sphere. 

The geometric origin of this dimension-dependent dichotomy can be formulated in terms of fixed-projection level sets. For $p\in\mathbb S^m$ and $c\in(-1,1)$,
\[
\mathcal L_c(p) := \{x\in\mathbb S^m:\langle x,p\rangle=c\} \cong \mathbb S^{m-1}.
\]
Thus, on $\mathbb S^1$, the admissible transverse set is $\mathbb S^0$ which consists of only two points.  Hence, exact cancellation requires an integer population constraint. However, for $m\geq2$, the transverse level set has positive dimension and allows continuous redistribution of the transverse components. This geometric mechanism is formalized in Proposition~\ref{prop:fixed-projection}.

\begin{table}[t]
\centering
\caption{Dimension-dependent coherence selection covered by our convergence results under the regime
$\kp_1>0$ and $\kp_1+\kp_2<0$.}
\label{tab:dimension-dichotomy}
\renewcommand{\arraystretch}{1.35}
\small
\begin{tabularx}{0.96\textwidth}{
    >{\centering\arraybackslash}p{0.14\textwidth}
    >{\raggedright\arraybackslash}p{0.25\textwidth}
    >{\centering\arraybackslash}p{0.27\textwidth}
    >{\raggedright\arraybackslash}X}
\toprule
State space
& Limiting state
& Order-parameter selection
& Geometric mechanism
\\
\midrule

$\mathbb S^m$ ($m\geq2$)
&
Stationary balanced equilibrium
&
$\displaystyle
R_\infty^2=\alpha
$
\newline
\newline
&
Continuous geometric balancing on a positive-dimensional level set
\\[3mm]

$\mathbb S^1$
&
Two-cluster locked state (in general)
&
$\displaystyle
R_\infty^2
=
\alpha+
\frac{2(1-\alpha)k^2}{N^2+k^2}
$
\newline
\newline
&
Discrete population imbalance on a two-point level set
\\

\bottomrule
\end{tabularx}
\end{table}
 
Finally, Figure \ref{fig:summary} summarizes the dimension-dependent selection mechanism underlying our main results. 

 \begin{figure}[H]
\centering
\mbox{
\subfigure{  
\includegraphics[width=1\textwidth]{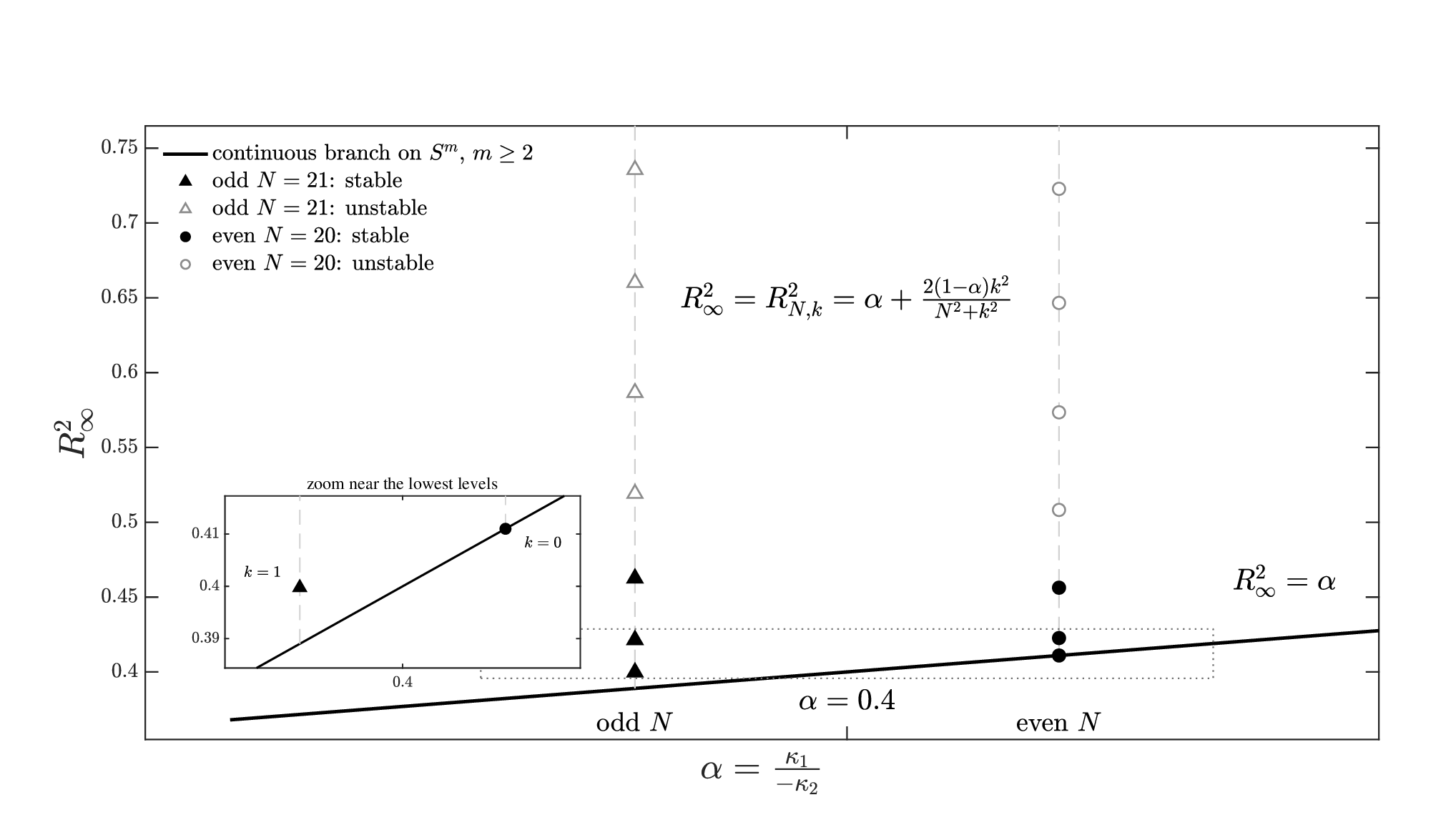}\label{fig:3-1}}
}
\caption{ Continuous versus quantized order-parameter selection: The solid curve represents the continuous selection $R_\infty^2 = \alpha$ on $\bbs^m$ with $m\geq2$, whereas the circle $\bbs^1$ admits discrete finite-$N$ levels $R_{N,k}^2$. Filled and open markers denote stable and unstable levels, respectively. The zoomed inset highlights the parity effect near the lowest levels: $k=0$ is admissible for even $N$, while odd $N$ starts from $k=1$.}  \label{fig:summary}
\end{figure}

 The rest of the paper is organized as follows. In Section \ref{sec:2}, we introduce the basic definitions and derive several identities that will be frequently used throughout the paper. We also review previous results relevant to the present work.  In Section \ref{sec:3}, we provide an explicit basin of attraction for   balanced equilibria on  $\bbs^m$ with $m\geq2$. In Section \ref{sec:4}, we classify   equilibria on the sphere  and establish the  linear instability of all non-balanced nonzero-mean equilibria. In Section \ref{sec:5}, we turn to the model on $\bbs^1$ which takes the form of a Kuramoto-type model with higher-order interactions,  and establish convergence to two-cluster locked states so that the squared order parameter converges to $R_{N,k}^2$. In Section \ref{sec:6}, we classify stationary equilibria and phase-locked states,  analyze the stability of the locked branches and  explain the selection mechanism of the finite-$N$ values $R_{N,k}$. In Section \ref{sec:7}, we perform numerical simulations that support our theoretical results and provide qualitative insights beyond the analytical results. In Appendices \ref{sec:app.A} and \ref{sec:app.B}, we provide the proofs of Lemma \ref{L3.1} and Theorem \ref{T5.2}, respectively. In Appendix \ref{sec:app.C}, we present complementary dynamics of \eqref{main}.

\section{Preliminaries}\label{sec:2}
\setcounter{equation}{0}

\subsection{Basic notation and identities}
First, we see that the unit sphere $\bbs^m$ is positively invariant along \eqref{main}--\eqref{init}, since the vector field in \eqref{main} is tangent to the unit sphere. Since the proof is straightforward, we omit it. 
\begin{lemma}
Let $\{x_i\}_{i=1}^N$ be a solution to \eqref{main}--\eqref{init}. Then, $x_i(t) \in \bbs^m$ for all $t\geq0$ and $i\in [N]$. 
\end{lemma}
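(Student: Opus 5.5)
The plan is to show that $\|x_i(t)\|^2\equiv 1$ by differentiating the squared norm and using that the vector field in \eqref{main} is orthogonal to $x_i$ whenever $x_i$ is a unit vector. The one subtlety is that orthogonality only holds exactly on the sphere. So I would not argue ``tangent, hence invariant'' directly. Instead I would derive a closed ODE for the quantities $r_i(t):=\|x_i(t)\|^2$ and apply uniqueness.

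First, note that the right-hand side of \eqref{main} is a polynomial in $(x_1,\dots,x_N)$. It is therefore locally Lipschitz on $(\bbr^{m+1})^N$, and the Cauchy--Lipschitz theorem gives a unique maximal solution. Next, I compute
\[
\frac{d}{dt}\|x_i\|^2 = 2\langle x_i,\dot x_i\rangle = 2\big(\kp_1+\kp_2\langle x_i,x_c\rangle\big)\big(\langle x_i,x_c\rangle - \langle x_i,x_c\rangle\|x_i\|^2\big).
\]
Writing $g_i(t):=2(\kp_1+\kp_2\langle x_i,x_c\rangle)\langle x_i,x_c\rangle$, this becomes the linear scalar equation
\[
\dot r_i = g_i(t)\,(1-r_i), \qquad r_i(0)=1.
\]
Here $g_i$ is continuous along the solution. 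Hence $1-r_i(t)=(1-r_i(0))\exp\big(-\int_0^t g_i\big)=0$, so $r_i\equiv1$ on the whole existence interval.

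Finally, global existence follows. Since every $x_i$ stays on the compact set $\bbs^m$, the solution cannot blow up in finite time, and so the maximal solution is defined for all $t\ge0$.

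No step is a genuine obstacle. The only point that needs care is avoiding circularity: I must not use $\|x_i\|=1$ when computing $\langle x_i,\dot x_i\rangle$. The linear ODE for $r_i$ handles this cleanly.
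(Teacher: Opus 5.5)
Your proof is correct and is essentially the paper's argument: the paper only notes that the vector field in \eqref{main} is tangent to the sphere and omits the details. Your identity $\dot r_i = g_i(t)(1-r_i)$, together with the uniqueness and global-existence remarks, is precisely the rigorous version of that tangency argument.
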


For \eqref{main}, we introduce the order parameter, the pairwise correlations, the projections onto the mean field, their deviations from $\alpha$ and the corresponding quadratic energy:
\begin{align} \label{functions}
\begin{aligned}
&R(t):= \|x_c(t)\|,\quad h_{ij}(t) :=  \langle x_i(t),x_j(t)\rangle,\quad g_{ij}(t):= 1- h_{ij}(t),\\
&q_i(t) := \langle x_i(t),x_c(t)\rangle,\quad  u_i(t) := q_i(t) - \alpha, \quad E(t) := \frac12\sum_{i=1}^N u_i(t)^2.
\end{aligned}
\end{align}
Similarly for \eqref{kura}, we define the order parameter $R$ and common phase $\Psi$:
\[
z=\frac1N \sum_{j=1}^N e^{\mi \theta_j} =  Re^{\mi \Psi},
\]
where we use the same notation $R$ for the order parameter in both \eqref{main} and \eqref{kura}, as the meaning will be clear from the context. 

Throughout the main analysis, unless otherwise stated, we assume
\[
\kp_1>0,\quad \kp_1+\kp_2<0,\quad \alpha=\frac{\kp_1}{-\kp_2}\in (0,1),\quad K:=-\kp_2>0.
\]

\begin{lemma} \label{lem:identity}
Let $\{x_i\}_{i=1}^N$ be a solution to \eqref{main}--\eqref{init}. Then, we have
\begin{align*}
&\textup{(i)}~~  \frac12\frac{d}{dt}R^2 =\frac{-\kp_2}{N}\sum_{i=1}^N  (\alpha-q_i)(R^2-q_i^2). \\
&\textup{(ii)}~~\frac{d}{dt} (1-h_{ij})  = \kp_2 (u_i-u_j)^2  -\kp_2 \bigl[\alpha(u_i+u_j)+u_i^2+u_j^2\bigr](1-h_{ij}). \\
&\textup{(iii)}~~\frac{dE}{dt} =  -\frac{K\alpha}{2N} \sum_{i,j=1}^N g_{ij}(u_i + u_j)^2 +  \frac{K}{2N}\sum_{i,j=1}^N (u_i + u_j)(u_i- u_j)^2 \\
&\hspace{1.7cm} - \frac{K}{2N}\sum_{i,j=1}^N g_{ij}(u_i + u_j)(u_i^2 + u_j^2),
\end{align*}
where $K:=-\kp_2>0$.
\end{lemma}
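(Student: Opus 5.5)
The plan is a direct computation. It rests on rewriting \eqref{main} in the compact form
\[
\dot x_i = K(\alpha-q_i)(x_c-q_ix_i) = -K u_i\,(x_c-q_ix_i) = \kp_2 u_i\,(x_c-q_ix_i),\qquad K=-\kp_2,
\]
and then differentiating the quantities in \eqref{functions}. Throughout I use $q_i=\alpha+u_i$, $h_{ij}=1-g_{ij}$, and $q_i=\frac1N\sum_j h_{ij}$. The last identity is just $\langle x_i,x_c\rangle$ written out, and it is what links (ii) to (iii).

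\textbf{Step 1: identity (i).} Since $\dot x_c=\frac1N\sum_i\dot x_i$, I compute
\[
\tfrac12\tfrac{d}{dt}R^2=\langle x_c,\dot x_c\rangle=\frac1N\sum_i K(\alpha-q_i)\bigl(\langle x_c,x_c\rangle-q_i\langle x_i,x_c\rangle\bigr)=\frac{K}{N}\sum_i(\alpha-q_i)(R^2-q_i^2).
\]
Substituting $K=-\kp_2$ gives the stated formula.

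\textbf{Step 2: identity (ii).} Differentiating $h_{ij}$ gives
\[
\dot h_{ij}=\langle\dot x_i,x_j\rangle+\langle x_i,\dot x_j\rangle=\kp_2\bigl[u_i(q_j-q_ih_{ij})+u_j(q_i-q_jh_{ij})\bigr].
\]
I then insert $h_{ij}=1-g_{ij}$ and split the bracket into a $g$-free part and a part proportional to $g_{ij}$.
\begin{itemize}
\item The $g$-free part is $u_i(q_j-q_i)+u_j(q_i-q_j)$. Because $q_j-q_i=u_j-u_i$, it equals $-(u_i-u_j)^2$.
\item The $g_{ij}$ part is $g_{ij}(u_iq_i+u_jq_j)=g_{ij}\bigl[\alpha(u_i+u_j)+u_i^2+u_j^2\bigr]$.
\end{itemize}
Hence
\[
\tfrac{d}{dt}(1-h_{ij})=-\dot h_{ij}=\kp_2(u_i-u_j)^2-\kp_2\bigl[\alpha(u_i+u_j)+u_i^2+u_j^2\bigr]g_{ij},
\]
which is (ii). This is the step where the sign must be tracked carefully: the minus sign from $-\dot h_{ij}$ and the minus sign inside the $g$-free part combine so that the $(u_i-u_j)^2$ term carries $+\kp_2$ while the $g_{ij}$ term carries $-\kp_2$.

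\textbf{Step 3: identity (iii).} From $q_i=\frac1N\sum_j h_{ij}$ and (ii), written with $K$, I get
\[
\dot q_i=-\frac1N\sum_j\dot g_{ij}=\frac{K}{N}\sum_j(u_i-u_j)^2-\frac{K}{N}\sum_j g_{ij}\bigl[\alpha(u_i+u_j)+u_i^2+u_j^2\bigr].
\]
Since $\dot u_i=\dot q_i$, it follows that $\dot E=\sum_i u_i\dot q_i$ is a double sum over $(i,j)$. Every summand other than the prefactor $u_i$ is symmetric in $(i,j)$, because $g_{ij}=g_{ji}$. I can therefore symmetrize by replacing $u_i$ with $\frac12(u_i+u_j)$. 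This produces three terms:
\begin{itemize}
\item $\frac{K}{2N}\sum_{i,j}(u_i+u_j)(u_i-u_j)^2$;
\item $-\frac{K\alpha}{2N}\sum_{i,j}g_{ij}(u_i+u_j)^2$;
\item $-\frac{K}{2N}\sum_{i,j}g_{ij}(u_i+u_j)(u_i^2+u_j^2)$.
\end{itemize}
These are exactly the three terms of (iii), with the same signs as in the statement.

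\textbf{Main obstacle.} The only delicate points are the sign bookkeeping in (ii) and the symmetrization in (iii). Both $g_{ij}$ terms in (iii) inherit the minus sign of the $g_{ij}$ term in (ii), since $\dot q_i=-\frac1N\sum_j\dot g_{ij}$ passes the signs through unchanged. For this reason I will verify (ii) first, for instance by checking it on the synchronized case $g_{ij}=0$, before assembling (iii).
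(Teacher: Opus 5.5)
Your computation is correct. The factorization $\dot x_i=\kp_2u_i(x_c-q_ix_i)$, the sign bookkeeping in (ii), and the derivation of (iii) from (ii) via $q_i=\frac1N\sum_j h_{ij}$ followed by symmetrization all check out. The paper states this lemma without proof as a routine calculation, and your direct computation is the natural way to supply it.
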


 \begin{definition}
 Let $\{x_i\}_{i=1}^N$ be a solution to \eqref{main}--\eqref{init} and $\{\theta_i\}_{i=1}^N$ be a solution to \eqref{kura}--\eqref{kurainit}. 
 \begin{enumerate}
\item We say that system \eqref{main} or \eqref{kura}  exhibits complete synchronization if and only if 
\[
\lim_{t\to\infty} R(t) =1.
\]
\item We say that system \eqref{main} or \eqref{kura} exhibits complete desynchronization if and only if 
\[
\lim_{t\to\infty} R(t) =0. 
\]
\item We say that system \eqref{main} exhibits complete bipolar synchronization if and only if there exists a nonempty proper subset  $A\subset \{1,\cdots,N\}$ such that up to relabeling,
\begin{align*}
&\lim_{t\to\infty} \|x_i(t) - x_j(t)\| =0,\quad \forall (i,j)\in A\times A, \quad  \forall (i,j)\in A^c\times A^c. \\
&\lim_{t\to\infty} \|x_i(t) + x_j(t)\| =0,\quad \forall (i,j)\in A\times A^c.
\end{align*}
\item A solution to system \eqref{main} or \eqref{kura} tends to a balanced state if and only if 
\[
\lim_{t\to\infty}R(t)^2 = \begin{cases}
\vspace{0.3cm}\alpha \quad \textup{for \eqref{main}},  \\
\displaystyle \alpha + \frac{2(1-\alpha)k^2}{N^2+k^2} \quad \textup{for \eqref{kura}}.
\end{cases}
\]
\item If there exists an initial region with a positive  measure with respect to the natural product surface measure on $(\bbs^m)^N$ leading to complete synchronization (or complete desynchronization), then system \eqref{main} is called synchronizable (or desynchronizable). 
\end{enumerate}
 \end{definition}

\subsection{Geometry of fixed-projection states}

The distinction between the circle and  higher-dimensional spheres can be seen at the level of a simple geometric constraint. For $p\in\mathbb S^m$ and $c\in(-1,1)$, we recall the fixed-projection level set
\[
\mathcal L_c(p) = \left\{ x\in\mathbb S^m:\langle x,p\rangle=c \right\}.
\]
The following observation provides the geometric basis for the dimension-dependent selection mechanism studied below.

\begin{proposition}
\label{prop:fixed-projection}
Let $p\in\mathbb S^m$ and $c\in(-1,1)$. Then, every $x\in\mathcal L_c(p)$ can be uniquely represented as
\[
x = cp+\sqrt{1-c^2}\,y, \qquad y\in  \Sigma_p^{m-1}, 
\]
where
\[
\Sigma_p^{m-1} := \bbs^m \cap p^\perp =  \left\{ y\in \bbs^m: \langle y,p\rangle =0 \right\}.
\]
In particular,
\[
\mathcal L_c(p)\cong\mathbb S^{m-1}.
\]
Moreover, let $x_1,\cdots,x_N\in\mathcal L_c(p)$ and write
\[
x_i = cp+\sqrt{1-c^2}\,y_i, \qquad y_i\in\Sigma_p^{m-1}.
\]
Then
\[
\frac1N\sum_{i=1}^N x_i=cp \quad\Longleftrightarrow\quad \sum_{i=1}^N y_i=0.
\]
Consequently, the following dimensional dichotomy holds: 
\begin{enumerate}
\item[(i)]
If $m=1$, then
\[
\Sigma_p^0=\mathbb S^0=\{\nu,-\nu\}
\]
for some unit vector $\nu\in p^\perp$. Hence,
\[
\sum_{i=1}^N y_i=0
\]
is possible if and only if the two points $\nu$ and $-\nu$ are occupied by the same number of particles. In particular, $N$ must be even.

\item[(ii)]
If  $m\geq2$, then $\Sigma^{m-1}_p$ has positive dimension. For every $N\geq2$, there exist $y_1,\cdots,y_N\in\Sigma^{m-1}_p$ satisfying
\[
\sum_{i=1}^N y_i=0.
\]
Moreover, such zero-sum configurations form a continuous family.
\end{enumerate}
\end{proposition}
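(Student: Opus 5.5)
The plan is to prove the decomposition first, then the mean-zero equivalence, and then treat the two cases of the dichotomy separately.

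\textbf{Decomposition.} Given $x\in\mathcal L_c(p)$, I would split $x$ orthogonally as $x=\langle x,p\rangle p+(x-\langle x,p\rangle p)=cp+w$ with $w\in p^\perp$. Since $\|x\|=\|p\|=1$, Pythagoras gives $\|w\|^2=1-c^2$, and this is positive because $|c|<1$. Setting $y:=w/\sqrt{1-c^2}$ produces a unit vector in $p^\perp$, that is, $y\in\Sigma_p^{m-1}$.

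Uniqueness follows by taking inner products. If $cp+\sqrt{1-c^2}\,y=cp+\sqrt{1-c^2}\,y'$, then $y=y'$ because $\sqrt{1-c^2}\neq0$.

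Conversely, any $y\in\Sigma_p^{m-1}$ gives a point $cp+\sqrt{1-c^2}\,y$ whose norm is $1$ and whose projection on $p$ is $c$, so it lies in $\mathcal L_c(p)$. Hence the map $y\mapsto cp+\sqrt{1-c^2}\,y$ is a bijection from $\Sigma_p^{m-1}$ onto $\mathcal L_c(p)$. It is an affine homeomorphism. Moreover, $\Sigma_p^{m-1}$ is the unit sphere of the $m$-dimensional space $p^\perp$, so it is isometric to $\mathbb S^{m-1}$, which gives $\mathcal L_c(p)\cong\mathbb S^{m-1}$.

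\textbf{Mean-zero equivalence.} Averaging the representations of the $x_i$ gives
\[
\frac1N\sum_i x_i=cp+\frac{\sqrt{1-c^2}}{N}\sum_i y_i .
\]
Again using $\sqrt{1-c^2}\neq0$, the average equals $cp$ if and only if $\sum_i y_i=0$.

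\textbf{Case (i), $m=1$.} Here $p^\perp$ is a line spanned by some unit vector $\nu$, so $\Sigma_p^0=\{\nu,-\nu\}$. If $n_+$ particles sit at $\nu$ and $n_-$ at $-\nu$, then $\sum_i y_i=(n_+-n_-)\nu$. This vanishes exactly when $n_+=n_-$, which forces $N=2n_+$ to be even.

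\textbf{Case (ii), $m\geq2$.} I would build zero-sum configurations explicitly inside a $2$-plane of $p^\perp$, which exists because $\dim p^\perp=m\geq2$. Choose orthonormal vectors $e_1,e_2\in p^\perp$ and set
\[
y_k=\cos(2\pi k/N+\varphi)\,e_1+\sin(2\pi k/N+\varphi)\,e_2,\qquad k=1,\dots,N.
\]
For $N\geq2$ these are the $N$-th roots of unity rotated by $\varphi$, so they sum to zero. (For $N=2$ the configuration is simply an antipodal pair.) Letting the parameter $\varphi\in[0,2\pi)$ vary, and also applying arbitrary rotations of $p^\perp$, produces a continuous family of zero-sum configurations.

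To make ``continuous family'' more substantive, I would also exhibit a family that is not merely a rotation orbit. Take $y_1=-y_2$ arbitrary on the circle, and place the remaining particles either as a rotated $(N-2)$-gon when $N-2\geq2$, or as two further antipodal points. This gives at least one free continuous parameter that no global rotation accounts for.

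\textbf{Main obstacle.} None of these steps is technically difficult. The only care needed is in the cases $N=2$ and $N=3$, where the parity and polygon constructions must be checked individually, and in making precise what ``continuous family'' means. For the latter I would take the rotation orbit as the minimal claim.
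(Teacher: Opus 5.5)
Your proposal is correct and follows essentially the same route as the paper: the orthogonal splitting $x=cp+v$ with $\|v\|^2=1-c^2$, averaging to get the zero-sum equivalence, counting multiplicities on $\{\nu,-\nu\}$ for $m=1$, and a regular $N$-gon in a $2$-plane of $p^\perp$ for $m\geq2$. Your explicit surjectivity check and the phase parameter $\varphi$ simply make precise the paper's terse ``$\cong$'' and ``continuously varied'' claims.

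One caveat concerns your optional non-rotational family: it cannot exist for $N=3$. If three unit vectors satisfy $y_1+y_2+y_3=0$, then $\langle y_i,y_j\rangle=-\tfrac12$ for all $i\neq j$, so the configuration is a rigid equilateral triangle. The rotation orbit is therefore the right claim in that case.
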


\begin{proof}
Since $\bbr^{m+1} = \operatorname{span}\{p\}\oplus p^\perp$, every $x\in\mathcal L_c(p)$ can be written uniquely as
\[
x=cp+v, \qquad v\in p^\perp.
\]
Using $\|x\|=1$, we obtain 
\[
1 = \|x\|^2  = c^2+\|v\|^2 \quad \Longrightarrow \quad v= \sqrt{1-c^2}y 
\]
for a unique $y\in\Sigma_p^{m-1}$. This shows the first assertion. 

For a configuration $x_i=cp+\sqrt{1-c^2}\,y_i,$ we have
\[
\frac1N\sum_{i=1}^N x_i = cp + \frac{\sqrt{1-c^2}}{N} \sum_{i=1}^N y_i.
\]
Thus,
\[
\frac1N\sum_{i=1}^N x_i=cp \quad \Longleftrightarrow \quad \sum_{i=1}^N y_i=0.
\]

If $m=1$, then $p^\perp$ is one-dimensional and thus, $\Sigma_p^0=\{\nu,-\nu\}$. Hence, the zero-sum condition holds if and only if the multiplicities of $\nu$ and $-\nu$ coincide. In addition, $N$ should be even. 

If $m\geq2$, choose orthonormal vectors $e_1,e_2\in p^\perp$ and set 
\[
y_j = \cos\frac{2\pi(j-1)}{N}\,e_1 + \sin\frac{2\pi(j-1)}{N}\,e_2, \qquad j=1,\ldots,N.
\]
Then, $\sum_{j=1}^N y_j=0.$ Since the transverse sphere has positive dimension, these configurations can be continuously varied when the zero-sum constraint is preserved. 
\end{proof}

%
%

 \begin{figure}[H]
\centering
\mbox{
\subfigure{  
\includegraphics[width=1\textwidth]{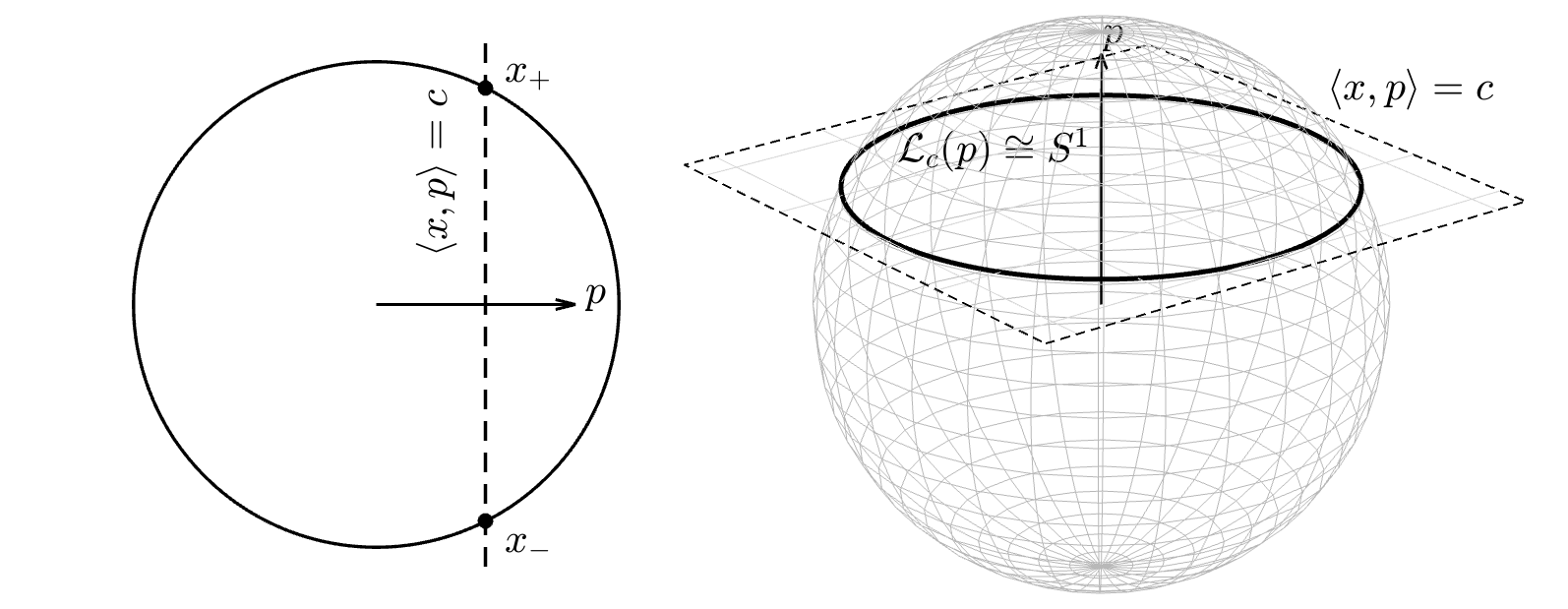}\label{fig:3-1}}
}
\caption{   Fixed-projection geometry on $\bbs^1$ and $\bbs^2$: the admissible set consists of two points on the circle, whereas it forms a positive-dimensional latitude on the higher-dimensional sphere. }  \label{fig2-1}
\end{figure}

\subsection{Kuramoto reduction}
Let $m=1$ and write $x_i = (\cos\theta_i,\sin\theta_i)$. Then, $\theta_i$ satisfies \eqref{kura}:
\begin{equation*} 
\dot \theta_i = \frac{\kp_1}{N}\sum_{k=1}^N \sin (\theta_k - \theta_i) + \frac{\kp_2}{2N^2} \sum_{j,k=1}^N \sin(\theta_j+\theta_k - 2\theta_i).
\end{equation*}

\subsection{Previous results} \label{sec:literature}

The extensions of the Kuramoto model to higher-dimensional unit spheres   replace scalar phases by positions on the unit sphere and reveal dynamical features that have no direct counterpart on $\bbs^1$. In particular, dimension-dependent transitions were discussed in the generalized $D$-dimensional Kuramoto model where odd and even dimensions may exhibit qualitatively distinct routes to synchronization \cite{CGO1}. Higher-order interactions on spheres have also been shown to exhibit nontrivial static configurations, including equally spaced states that remain coherent even when pairwise interactions alone are repulsive \cite{Lo22,Lohe2022}. A related and particularly striking result is that purely contrarian oscillators can synchronize when nonpairwise interactions are introduced \cite{Ko}. We also mention the work \cite{DKM},  which emphasizes the distinction between synchronization scenarios in odd and even dimensions. See also \cite{K21} for another type of higher-order coupling which is reminiscent of the bi-harmonic Kuramoto model. The present work provides a complementary phenomenon: even conformist oscillators with attractive pairwise interactions need not converge to the completely synchronized state when the competing repulsive  higher-order coupling is sufficiently strong.

More broadly, higher-order interactions are now known to fundamentally modify the collective dynamics of coupled oscillators. For instance, they induce abrupt synchronization switching and hysteresis \cite{SA}, stabilize multicluster states \cite{XS2021} and generate multistability among complete synchronization, incoherence, and two-cluster states \cite{LMHN}. It is important to note  that these effects depend not only on the interaction order but also on how group interactions are represented. In \cite{ZLB2}, the same higher-order structure can enhance synchronization when represented as a hypergraph but suppress it when represented as a simplicial complex. Thus, the geometric form of the higher-order coupling is an essential component of the dynamics, rather than a secondary modeling choice.

Recent studies have also highlighted that local linear stability alone does not determine which coherent state is observed from generic initial data. In \cite{ZSBPL}, the authors showed that higher-order interactions can increase the linear stability of an attractor while simultaneously decreasing its basin of attraction. This distinction between local stability and global basin geometry is closely related to the basin-dependent selection of synchronized, desynchronized, bipolar and phase-locked states considered in this work. Complementary numerical results in \cite{MNC} show that weak higher-order interactions may enhance synchronization, although strong higher-order coupling generally suppresses it. These results demonstrate that  non-pairwise interactions cannot be characterized simply as synchronizing or desynchronizing: their effect depends on the coupling strength, interaction structure and initial configuration. See \cite{LMHN2025,LMZL} for systematic derivations of higher-order phase models and \cite{MNGC} for phase chimera states on nonlocal hyperrings. See \cite{KP,MG} for the competition between higher-harmonic interactions, \cite{MNB} for a phase reduction method and \cite{BBK,BGHS} for mathematical approaches for higher-order interactions. We also refer the reader to \cite{Battiston2026,BC,Bi,BBK,BBC} for comprehensive reviews.

In \cite{HK}, the present authors studied the system \eqref{main}--\eqref{init}. In particular, we showed that the line $\kp_1+\kp_2=0$ serves as the critical threshold for synchronizability. The present work further identifies $\kp_1=0$ as the critical line for desynchronizability. Moreover, although \cite{HK} stated that only complete synchronization and complete desynchronization compete in the regime $\kp_1+\kp_2>0$ and $\kp_1<0$, we here show that this regime also admits complete bipolar synchronization. Thus, depending on the initial configuration, complete synchronization, complete desynchronization and complete bipolar synchronization all emerge for the same parameter  
regime. Lastly, we recall the model \eqref{main} with $\kp_2=0$, called the swarm sphere model \cite{Lo09,O1,T,Zhu1}.
Recent studies include \cite{CO} which investigates microscopic instability mechanism underlying macroscopic bursting phenomena, and \cite{DLG} where discontinuous transitions arise from feedback through the global order  parameter. We also refer the reader to \cite{C-H,H-K-R18,M1} for related synchronization scenarios.

\section{Convergence estimates on $\bbs^m$ with $m\geq2$} \label{sec:3} 
\setcounter{equation}{0}

\subsection{Exclusion of complete synchronization and complete desynchronization}
In this subsection, we show that  when $\kp_1>0$ and $\kp_1+\kp_2<0$, complete synchronization and complete desynchronization cannot emerge for generic initial data. 

\begin{theorem} \label{T3.1} 
Suppose that
\[
\kp_1>0,\quad \kp_1+\kp_2<0,\quad \alpha= \frac{\kp_1}{-\kp_2} \in (0,1).
\]
Then, the order parameter cannot converge to  $0$ or $1$ for generic initial data. Precisely, if $R_0>0$, then $R(t)$ cannot converge to zero, and if $R_0<1$, then $R(t)$ cannot converge to 1.
\end{theorem}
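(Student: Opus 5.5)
Both claims will be read off from Lemma~\ref{lem:identity}(i), which expresses $\frac{d}{dt}R^2$ through the projections $q_i=\langle x_i,x_c\rangle$:
\[
\frac12\frac{d}{dt}R^2=\frac{K}{N}\sum_{i=1}^N(\alpha-q_i)(R^2-q_i^2),\qquad K=-\kp_2>0.
\]
The term $R^2-q_i^2$ is nonnegative, since $|q_i|\le\|x_i\|\|x_c\|=R$ by Cauchy--Schwarz. The sign of $\frac{d}{dt}R^2$ is therefore governed by the factors $\alpha-q_i$. Because $|q_i|\le R$, these factors have a definite sign whenever $R$ is small or close to one.

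\textbf{Exclusion of $R\to0$.} Assume $R_0>0$ and suppose $R(t)\to0$. Once $R<\alpha$, we have $\alpha-q_i\ge\alpha-R>0$, so $\frac{d}{dt}R^2\ge0$. Fix $T$ with $R(t)<\alpha$ for all $t\ge T$. Then $R^2$ is nondecreasing on $[T,\infty)$, and convergence to $0$ forces $R(t)\equiv0$ on $[T,\infty)$. To rule this out I will use backward uniqueness. The set $\{x_c=0\}$ is invariant: when $x_c=0$ the vector field in \eqref{main} vanishes, so every configuration with zero mean is an equilibrium. Hence $x(T)$ with $x_c(T)=0$ is an equilibrium, and by uniqueness of solutions of the Lipschitz ODE backward in time, $x(t)=x(T)$ for all $t$. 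This gives $R_0=0$, a contradiction.

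One detail needs care: $R$ may exceed $\alpha$ for a while before it drops below $\alpha$. That is harmless, because the monotonicity argument only uses the tail $[T,\infty)$.

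\textbf{Exclusion of $R\to1$.} Assume $R_0<1$ and suppose $R(t)\to1$. For $R^2>\alpha$, I aim to show $\frac{d}{dt}R^2\le0$. Write $1-R^2=\frac1N\sum_i(1-q_i)$. When $R$ is near $1$, every $q_i$ is near $1$: since $q_i\le R\le1$, each term $1-q_i\ge0$, so $1-q_i\le N(1-R^2)$. Hence $q_i\ge 1-N(1-R^2)>\alpha$ once $R$ is close enough to $1$, and then every term $(\alpha-q_i)(R^2-q_i^2)\le0$. So there is $T$ with $R^2$ nonincreasing on $[T,\infty)$ and $R^2(T)\le1$. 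Convergence to $1$ then forces $R\equiv1$ on $[T,\infty)$, i.e.\ complete synchronization $x_i(T)=x_c(T)$ for all $i$. The synchronized state is again an equilibrium, since $x_c-\langle x_i,x_c\rangle x_i=0$. Backward uniqueness then gives $R_0=1$, a contradiction.

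\textbf{Expected obstacle.} The main point to get right is this backward-uniqueness/invariance step. The argument must use that both target sets, $\{x_c=0\}$ and $\{x_i=x_c\ \forall i\}$, consist of equilibria, so that no trajectory starting off these sets can reach them in finite time. Monotonicity alone does not suffice, because a monotone $R^2$ could still approach its limit value asymptotically. A cleaner alternative avoids this subtlety. For the small-$R$ case, use
\[
\frac{d}{dt}R^2\ge 2K(\alpha-R)\frac1N\sum_i(R^2-q_i^2)\ge0,
\]
which shows that $R^2$ stays bounded below by $R^2(T)>0$, since $R(T)>0$ by the uniqueness argument applied at the single time $T$. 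The near-one case is handled the same way.
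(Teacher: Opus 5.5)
Your proposal is correct and follows essentially the same route as the paper. On a tail $[T,\infty)$ the factors $\alpha-q_i$ have a fixed sign, so Lemma~\ref{lem:identity}(i) makes $R^2$ monotone there and pins $R$ to its limit value on $[T,\infty)$; uniqueness of solutions then forces $R_0=0$ (resp.\ $R_0=1$). You also make explicit two points the paper leaves implicit. The first is the bound $1-q_i\le N(1-R^2)$, which justifies $q_i\to1$. The second is that configurations with $x_c=0$, or with all $x_i$ equal, are equilibria, which is exactly what makes the backward-uniqueness step valid.
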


\begin{proof}
(i) We first exclude convergence to complete desynchronization when $R_0>0$.   Suppose to the contrary that $R(t)$ converges to zero. We indeed show that this is possible only if $R_0=0$. Since $R(t)\to0$, there exists $T>0$ such that
\[
R(t) <\alpha,\quad t\geq T.
\]
Then, for every $i\in [N]$, we have
\[
q_i(t) = \langle x_i(t),x_c(t)\rangle \leq \|x_c(t)\|<\alpha,\quad \textup{or}\quad \alpha - q_i(t) >0,\quad t\geq T.
\]
Recall the order parameter identity in Lemma \ref{lem:identity}:
\[
\frac12 \frac{d}{dt} R^2 = \frac K N \sum_{i=1}^N (\alpha - q_i) (R^2 - q_i^2)\geq 0,\quad t\geq T.
\]
Hence, $R^2(t)$ is non-decreasing on $t\geq T$. However, by assumption  $R(t)^2\to0$, we have
\[
R(t) = 0,\quad t\geq T.
\]
Then, by uniqueness of an ODE, we should have $R_0=0$. This contradicts the assumption $R_0>0$. \newline

\noindent (ii) Suppose to the contrary that $R(t)$ converges to 1. Then, we have
\[
q_i(t) = \langle x_i(t),x_c(t)\rangle   \to 1,\quad i\in [N].
\]
Since $\alpha<1$, there exists $T>0$ such that
\[
\alpha - q_i(t)<0,\quad i\in [N],\quad t\geq T.
\]
Again, recalling the order parameter identity, 
\[
\frac12 \frac{d}{dt} R^2 = \frac K N \sum_{i=1}^N (\alpha - q_i) (R^2 - q_i^2) \leq0,\quad t\geq T.
\]
Thus, $R(t)^2$ is non-increasing on $t\geq T$. Since $R^2(t)\leq 1$, while $R^2$ is non-increasing on $[t,\infty)$ and converges to $1$, we necessarily have $R(t) \equiv 1$ for $t\geq T$. Again, uniqueness of an ODE gives $R_0=1$. This is  a contradiction.
\end{proof}

\subsection{Basin of attraction for a balanced state}
In this subsection, we provide a basin of attraction leading to the balanced state.

\subsubsection{Construction of an equilibrium for a balanced state}
First, we construct a reference balanced equilibrium. Since $m\geq2$, we can choose three orthonormal vectors in $\bbr^{m+1}$:
\[
p, \, e_1, \,e_2 \in \bbr^{m+1}
\]
and set 
\[
\theta_i := \frac{2\pi(i-1)}{N},\quad y_i:= (\cos\theta_i)e_1 + (\sin\theta_i)e_2.
\]
Then, we have
\[
\|y_i\|=1,\quad \langle y_i,p\rangle =0,\quad \sum_{i=1}^N y_i=0.
\]
Define a target equilibrium
\[
x_i^*:= \sqrt\alpha p + \sqrt{1-\alpha } y_i,\quad i\in [N]
\]
which satisfies 
\[
x_c^* := \frac1N \sum_{i=1}^N x_i^* = \sqrt \alpha p,\quad \|x_c^*\|^2 = \alpha,\quad q_i^*:= \langle x_i^*,x_c^*\rangle =\alpha.
\]
We verify that $X^* :=(x_1^*,\cdots, x_N^*)$ becomes an equilibrium. For later use, we denote
\[
g_{ij}^* := 1-\langle x_i^*,x_j^*\rangle. 
\]
Since $\langle x_i^*,x_j^* \rangle = \alpha + (1-\alpha)\langle y_i,y_j\rangle$, we have
\[
g_{ij}^* = (1-\alpha)(1-\cos(\theta_i - \theta_j)).
\]
Define the distance from the reference pairwise profile
\[
\Delta(t) := \max_{1\leq i,j\leq N} |g_{ij}(t) - g_{ij}^*|.
\]

\subsubsection{Coercivity estimate} 
Our goal is to show that $E(t)$ converges to zero. In order to derive a dissipative estimate for $E$, a coercivity estimate is needed.  Define
\[
\mathcal Q_*(u) := \frac{1}{2N} \sum_{i,j=1}^N g_{ij}^*(u_i + u_j)^2, 
\]
where $u_i = \langle x_i,x_c\rangle - \alpha$ is introduced in \eqref{functions}.
\begin{lemma} \label{L3.1}
Let $\{x_i\}$ be a solution to \eqref{main}--\eqref{init}. Then, for $N\geq3$ and every $u=(u_1,\cdots, u_N)\in \bbr^N$, 
\[
\mathcal Q_*(u) \geq \frac{1-\alpha}{2}\sum_{i=1}^N u_i^2 =:   2\chi_* E,\qquad \chi_* = \frac{1-\alpha}{2}.
\]
\end{lemma}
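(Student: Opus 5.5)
The plan is to compute $\mathcal Q_*$ explicitly using the closed form of the reference profile,
\[
g_{ij}^*=(1-\alpha)\bigl(1-\cos(\theta_i-\theta_j)\bigr),\qquad \theta_i=\frac{2\pi(i-1)}{N}.
\]
The inequality then reduces to a Bessel-type bound in $\bbr^N$. Note that the dynamics play no role: the estimate is purely algebraic in $u\in\bbr^N$.

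\textbf{Step 1: expand the quadratic form.} We write $(u_i+u_j)^2=u_i^2+u_j^2+2u_iu_j$ and $\cos(\theta_i-\theta_j)=\cos\theta_i\cos\theta_j+\sin\theta_i\sin\theta_j$. For the diagonal-type terms we use that $\sum_{j}\cos(\theta_i-\theta_j)=0$ for $N\ge 2$ (roots of unity sum to zero). This gives $\sum_j(1-\cos(\theta_i-\theta_j))=N$, hence
\[
\sum_{i,j}\bigl(1-\cos(\theta_i-\theta_j)\bigr)(u_i^2+u_j^2)=2N\|u\|^2 .
\]
For the cross terms, with $c:=(\cos\theta_i)_i$, $s:=(\sin\theta_i)_i$ and $\mathbf 1:=(1,\dots,1)$, we get
\[
\sum_{i,j}\bigl(1-\cos(\theta_i-\theta_j)\bigr)u_iu_j=\langle u,\mathbf 1\rangle^2-\langle u,c\rangle^2-\langle u,s\rangle^2 .
\]
Therefore
\[
\mathcal Q_*(u)=\frac{1-\alpha}{N}\Bigl(N\|u\|^2+\langle u,\mathbf 1\rangle^2-\langle u,c\rangle^2-\langle u,s\rangle^2\Bigr).
\]

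\textbf{Step 2: reduce to a projection bound.} The claim $\mathcal Q_*(u)\ge\frac{1-\alpha}{2}\|u\|^2$ is equivalent to
\[
\langle u,c\rangle^2+\langle u,s\rangle^2\le \frac N2\|u\|^2+\langle u,\mathbf 1\rangle^2 .
\]
It therefore suffices to prove $\langle u,c\rangle^2+\langle u,s\rangle^2\le \frac N2\|u\|^2$, discarding the nonnegative term $\langle u,\mathbf 1\rangle^2$.

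\textbf{Step 3: orthogonality of $c$ and $s$ (the key point, where $N\ge3$ enters).} Using $\sum_i e^{2\mi\theta_i}=0$, which holds exactly when $N\nmid 2$, i.e.\ $N\ge3$, we get
\[
\|c\|^2=\|s\|^2=\frac N2,\qquad \langle c,s\rangle=0 .
\]
Hence $\{c/\sqrt{N/2},\,s/\sqrt{N/2}\}$ is an orthonormal pair. Bessel's inequality then gives
\[
\frac{2}{N}\bigl(\langle u,c\rangle^2+\langle u,s\rangle^2\bigr)\le\|u\|^2,
\]
which is exactly the needed bound, and the lemma follows with $\chi_*=\frac{1-\alpha}{2}$.

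The only delicate point is Step 3. For $N=2$ one has $s=0$ and $c=(1,-1)$ with $\|c\|^2=2\neq N/2$, so the bound fails. This explains the hypothesis $N\ge3$, and I would remark on it in the proof.
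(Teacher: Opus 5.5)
Your proof is correct and is essentially the paper's argument. The paper writes $G^*=(1-\alpha)(\mathbf{1}\mathbf{1}^\top-\mathbf{a}\mathbf{a}^\top-\mathbf{b}\mathbf{b}^\top)$ and uses the row sums $\sum_j g_{ij}^*=N(1-\alpha)$ to get $\mathcal Q_*(u)=\langle u,[(1-\alpha)I+\frac1N G^*]u\rangle$; it then reads off the smallest eigenvalue $\frac{1-\alpha}{2}$ on $\mathrm{span}\{\mathbf a,\mathbf b\}$, which is exactly your Bessel bound after you discard $\langle u,\mathbf 1\rangle^2$. Your explicit observation that $\sum_i e^{2\mi\theta_i}=0$ requires $N\ge 3$, together with the $N=2$ counterexample, makes precise a point the paper leaves implicit.
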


\begin{proof}
Since the proof is rather lengthy, we postpone it to Appendix \ref{sec:app.A}. 

\end{proof}


\begin{remark}
The coercivity estimate itself is a spectral property of the regular-polygon reference profile. The assumption $m\geq2$ enters through the construction of this profile which requires a two-dimensional plane in $p^\perp$. In Remark \ref{rem:A.1}, we show that $\chi_*$ can be zero in the one-dimensional case. 
\end{remark}

 \subsubsection{Explicit basin of attraction}
 We are now ready to state the main theorem. 

\begin{theorem} \label{T3.2}
Suppose that
\[
m\geq2,\quad N\geq3,\quad  \kp_1>0,\quad \kp_1+\kp_2<0, \quad \alpha= \frac{\kp_1}{-\kp_2}\in (0,1).
\]
Fix $\veps\in (0,\frac{\chi_*}{2})$ with $\chi_* = \frac{1-\alpha}{2}$ and assume that the initial data satisfy 
\begin{align} \label{C-70}
\begin{aligned}
&\textup{(i)}~~E_0 <\frac{\alpha^2(\chi_* - 2\veps)^2}{72},\\
&\textup{(ii)}~~ \Delta_0 +  \frac{4\alpha\sqrt{2E_0} + 8 E_0}{\alpha(\chi_*-2\veps) -6\sqrt{2E_0}} <\veps, 
\end{aligned}
\end{align}
where $E_0=E(0)$ and $\Delta_0=\Delta(0)$.
Let $\{x_i\}$ be a solution to \eqref{main}--\eqref{init}. Then, we have
\[
\lim_{t\to\infty} \langle x_i(t),x_c(t)\rangle = \alpha ,\quad i\in [N].
\]
Consequently, the squared order parameter converges to 
\[
\lim_{t\to\infty} \|x_c(t)\|^2 = \alpha . 
\]
In this situation, there exists a balanced equilibrium $x_i^\infty \in \bbsm$ such that
\[
\lim_{t\to\infty} x_i(t) = x_i^\infty,\quad i\in [N].
\]

\end{theorem}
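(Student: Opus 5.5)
The plan is a continuity (bootstrap) argument built on the energy identity in Lemma~\ref{lem:identity}(iii), the coercivity estimate of Lemma~\ref{L3.1}, and the pairwise-correlation identity in Lemma~\ref{lem:identity}(ii). Let
\[
T^*:=\sup\{T\ge0:\ \Delta(t)<\veps \text{ for all } t\in[0,T)\},
\]
which is positive by (ii) of \eqref{C-70} and continuity. On $[0,T^*)$ I first derive exponential decay of $E$, then feed this decay back into the evolution of the $g_{ij}$ to show that $\Delta$ stays strictly below $\veps$. This forces $T^*=\infty$.

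\emph{Step 1: dissipation of $E$ while $\Delta<\veps$.} In the first term of Lemma~\ref{lem:identity}(iii), write $g_{ij}=g_{ij}^*+(g_{ij}-g_{ij}^*)$ and use $(u_i+u_j)^2\le 2(u_i^2+u_j^2)$. This gives
\[
\frac1{2N}\sum_{i,j}g_{ij}(u_i+u_j)^2\ \ge\ \mathcal Q_*(u)-\frac{\Delta}{2N}\sum_{i,j}(u_i+u_j)^2\ \ge\ 2\chi_*E-4\Delta E\ \ge\ 2(\chi_*-2\veps)E,
\]
where the middle inequality uses Lemma~\ref{L3.1}. The two cubic terms are controlled with $0\le g_{ij}\le 2$ and $|u_i|\le\|u\|_\infty\le\sqrt{2E}$, which bounds each of them by a constant multiple of $\sqrt{2E}\,E$. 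Tracking the constants, I expect to obtain
\[
\frac{dE}{dt}\le -2K\bigl(\alpha(\chi_*-2\veps)-6\sqrt{2E}\bigr)E .
\]
Condition (i) of \eqref{C-70} gives $6\sqrt{2E_0}<\alpha(\chi_*-2\veps)$. A second continuity argument then shows that $E$ is nonincreasing on $[0,T^*)$, and consequently
\[
E(t)\le E_0e^{-2\lambda t},\qquad \lambda:=K\bigl(\alpha(\chi_*-2\veps)-6\sqrt{2E_0}\bigr)>0 .
\]

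\emph{Step 2: closing the bootstrap.} Since $g^*_{ij}$ is constant, $|\Delta(t)-\Delta_0|\le\max_{i,j}\int_0^t|\dot g_{ij}|\,ds$. By Lemma~\ref{lem:identity}(ii), together with $g_{ij}\le2$ and $|u_i|\le\sqrt{2E}$,
\[
|\dot g_{ij}|\le K\Bigl[(u_i-u_j)^2+2\bigl(2\alpha\sqrt{2E}+4E\bigr)\Bigr]\lesssim K\bigl(\alpha\sqrt{2E}+E\bigr).
\]
Integrating against the exponential decay from Step 1 gives $\int_0^\infty\sqrt{E}\lesssim\sqrt{E_0}/\lambda$ and $\int_0^\infty E\lesssim E_0/\lambda$. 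With the right bookkeeping (possibly with rate $\lambda$ for $\sqrt E$ and using $E\le\sqrt{E_0}\sqrt{E}$), this should yield
\[
\Delta(t)\le\Delta_0+\frac{4\alpha\sqrt{2E_0}+8E_0}{\alpha(\chi_*-2\veps)-6\sqrt{2E_0}}<\veps .
\]
The last inequality is exactly (ii) of \eqref{C-70}, and the factor $K$ cancels against the $K$ in $\lambda$. If $T^*<\infty$, continuity would force $\Delta(T^*)=\veps$, contradicting this strict bound; hence $T^*=\infty$. I expect the main obstacle to be matching the precise constants $6$, $4$ and $8$ in (i) and (ii) of \eqref{C-70}. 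The structure of the argument is robust, but the cubic-term estimate in Step 1 and the integration in Step 2 must be arranged to produce exactly these constants.

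\emph{Step 3: conclusions.} We now have $u_i\to0$ exponentially, that is, $\langle x_i,x_c\rangle\to\alpha$ for every $i\in[N]$. Since $R^2=\frac1N\sum_i q_i$, it follows that $\|x_c\|^2\to\alpha$. For the convergence of the particles themselves, rewrite \eqref{main} as
\[
\dot x_i=K(\alpha-q_i)(x_c-q_ix_i),
\]
so that $\|\dot x_i\|\le 2K|u_i|\le 2K\sqrt{2E_0}\,e^{-\lambda t}$. This is integrable on $[0,\infty)$, so each $x_i(t)$ converges to some $x_i^\infty\in\bbsm$. Passing to the limit in the equation, and using $q_i^\infty=\alpha$, shows that $(x_i^\infty)_{i\in[N]}$ is an equilibrium. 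It is balanced because $\|x_c^\infty\|^2=\alpha$.
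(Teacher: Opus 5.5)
Your proposal is correct and follows essentially the same route as the paper. It uses the same bootstrap on $\Delta<\veps$, and the same splitting $g_{ij}=g_{ij}^*+(g_{ij}-g_{ij}^*)$ with Lemma~\ref{L3.1}, giving $\dot E\le -aE+bE^{3/2}$ with $a=2K\alpha(\chi_*-2\veps)$ and $b=12\sqrt2K$. Integrating the bound $|\dot g_{ij}|\le 4\sqrt2K\alpha\sqrt E+16KE$ against the exponential decay then reproduces condition (ii) exactly. The constants $6$, $4$ and $8$ you were unsure of do come out this way: bound one factor of each cubic term in sup norm ($|u_i+u_j|\le 2\sqrt{2E}$) and sum the rest ($\sum_{i,j}(u_i-u_j)^2\le 4NE$, $\sum_{i,j}(u_i^2+u_j^2)=4NE$).
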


\begin{proof}
Define a temporal set
\[
\mathcal T:= \{ T>0:~~ \Delta(t) <\veps,\quad t\in [0,T)\}.
\]
Then, due to \eqref{C-70}(ii), the set $\mathcal T$ is nonempty. Thus, $T_* := \sup \mathcal T$ is well-defined. We claim that $T_* = \infty$. Suppose to the contrary that 
\[
T_* <\infty.
\]
Recall from Lemma \ref{lem:identity} that for $K=-\kp_2>0$, 
\begin{align*}
\dot E &= -\frac{K\alpha}{2N} \sum_{i,j=1}^N g_{ij}(u_i + u_j)^2 +  \frac{K}{2N}\sum_{i,j=1}^N (u_i + u_j)(u_i- u_j)^2 \\
&\quad - \frac{K}{2N}\sum_{i,j=1}^N g_{ij}(u_i + u_j)(u_i^2 + u_j^2) \\
& =: \mathcal I_{11} + \mathcal I_{12} + \mathcal I_{13}.
\end{align*}
$\bullet$ (Estimate of $\mathcal I_{11}$): For $t\in [0,T_*)$, we use Lemma \ref{L3.1} to find 
\begin{align*}
\mathcal I_{11} & = -\frac{K\alpha}{2N} \sum_{i,j=1}^N g_{ij}(u_i + u_j)^2 = -\frac{K\alpha}{2N} \sum_{i,j=1}^N g_{ij}^* (u_i + u_j)^2- \frac{K\alpha}{2N} \sum_{i,j=1}^N (g_{ij}-g_{ij}^*) (u_i + u_j)^2 \\
&\leq -   K\alpha \chi_*\sum_{i=1}^N u_i^2 + \frac{K\alpha \veps}{2N} \sum_{i,j=1}^N (u_i + u_j)^2 \leq -K\alpha( \chi_* - 2\veps) \sum_{i=1}^N u_i^2 = -2K\alpha(\chi_*-2\veps) E.
\end{align*}
$\bullet$ (Estimate of $\mathcal I_{12}$): We use
\[
|u_i| \leq \sqrt{2E},\quad \sum_{i,j=1}^N (u_i - u_j)^2 \leq 4N E
\]
to find 
\begin{align*}
\mathcal I_{12} & =  \frac{K}{2N}\sum_{i,j=1}^N (u_i + u_j)(u_i- u_j)^2 \leq 4K\sqrt2 E^\frac32.
\end{align*}
$\bullet$ (Estimate of $\mathcal I_{13}$): Similarly, we get
\begin{align*}
\mathcal I_{13} & = - \frac{K}{2N}\sum_{i,j=1}^N g_{ij}(u_i + u_j)(u_i^2 + u_j^2) \leq 8K\sqrt2 E^\frac32.
\end{align*}
To this end, we have for $t\in [0,T_*)$, 
\begin{equation} \label{exp}
\dot E \leq -2K\alpha(\chi_*-2\veps) E + 12\sqrt2 K E^\frac32 =: -aE + bE^\frac32, 
\end{equation}
where $a  = 2K\alpha(\chi_*-2\veps)$ and $b = 12\sqrt 2 K$. On the other hand, we also recall from Lemma \ref{lem:identity}
\[
\dot g_{ij} = -K(u_i - u_j)^2 + K \Big[  \alpha(u_i + u_j) + u_i^2 + u_j^2    \Big] g_{ij}.
\]
Hence, we observe
\begin{align*}
|\dot g_{ij}| &\leq K(u_i - u_j)^2 + K \Big[  \alpha |u_i + u_j| + u_i^2 + u_j^2 \Big] g_{ij} \\
&\leq 8KE  + 2K ( 2\sqrt2 \alpha \sqrt E + 4E) = 4\sqrt2 K\alpha \sqrt E + 16K E, 
\end{align*}
where we used
\[
|u_i \pm u_j|\leq 2\sqrt{2E},\quad u_i^2 + u_j^2 \leq 4E.
\]
Hence, we have
\[
\Delta(t) \leq \Delta_0 + 4\sqrt2 K \alpha\int_0^t \sqrt{E(s)}ds + 16K\int_0^t E(s) ds. 
\]
Since we assume
\[
E_0 <\left( \frac ab\right)^2 =\frac{\alpha^2(\chi_* - 2\veps)^2}{72},
\]
we have
\[
\int_0^t E(s) ds \leq \frac{E_0}{a-b\sqrt{E_0}},\quad \int_0^t \sqrt{E(s)} ds \leq \frac{2\sqrt{E_0}}{a-b\sqrt{E_0}}.
\]
Thus, we have
\[
\Delta (t)  <\Delta_0 + 4\sqrt2 K\alpha \frac{2\sqrt{E_0}}{a-b\sqrt{E_0}} + 16K \frac{E_0}{a-b\sqrt{E_0}} <\veps
\]
which is precisely \eqref{C-70}(ii). However, by the maximal bootstrap time, we have
\[
\lim_{t\to T_*} \Delta(t) = \veps.
\]
This contradiction gives 
\[
T_* = \infty.
\]
Therefore, it follows from \eqref{exp} that $E(t)$ converges to zero exponentially. For the convergence to equilibrium, we observe 
\[
|\dot x_i(t)|\leq K\sqrt{2E(t)}.
\]
Since $E(t)$ converges to zero exponentially, we conclude that there exists $x_i^\infty$ such that $x_i(t) \to x_i^\infty\in \bbs^m$. 

\end{proof}

 \begin{remark}
 
 The conditions in \eqref{C-70} define an explicit neighborhood of the reference balanced equilibrium $X^*$. In particular, the basin obtained above is nonempty and contains an open set of initial configurations. For every initial configuration in this basin, the solution converges exponentially to the set of balanced equilibria. 

 \end{remark}
 
\section{Stability: Why is the balanced state generic?} \label{sec:4} 
\setcounter{equation}{0}
In this section,  we classify all nonzero-mean equilibria and show that every non-balanced equilibrium is linearly unstable. Combined with the result for the basin of attraction established in Section \ref{sec:3}, this identifies the pure balanced equilibria as the only class of nonzero-mean equilibria not ruled out by linear instability. If we write \eqref{main} as 
\[
\dot x_i = -\kp_2(\alpha - q_i) (x_c - q_ix_i),
\]
then at equilibrium, we have two cases for each $i\in [N]$:
\[
\textup{(i)}~~q_i = \alpha \qquad \textup{(ii)}~~ x_c = q_i x_i.
\]
The first  one corresponds to a latitude particle, whereas the second one implies that $x_i$ is polar, i.e., $x_i = \pm p$. In what follows, we consider nonzero-mean equilibrium, i.e., 
\[
R\neq 0,\quad p:= \frac{x_c}{\|x_c\|}=\frac{x_c}{R}\neq0.
\]
\subsection{Settings}
Define three disjoint subsets of $[N]$:
\begin{align*}
&P_+:=\{ i: x_i = p\},\quad  n_+:= |P_+|, \\
&P_-:= \{i: x_i = -p\},\quad n_-:= |P_-|,  \\
&P_L:=\{ i : \langle x_i,x_c\rangle = \alpha,~~x_i \neq \pm p\},  \quad n_L:= |P_L|.
\end{align*}
Then, we have
\[
P_+ \cup P_-\cup P_L = \{1,\cdots,N\}.
\]
For $i\in P_L$, we decompose $x_i$ into $p$-direction and $p^\perp$-direction:
\[
x_i = \langle x_i,p\rangle p + (x_i - \langle x_i,p\rangle p) = \frac\alpha R p + \left( x_i - \frac\alpha Rp \right).
\]
Here, we see
\[
\left\|x_i - \frac\alpha R p \right\|^2 = 1-\frac{\alpha^2}{R^2},\quad y_i := \frac{x_i - \frac\alpha R p}{ \sqrt{1- \frac{\alpha^2}{R^2}}}.
\]
Then, we represent $x_i$ for $i\in P_L$:
\[
x_i = \frac\alpha R p + \sqrt{1-\frac{\alpha^2}{R^2}}y_i,\quad y_i \in p^\perp,\quad \|y_i\|=1.
\]
Note that $\langle x_i,p\rangle =  \frac\alpha R \leq 1$ and if $i\in P_L$, then since $x_i\neq \pm p$, we should have 
\[
R>\alpha.
\]
We observe
\begin{align*}
NRp& = N x_c = \sum_{i=1}^N x_i =( n_+ - n_-)p + \sum_{i\in P_L} \left( \frac\alpha R p + \sqrt{ 1-\frac{\alpha^2}{R^2} }y_i \right) \\
& = \left(  n_+ - n_- + \frac{n_L\alpha}{R} \right) p +  \sqrt{ 1-\frac{\alpha^2}{R^2}} \sum_{i\in P_L} y_i \\
& =  \left(  n_+ - n_- + \frac{n_L\alpha}{R} \right) p.
\end{align*}
Since the left-hand side has no component in $p^\perp$, comparison of the transverse component yields $\sum_{i\in P_L} y_i = 0$.    This gives the self-consistency condition:
\begin{equation} \label{selfconsistency}
NR^2 - (n_+ - n_-) R - n_L \alpha =0. 
\end{equation}
Now, for the case of $x_c\neq0$, we classify the cases:
\begin{enumerate}  
\item Pure bipolar equilibria:
\[
n_L = 0,\quad n_+ + n_- = N.
\]
\item Mixed polar-balanced equilibria:
\[
n_L>0,\quad n_++ n_->0.
\]
\item Pure balanced equilibria:
\[
n_L= N,\quad n_+=n_-=0. 
\]
\end{enumerate}

For the summary of the stability results, we refer the reader to Table \ref{tab:Sm-stability-short}.

\begin{remark}
When we consider $P_L$, since $x_i\neq \pm p$, we should have $R>\alpha$.  Thus, if $n_L=1$, then $\sum_{i\in P_L } y_i = y_\ell$ and $y_\ell$ should be zero. However, it contradicts $\|y_\ell\|=1$. Hence, $n_L\geq2$. 
\end{remark}

 \begin{figure}[H]
\centering
\mbox{
\subfigure{  
\includegraphics[width=0.5\textwidth]{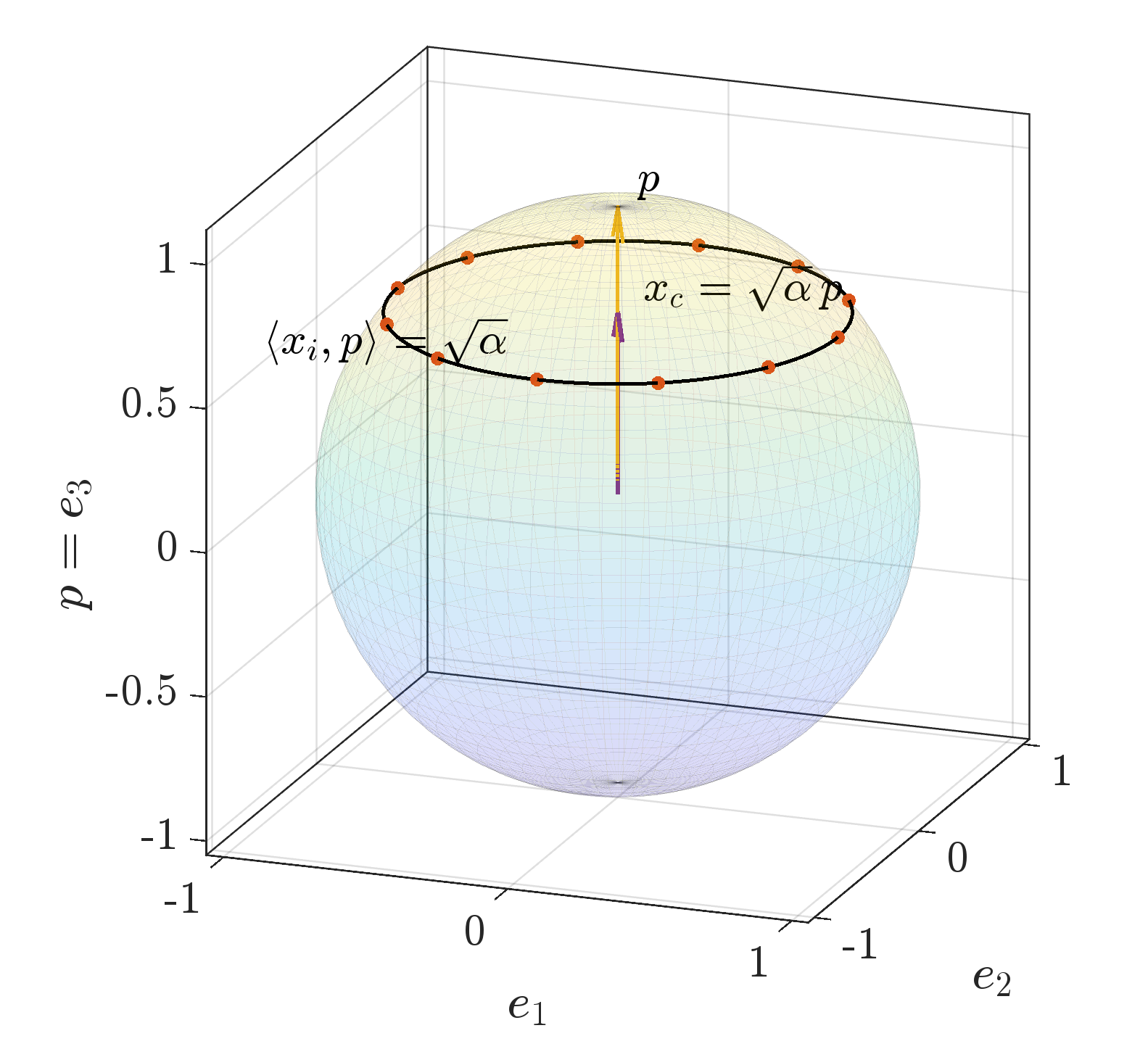}\label{fig:3-1}}
}
\caption{  Balanced state geometry on $\bbs^2$: The particles share the prescribed projection $\langle x_i,p\rangle = \sqrt\alpha$ while their transverse components cancel in the transverse directions, which yield $x_c = \sqrt\alpha p$ and $\|x_c\|^2=\alpha$. }  \label{fig5-1}
\end{figure}

\subsection{Linearization}
Let $X^* = (x_1^*,\cdots,x_N^*)$ be an equilibrium. Consider a tangent perturbation near $x_i^*$:
\[
x_i^\veps = x_i^* + \veps \xi_i + O(\veps^2),\quad \xi_i \in T_{x_i^*} \bbsm.
\]
Denote
\begin{align*} \label{G-10}
\begin{aligned}
&x_c^* := \frac1N \sum_{i=1}^N x_i^*,\quad q_i^* := \langle x_i^*,x_c^*\rangle,\\
& \delta x_i := \xi_i,\quad  \delta x_c := \frac1N \sum_{i=1}^N \xi_i, \quad \delta q_i := \langle \xi_i,x_c^*\rangle + \langle x_i^*,\delta x_c\rangle.
\end{aligned}
\end{align*}
For $\delta q_i$, we have
\begin{align*}
q_i^\veps  &: = \langle x_i^\veps,x_c^\veps\rangle = \langle x_i^* + \veps \xi_i + O(\veps^2) , x_c^* + \veps \delta x_c + O(\veps^2)\rangle \\
& = \langle x_i^*,x_c^*\rangle + \veps (  \langle \xi_i,x_c^*\rangle + \langle x_i^*,\delta x_c\rangle ) + O(\veps^2) \\
& = q_i^* + \veps \delta q_i + O(\veps^2).
\end{align*}
We  denote the vector field of \eqref{main}:
\[
F_i(X) := K(\alpha - q_i) ( x_c - q_ix_i).
\]

\begin{lemma}
The linearization of \eqref{main} is given as
\[
\dot \xi_i = K \Big[  -\delta q_i( x_c^* - q_i^* x_i^*) + (\alpha - q_i^*)(\delta x_c - \delta q_i x_i^* - q_i^*\xi_i)        \Big].
\]
\end{lemma}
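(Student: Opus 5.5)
The plan is a first-order expansion in $\veps$ of the vector field $F_i(X)=K(\alpha-q_i)(x_c-q_ix_i)$ along the curve $X^\veps=(x_1^\veps,\dots,x_N^\veps)$, with $x_i^\veps=x_i^*+\veps\xi_i+O(\veps^2)$. Since $X^*$ is an equilibrium, we have $F_i(X^*)=0$. Hence the linearized equation is $\dot\xi_i=\frac{d}{d\veps}\big|_{\veps=0}F_i(X^\veps)$.

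First, I record the expansions of the two factors separately. For the mean, linearity of the average gives $x_c^\veps=x_c^*+\veps\,\delta x_c+O(\veps^2)$. For the projection, the computation just above the lemma gives $q_i^\veps=q_i^*+\veps\,\delta q_i+O(\veps^2)$. The scalar factor then expands as
\[
\alpha-q_i^\veps=(\alpha-q_i^*)-\veps\,\delta q_i+O(\veps^2).
\]
For the vector factor, I expand the product $q_i^\veps x_i^\veps$ to get
\[
x_c^\veps-q_i^\veps x_i^\veps=(x_c^*-q_i^*x_i^*)+\veps\big(\delta x_c-\delta q_i\,x_i^*-q_i^*\xi_i\big)+O(\veps^2).
\]

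Second, I multiply the two expansions and collect the $O(\veps)$ terms. The product rule gives two contributions. One is $-\delta q_i(x_c^*-q_i^*x_i^*)$, coming from perturbing the scalar factor. The other is $(\alpha-q_i^*)(\delta x_c-\delta q_i x_i^*-q_i^*\xi_i)$, coming from perturbing the vector factor. Multiplying by $K$ yields exactly the stated formula. The zeroth-order term $K(\alpha-q_i^*)(x_c^*-q_i^*x_i^*)$ vanishes because $X^*$ is an equilibrium, so no constant term remains.

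Finally, I will check that the right-hand side is consistent with the constraint $\xi_i\in T_{x_i^*}\bbsm$, so that the linearization defines a flow on the tangent bundle. Differentiating $\|x_i^\veps\|^2=1$ shows that tangency holds automatically, since the full vector field is tangent to the sphere. I can also verify it directly: take the inner product of the right-hand side with $x_i^*$ and use $\langle x_c^*,x_i^*\rangle=q_i^*$, $\|x_i^*\|=1$ and $\langle \xi_i,x_i^*\rangle=0$. The first term gives $-\delta q_i(q_i^*-q_i^*)=0$. The second term gives $(\alpha-q_i^*)\big(\langle\delta x_c,x_i^*\rangle-\delta q_i\big)=-(\alpha-q_i^*)\langle\xi_i,x_c^*\rangle$, which vanishes at equilibria in both cases: if $q_i^*=\alpha$ the prefactor is zero, and if $x_i^*$ is polar then $x_c^*\parallel x_i^*$, so $\langle \xi_i,x_c^*\rangle=0$.

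There is no real obstacle here; the only care needed is tracking the $O(\veps^2)$ remainders in the product $q_i^\veps x_i^\veps$, which is routine.
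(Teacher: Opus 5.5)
Your proposal is correct and follows the paper's proof essentially verbatim. The paper likewise computes $\delta(\alpha-q_i)=-\delta q_i$ and $\delta(x_c-q_ix_i)=\delta x_c-\delta q_i x_i^*-q_i^*\xi_i$, combines them by the product rule, and uses that the zeroth-order term vanishes at the equilibrium. Your additional check that the right-hand side is orthogonal to $x_i^*$ at an equilibrium is correct, but it is not needed for the statement and does not appear in the paper.
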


\begin{proof}
It suffices to  consider the variation of $F_i(X)$ at $X=X^*$. First, we see
\begin{align*}
\alpha  - q_i^\veps  = \alpha-q_i^* - \veps \delta q_i + O(\veps^2)
\end{align*}
and hence
\[
\delta( \alpha - q_i^\veps) = -\delta q_i.
\]
Second, we observe
\begin{align*}
x_c^\veps - q_i^\veps x_i^\veps & = (x_c^* + \veps \delta x_c + O(\veps^2)) - (q_i^* + \veps \delta q_i + O(\veps^2))(x_i^* + \veps \xi_i + O(\veps^2)) \\
& = x_c^* - q_i^* x_i^*+ \veps ( \delta x_c - \delta q_i x_i^* - q_i^*\xi_i ) + O(\veps^2)
\end{align*}
and hence
\[
\delta (x_c^\veps - q_i^\veps x_i^\veps) = \delta x_c - \delta q_i x_i^* - q_i^*\xi_i .
\]
Therefore, we obtain
\begin{align*}
\delta F_i(X) & = K\Big[ \delta(\alpha-q_i)(x_c^* - q_i^*x_i^*)  +  (\alpha - q_i^*)\delta(x_c-q_ix_i)       \Big] \\
& =  K \Big[  -\delta q_i( x_c^* - q_i^* x_i^*) + (\alpha - q_i^*)(\delta x_c - \delta q_i x_i^* - q_i^*\xi_i)        \Big].
\end{align*}
\end{proof}

In the following subsections, we use the linearization to find a growing mode to verify that some equilibria are linearly unstable. 

\subsection{Bipolar equilibria are linearly unstable}
Consider 
\[
n_L=0
\]
and a bipolar equilibrium
\[
x_i^* = \sigma_i p ,\quad \sigma_i \in \{1,-1\}.
\]
Then, 
\[
q_i^* = \langle x_i^*,x_c^*\rangle = \sigma_i R,\quad R=\frac{n_+-n_-}{N}>0.
\]
Choose any unit vector $z\in p^\perp$ and consider the following transverse perturbations of the form:
\[
\xi_i = \eta_i z.
\]
Since $z\in p^\perp$, we have
\begin{align*}
\delta x_c& =\frac1N \sum_{j=1}^N \xi_j =  \bar \eta z,\quad \bar\eta := \frac1N \sum_{j=1}^N \eta_j,\\
\delta q_i & =  \langle \xi_i,x_c^*\rangle + \langle x_i^*,\delta x_c\rangle = \langle \eta_iz, Rp\rangle + \langle \sigma_i p, \bar\eta z\rangle=0 .
\end{align*}
 In addition, since $x_i^*$ are bipolar, we have
\[
x_c^* - q_i^* x_i^* = 0.
\]
Hence, the linearized equation becomes
\[
\dot\xi_i = K(\alpha - \sigma_i R) (\delta x_c - \sigma_i R\eta_i z) =  K(\alpha - \sigma_i R) (\bar \eta - \sigma_i R\eta_i )z
\]
which gives
\begin{equation} \label{G-30}
\dot \eta_i = K(\alpha - \sigma_i R) (\bar \eta - \sigma_i R \eta_i).
\end{equation}

\subsubsection{Both clusters are nonempty} We suppose that both clusters are nonempty, i.e., 
\[
n_+>0,\quad n_->0.
\]
Consider the two-dimensional subspace
\[
\eta_i =\begin{cases}
u,\quad i\in P_+, \\
v,\quad i\in P_-.
\end{cases}
\]
Then, $\bar \eta = \frac{n_+u + n_-v}{N}$ and \eqref{G-30} reduces to 
\begin{align*}
\dot u & = K(\alpha - R) (\bar \eta - Ru) = K(\alpha-R) \frac{n_-}{N}(u+v), \\
\dot v &=  K(\alpha + R)(\bar \eta + Rv) = K(\alpha+R) \frac{n_+}{N} (u+v).
\end{align*}
We add these two equations to find
\[
\frac{d}{dt} (u+v) = K\left[ (\alpha-R)\frac{n_-}{N} + (\alpha+R)\frac{n_+}{N}\right] (u+v) = K(\alpha + R^2)(u+v), 
\]
where we used $n_+-n_-= NR$. Hence, this bipolar equilibrium with two nonempty clusters has a positive eigenvalue $K(\alpha+R^2)>0$ and is consequently linearly unstable. 

\subsubsection{Complete synchronization} Although we show that complete synchronization does not emerge for generic initial data, for completeness, we perform linear stability argument. Complete synchronization corresponds to 
\[
n_+ = N,\quad n_-=0,\quad R=1.
\]
Consider perturbations with 
\[
\bar\eta =\frac1N \sum_{i=1}^N \eta_i=0.
\]
Then, the linearization equation becomes
\[
\dot \eta_i = K(1-\alpha)\eta_i.
\]
Hence, completely synchronized equilibrium has a positive eigenvalue and is thus linearly unstable.

\subsection{Mixed polar-balanced equilibria are linearly unstable}  \label{sec:2.4}
Now, consider a mixed equilibrium where
\[
n_L>0,\quad n_++n_->0.
\]
For simplicity, we write for $i \in P_L$
\begin{align*}
& x_i ^* = \frac\alpha R p + \sqrt{1-\frac{\alpha^2}{R^2}}y_i =: ap + by_i,  \quad a:= \frac\alpha R,\quad b:= \sqrt{1-a^2}, \\
& y_i \in p^\perp,\quad \|y_i\|=1,\quad \sum_{i\in P_L} y_i=0.
\end{align*}
Recall that since $R>\alpha$, we have $b>0$. For the instability of the mixed equilibrium, we assume that
\[
\textup{dim}p^\perp = m\geq2.
\]
\subsubsection{When a polar cluster has at least two agents}  (i) Suppose first
\[
n_+\geq2.
\]
Choose perturbations supported only on $P_+$ with
\[
\xi_i\begin{cases}
 \in p^\perp,\quad i\in P_+, \\
 =0, \quad i\in P_-\cup P_L, 
 \end{cases} \quad  \sum_{i\in P_+} \xi_i=0.
\]
Then, we have
\[
\delta x_c = \frac1N \sum_{i=1}^N \xi_i = 0,\quad \delta q_i = \langle \xi_i,x_c^*\rangle + \langle x_i^*,\delta x_c\rangle =0,
\]
and in addition, for $i\in P_+$, 
\[
q_i^* = R,\quad x_c^*-q_i^*x_i^* = Rp - Rp =0.
\]
Hence, $\xi_i$ satisfies
\[
\dot \xi_i = KR(R-\alpha)\xi_i.
\]
Since $R>\alpha$, we get a positive eigenvalue and hence the mixed equilibrium with at least two positive pole agents is linearly unstable. 

(ii) Similarly, if $n_-\geq2$, then we follow a similar argument to find
\[
\dot \xi_i = KR(\alpha+R)\xi_i
\]
which also yields a positive eigenvalue. Hence, if either polar cluster has at least two agents, then the mixed equilibrium is linearly unstable. Now, it remains to consider the case where the polar part consists only of singleton (or empty) clusters. \newline

(iii) Assume now that no polar cluster contains more than one agent. Thus, each of $n_+$ and $n_-$ is either 0 or 1, but at least one is nonzero. Define the covariance operator $C:p^\perp \to p^\perp$ of the latitude directions by 
\[
C:= \sum_{i\in P_L} y_i \otimes y_i.
\]
In fact, if $v\in p^\perp$, then $Cv = \sum_{i\in P_L} \langle y_i,v\rangle y_i \in p^\perp$ with $\textup{dim}p^\perp \geq 2$. Since $\textup{tr}C = n_L$, there exists an eigenvector $z\in p^\perp$ with $\|z\|=1$ whose corresponding eigenvalue $\lambda$ satisfies
\begin{equation} \label{G-35}
\lambda<n_L.
\end{equation} 
In fact, if every eigenvalue is equal to or larger than $n_L$, thence since there are at least two eigenvalues on $p^\perp$, the trace is greater than or equal to $2n_L$ which contradicts $\textup{tr}C = n_L$. 

On the other hand for $i\in P_L$, we define 
\[
e_i := bp - ay_i,\quad a=\frac\alpha R,\quad b= \sqrt{1-a^2}.
\]
Then, we have 
\[
\langle e_i,x_i^*\rangle = \langle bp - ay_i, ap + by_i\rangle =0,\quad \Longrightarrow \quad e_i \in T_{x_i^*}\bbsm,\quad \|e_i\|=1.
\]
Now, consider the perturbations of the following form:
\[
\xi_i = \begin{cases}
Uz,\quad i\in P_+, \\
Vz,\quad i \in P_-,\\
W\langle y_i,z\rangle e_i,\quad i\in P_L.
\end{cases}
\]
Note that if $P_+$ or $P_-$ is empty, we omit the variables $U$ or $V$, respectively. Below, we calculate the linearized dynamics for $U,V$ and $W$. First, we observe
\[
\delta x_c = \frac1N \left(  n_+ Uz + n_- V z + W\sum_{i\in P_L} \langle y_i,z\rangle e_i  \right).
\]
We see
\begin{align*}
\sum_{i\in P_L} \langle y_i,z\rangle e_i & = \sum_{i\in P_L} \langle y_i,z\rangle (bp -a y_i) = bp \sum_{i\in P_L} \langle y_i,z\rangle - a \sum_{i\in P_L}   \langle y_i,z\rangle y_i \\
& = - aCz = - a \lambda z
\end{align*}
and hence we obtain 
\[
\delta x_c = Mz,\quad M:= \frac{n_+ U + n_- V - a\lambda W}{N}.
\]
For $i\in P_+$, we have $\delta q_i=0$ and hence
\[
\dot U = K(\alpha -R) (M-RU)
\]
and similarly for $i\in P_-$ we also have 
\[
\dot V = K(\alpha +R)(M+RV).
\]
On the other hand for $i\in P_L$, we observe
\begin{align*}
\delta q_i =& \langle \xi_i,x_c^*\rangle + \langle x_i^*,\delta x_c\rangle = \langle W\langle y_i,z\rangle e_i,Rp\rangle + \langle ap + by_i, Mz\rangle  \\
& =  W R\langle y_i,z\rangle \langle e_i,p\rangle + bM \langle y_i,z\rangle = b(M+RW)\langle y_i,z\rangle.
\end{align*}
In addition, 
\[
x_c^* - q_i^* x_i^* = x_c^* - \alpha x_i^* = Rp - \alpha ( ap +by_i) = bR e_i.
\]
Hence, the linearization equation becomes
\[
\dot W \langle y_i,z\rangle e_i = \dot \xi_i = K\Big[ - b(M+RW) \langle y_i,z\rangle  bRe_i\Big]= -Kb^2 R(M+RW)\langle y_i,z\rangle e_i
\]
which yields
\[
\dot W = - K b^2 R( M+RW).
\]
So far, we obtain 
\begin{equation*}
\begin{cases}
\vspace{0.3cm} \displaystyle \dot U =  K(\alpha -R) \left( \frac{n_+ U + n_-V - a\lambda W}{N} - RU \right), \\
\vspace{0.3cm} \displaystyle \dot V = K(\alpha +R)  \left( \frac{n_+ U + n_-V - a\lambda W}{N} + RV \right), \\
\vspace{0.3cm} \displaystyle \dot W  = - Kb^2 R  \left( \frac{n_+ U + n_-V - a\lambda W}{N}  +R W \right).
\end{cases}
\end{equation*}

\begin{remark}
We emphasize that $\textup{dim}p^\perp \geq2$ is assumed. Hence, this argument does not apply directly to $\bbs^1$. 
\end{remark}

\subsubsection{One positive polar agent and no negative polar agent} Consider 
\[
n_+ = 1,\quad n_-=0,\quad n_L = N-1.
\]
Then, the Jacobian matrix for $U$ and $W$ becomes
\[
K\begin{pmatrix}
\vspace{0.2cm} (\alpha -R) (\frac1N -R ) & -(\alpha -R) \frac{a\lambda}{N} \\
-\frac{b^2R}{N} & -b^2 R( R-\frac{a\lambda}{N}) 
\end{pmatrix}
\]
and its determinant is 
\[
\textup{det}  = K^2(R-\alpha) b^2R^2 \left( \frac1N - R + \frac{a\lambda}{N}\right).
\]
By recalling self-consistency equation \eqref{selfconsistency} with $\alpha = Ra$, 
\[
NR^2 -(n_+- n_-)R - n_L\alpha = NR^2 - R - n_L\alpha=0 \quad \Longrightarrow \quad an_L = NR-1.
\]
Since we choose $\lambda < n_L$, 
\[
\frac1N - R + \frac{a\lambda}{N} <\frac 1N -R + \frac{an_L}{N} = \frac{1}{N}-R + \frac{NR-1}{N}=0
\]
which shows that the determinant is strictly less than zero:
\[
\textup{det}<0.
\]
Hence, the $2\times2$ matrix has two real eigenvalues of opposite signs and hence has a positive eigenvalue. Thus, the equilibrium is linearly unstable.

\subsubsection{One negative polar agent and no positive polar agent} Similarly, we consider 
\[
n_+=0,\quad n_-=1,\quad n_L = N-1
\]
and the Jacobian matrix for $V$ and $W$ becomes
\begin{equation} \label{P-50}
K\begin{pmatrix}
\vspace{0.2cm} (\alpha + R) (\frac1N + R ) & -(\alpha +R) \frac{a\lambda}{N} \\
-\frac{b^2R}{N} & -b^2 R( R-\frac{a\lambda}{N}) 
\end{pmatrix}
\end{equation}
and
\[
\textup{det} = -K^2(\alpha +R)b^2R^2 \left( R + \frac1N - \frac{a\lambda}{N}\right).
\]
Since the self-consistency equation gives
\[
an_L = NR+1
\]
and $\lambda <n_L$ in \eqref{G-35}, we have
\[
R+ \frac1N - \frac{a\lambda}{N} > R+ \frac1N - \frac{an_L}{N} = R+ \frac1N - \frac{NR+1}{N} =0.
\]
Hence, we have
\[
\textup{det}<0.
\]
Thus, the equilibrium is linearly unstable. 

\subsubsection{One positive  and one negative polar agent} We now consider  
\[
n_+=1,\quad n_- = 1,\quad n_L = N-2.
\]
Then, the trace of the  matrix for $U,V$ and $W$ is
\begin{align*}
\frac1K \textup{tr} &= (\alpha-R) \left( \frac{n_+}{N} -R\right) + (\alpha+R) \left( \frac{n_-}{N} +R\right) - b^2R\left( R- \frac{a\lambda}{N}\right) \\
& =  (\alpha-R) \left( \frac{1}{N} -R\right) + (\alpha+R) \left( \frac{1}{N} +R\right) - b^2R\left( R- \frac{a\lambda}{N}\right) \\
& = \frac{2\alpha}{N} + 2R^2 - b^2R^2 + \frac{b^2Ra\lambda}{N}  = \alpha + (1-b^2)R^2 + \frac{b^2Ra\lambda}{N} \\
& =\alpha + a^2 R^2 + \frac{b^2Ra\lambda}{N}>0,
\end{align*}
where we used the self-consistency equation $2\alpha = N\alpha - NR^2$.  Since the trace is strictly positive, at least one eigenvalue has positive real part.  Hence, the corresponding equilibrium is linearly unstable. 

So far,  we provide an explicit basin of attraction leading to the pure balanced equilibria and exclude (1) pure bipolar equilibria and (2) mixed polar-balanced equilibria  which are all linearly unstable. Thus, only pure balanced equilibria can be regarded as a possible candidate for a generic attractor.  The aforementioned argument is summarized as follows.

\begin{theorem}
Assume
\[
N\geq 2,\quad m\geq 2, \quad \kp_1>0,\quad \kp_1+\kp_2<0,\quad \alpha = \frac{\kp_1}{-\kp_2}\in (0,1).
\]
Then, every nonzero-mean equilibrium that is not a  pure balanced equilibrium is linearly unstable. More precisely, 
\begin{enumerate}
\item Every bipolar equilibrium is linearly unstable. 
\item Every mixed polar-balanced equilibrium, i.e., every equilibrium containing at least one polar agent $x_i = \pm p$ and at least one latitude agent satisfying $\langle x_i,x_c\rangle = \alpha$, is linearly unstable. 
\end{enumerate}
Thus, the only nonzero-mean equilibria that are not ruled out by this instability result are the pure balanced equilibria:
\[
x_i^* = \sqrt\alpha p + \sqrt{1-\alpha} y_i,\quad y_i \in p^\perp,\quad \|y_i\|=1,\quad \sum_i y_i =0
\]
for which $\|x_c^*\|^2 = \alpha$. 
\end{theorem}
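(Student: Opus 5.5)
The theorem summarizes the case analysis of this section. The plan is to assemble it from the classification of nonzero-mean equilibria and the linearization lemma. First, from the factorized equilibrium condition $(\alpha-q_i)(x_c-q_ix_i)=0$ with $x_c=Rp\neq0$, every agent is either polar ($x_i=\pm p$) or a latitude agent with $q_i=\alpha$, $x_i\neq\pm p$. So $[N]=P_+\sqcup P_-\sqcup P_L$. Comparing transverse components of $NRp=\sum_i x_i$ gives $\sum_{i\in P_L}y_i=0$ and the self-consistency relation \eqref{selfconsistency}. Hence every nonzero-mean equilibrium is pure bipolar, mixed polar-balanced, or pure balanced. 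In the pure balanced case $n_L=N$, and \eqref{selfconsistency} forces $R^2=\alpha$, so $a=\alpha/R=\sqrt\alpha$ and $b=\sqrt{1-\alpha}$, which gives the displayed form $x_i^*=\sqrt\alpha p+\sqrt{1-\alpha}y_i$. It then remains to exhibit a growing mode in the other two classes.

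For bipolar equilibria, use transverse perturbations $\xi_i=\eta_i z$ with $z\in p^\perp$. Then $\delta q_i=0$, and the linearization collapses to \eqref{G-30}. If both poles are occupied, the cluster-constant ansatz $(u,v)$ gives $\frac{d}{dt}(u+v)=K(\alpha+R^2)(u+v)$, using $n_+-n_-=NR$. If only one pole is occupied, $R>0$ forces $n_+=N$ and $R=1$. Mean-zero perturbations then give the eigenvalue $K(1-\alpha)>0$. When $N\geq2$ such a mean-zero perturbation exists.

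For mixed equilibria, note first that $R>\alpha$ and $n_L\geq2$. If $n_+\geq2$ or $n_-\geq2$, take zero-sum transverse perturbations supported on that polar cluster. Then $\delta x_c=0$ and $\delta q_i=0$, giving eigenvalues $KR(R-\alpha)>0$ or $KR(R+\alpha)>0$ respectively. Here $m\geq2$ guarantees a nontrivial $p^\perp$; in fact even $m=1$ would suffice for this subcase.

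The remaining subcases $(n_+,n_-)\in\{(1,0),(0,1),(1,1)\}$ are the main obstacle. Here $\dim p^\perp\geq2$ is essential. The trace of the covariance $C=\sum_{P_L}y_i\otimes y_i$ equals $n_L$, so $C$ has an eigenvector $z\in p^\perp$ with eigenvalue $\lambda<n_L$. Perturb the polar agents along $z$ and the latitude agents along $e_i=bp-ay_i$ with amplitude $W\langle y_i,z\rangle$. Because $\sum_{P_L}\langle y_i,z\rangle e_i=-a\lambda z$, this ansatz is invariant and yields the closed $U,V,W$ system derived above. In the two singleton cases, \eqref{selfconsistency} gives $an_L=NR\mp1$. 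Combined with $\lambda<n_L$, this makes the $2\times2$ determinant negative, so there is a positive real eigenvalue. In the case $(1,1)$, the trace of the $3\times3$ matrix simplifies, via $NR^2=n_L\alpha$, to $K(\alpha+a^2R^2+b^2Ra\lambda/N)>0$. This trace is positive because $C$ is positive semidefinite, so $\lambda\geq0$. Hence some eigenvalue has positive real part.

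One point needs care: each reduced system must be an invariant subspace of the full linearization. This holds because $\delta x_c$ stays in $\operatorname{span}\{z\}$ and $\delta q_i$ is proportional to $\langle y_i,z\rangle$. An eigenvector of the reduced matrix is therefore a genuine unstable eigenvector of the full Jacobian. The perturbation must also be nonzero. For $W$ this uses $\langle y_i,z\rangle\neq0$ for some $i$; if all these vanished, then $\lambda=0$, and $U$ or $V$ still carries the mode. Collecting the cases proves the theorem.
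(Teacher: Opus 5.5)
Your proposal is correct and follows the paper's proof essentially step by step. It uses the same classification via the self-consistency relation, the same transverse two-cluster and mean-zero modes for bipolar equilibria, and zero-sum polar-cluster modes when $n_\pm\geq2$. For the singleton cases it uses the same covariance-eigenvector ansatz with the $2\times2$ determinant and $3\times3$ trace arguments. Your extra remarks make explicit points the paper leaves implicit: $\lambda\geq0$ because $C$ is positive semidefinite, and the reduced system is an invariant subspace. You also handle the degenerate case $\langle y_i,z\rangle=0$ for all $i\in P_L$, where the $W$-direction collapses, $\lambda=0$, and the unstable mode sits in the decoupled $U$/$V$ block.
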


\begin{remark}
This result justifies the claim that in the regime $\kp_1>0$ and $\kp_1+\kp_2<0$, the non-balanced equilibria do not form generic attractors. These equilibria possess linearly unstable directions and therefore cannot be robust local attractors.  Thus, among nonzero-mean equilibria, pure balanced equilibria are the only class not excluded by the present linear instability analysis.  
\end{remark}

The stationary equilibria are classified by the numbers
\[
(n_+,n_-,n_L),
\]
where $n_+$ and $n_-$ denote the numbers of agents aligned and anti-aligned with the mean direction, and $n_L$ denotes the number of agents in the latitude class $q_i=\alpha$. The stability classification is summarized in Table \ref{tab:Sm-stability-short}. Except for the pure balanced latitude branch $(0,0,N)$, all nonzero mean stationary equilibria are linearly unstable on $S^m$ with  $m\ge2$.

 \begin{table}[ht]  
\centering
\caption{Stationary equilibria and their linear stability on $\bbs^m$, $m\ge2$ in the regime $\kp_1>0$ and $\kp_1+\kp_2<0$} \label{tab:Sm-stability-short}
\renewcommand{\arraystretch}{1.3}
\begin{tabular}{c c c}
\toprule
\textbf{Type} & \textbf{Count condition} & \textbf{Conclusion} \\
\midrule


Complete synchronization
&
$\textup{$n_+=N$} $
&
Unstable
\\

Nontrivial bipolar equilibrium
&
$n_L=0, n_+,n_->0$
&
Unstable
\\

Pure balanced latitude equilibrium
&
$n_L=N$ &
\textit{Stable}
\\

Mixed polar--latitude equilibrium
&
$n_L>0,\quad n_+ + n_->0$
&
Unstable
\\

\quad Internal polar instability
&
$n_+\ge2$ or $n_-\ge2$
&
Unstable
\\

\quad One-sided mixed equilibrium
&
$(n_+,n_-)=(1,0)$ or $(0,1)$
&
Unstable
\\

\quad Two-sided singleton mixed equilibrium
&
$(n_+,n_-)=(1,1)$
&
Unstable
\\

\bottomrule
\end{tabular}
\end{table}


%
%

\section{Convergence estimates on $\bbs^1$} \label{sec:5} 
\setcounter{equation}{0}

 In this section, we study the Kuramoto-type model \eqref{kura} on $\bbs^1$.

\subsection{Exclusion of complete synchronization and complete desynchronization}
In this subsection, we rule out the emergence of complete synchronization and complete desynchronization for generic initial data. In fact, Theorem \ref{T3.1} is also valid for $m=1$. However for the consistency of the paper, we provide the proof by directly analyzing \eqref{kura}. Define the maximal diameter
\[
D(\Theta(t)):= \max_{i,j\in [N]} |\theta_i(t) - \theta_j(t)|,\quad t>0.
\]

\begin{theorem}
Suppose that
\[
\kp_1>0, \quad \kp_1+\kp_2<0,\quad \alpha = \frac{\kp_1}{-\kp_2}\in (0,1).
\]
Then, complete synchronization and complete desynchronization cannot occur for generic initial data. Precisely, if $R_0<1$, then complete synchronization cannot emerge, and if $R_0>0$, then $R(t)$ cannot converge to zero. 

\end{theorem}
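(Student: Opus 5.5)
The plan is to work directly with the order parameter equation for \eqref{kura}, mirroring the proof of Theorem \ref{T3.1} but deriving the identity from the phase dynamics. First I would write \eqref{kura} in mean-field form, $\dot\theta_i = K(\alpha - R\cos(\theta_i-\Psi))R\sin(\Psi-\theta_i)$, and compute $\frac{d}{dt}R^2$ from $z=\frac1N\sum_j e^{\mi\theta_j}$. Since $\frac{d}{dt}|z|^2 = 2\,\mathrm{Re}(\bar z\dot z)$ and $\bar z\, \mi e^{\mi\theta_j}\dot\theta_j$ has real part $R\sin(\Psi-\theta_j)\dot\theta_j$, this yields
\[
\frac12\frac{d}{dt}R^2 = \frac{K}{N}\sum_{j=1}^N \bigl(\alpha - R\cos(\theta_j-\Psi)\bigr)R^2\sin^2(\theta_j-\Psi),
\]
which is exactly Lemma \ref{lem:identity}(i) with $q_j = R\cos(\theta_j-\Psi)$ and $R^2-q_j^2 = R^2\sin^2(\theta_j-\Psi)$. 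This ensures the argument is self-contained on $\bbs^1$. The only care needed is where $R=0$ and $\Psi$ is undefined; there the right-hand side is interpreted as $0$, which is consistent because the identity can be written in terms of $q_j=\langle x_j,x_c\rangle$ without $\Psi$.

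\emph{Desynchronization.} Assume $R_0>0$ and, for contradiction, $R(t)\to0$. Choose $T$ with $R(t)<\alpha$ for $t\ge T$. Then $R\cos(\theta_j-\Psi)\le R<\alpha$, so each summand is nonnegative and $R^2$ is nondecreasing on $[T,\infty)$. Combined with $R^2\to0$, this forces $R\equiv0$ on $[T,\infty)$. Then $\Theta(T)$ is an equilibrium configuration: indeed $\dot\theta_i$ is a function of $z$ and vanishes when $z=0$. By backward uniqueness for the smooth ODE \eqref{kura}, $\Theta(t)\equiv\Theta(T)$ for all $t$, so $R_0=0$, a contradiction.

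\emph{Synchronization.} Assume $R_0<1$ and $R(t)\to1$. Then $|z|\to1$, which forces every $e^{\mi\theta_j}$ toward $e^{\mi\Psi}$, that is, $\cos(\theta_j-\Psi)\to1$; indeed $1-R^2$ controls $\frac1N\sum_j(1-\cos(\theta_j-\Psi))$. Hence for $t\ge T$ we have $R\cos(\theta_j-\Psi)>\alpha$ for every $j$. The identity then shows $R^2$ is nonincreasing on $[T,\infty)$. Since it tends to $1$ and is bounded by $1$, we get $R\equiv1$ on $[T,\infty)$. Then all phases coincide at time $T$, which is an equilibrium of \eqref{kura}, and uniqueness gives $R_0=1$, a contradiction. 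The diameter $D(\Theta)$ may be used to phrase this step: $R\to1$ implies $D(\Theta(t))\to0$ modulo $2\pi$, so the phases eventually lie in a small arc.

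The main obstacle is the rigorous handling of the two boundary cases, namely $\Psi$ being ill-defined when $R=0$ and the uniqueness step, since the conclusion concerns the state at time $0$. I will resolve both by noting that the configurations with $z=0$ and the fully synchronized configurations are equilibria of the Lipschitz vector field in \eqref{kura}, so the solution through them is constant for all times.
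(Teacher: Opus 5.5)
Your proposal is correct. The desynchronization half is essentially the paper's argument: monotonicity of $R$ once $R<\alpha$, then uniqueness at the equilibrium $z=0$. Working with $\frac{d}{dt}R^2$ instead of $\dot R$ is slightly cleaner, since the paper must first show $R(t)>0$ for all $t$ before it can differentiate $R$.

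For the synchronization half you take a genuinely different route. The paper does not use the order-parameter identity there. It works with the phase diameter and computes $\dot D=-2\kp_1\sin\frac D2\,A_1-\kp_2\sin D\,A_2$. Using $A_1\le 1$, $A_2\ge\cos D$ and a choice of $\delta$ with $\cos\delta\cos\frac\delta2>\alpha$, it obtains $\dot D\ge\gamma_\delta\sin D\ge\frac{2\gamma_\delta}{\pi}D$ for $0<D<\delta$. So near-synchronized configurations spread at an explicit exponential rate.

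You instead transplant the proof of Theorem \ref{T3.1}(ii) into phase variables. $R\to1$ forces $R\cos(\theta_j-\Psi)>\alpha$ eventually, so $R^2$ is eventually nonincreasing and hence $\equiv 1$; backward uniqueness at the synchronized equilibrium then gives $R_0=1$.

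The paper's route yields a quantitative local repulsion estimate that is independent of the order parameter, which is the direct Kuramoto-type analysis the authors wanted. Yours is shorter and uses a single identity for both halves. It also avoids having to pass from $R\to1$ to $D\to0$ through a choice of lifts. Finally, it makes explicit where $R_0<1$ is used, namely to exclude $D\equiv 0$ from some time on. The paper's differential inequality alone does not rule out that case; only the uniqueness step you spell out disposes of it.
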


\begin{proof} 
(i) First, we show that complete synchronization cannot happen, i.e., $D(\Theta(t))$ cannot converge to zero. Suppose to the contrary that there exists $T>0$ such that 
\[
D(\Theta(t))<\frac\pi2,\quad t\geq T.
\]
For simplicity, we write $D:= D(\Theta)$ and 
\[
\theta_M := \max_{i}\theta_i,\quad \theta_m := \min_i \theta_i,\quad D = \theta_M-\theta_m,\quad \bar\theta := \frac{\theta_M+\theta_m}{2},\quad \eta_i := \theta_i - \bar\theta.
\]
Then, we have
\[
-\frac D2\leq \eta_i \leq \frac D2.
\]
It follows from straightforward calculation that 
\[
\dot D = -2\kp_1\sin\frac D2 \cdot A_1 - \kp_2\sin D \cdot A_2, 
\]
where $A_1$ and $A_2$ are defined as 
\[
A_1 := \frac1N \sum_{k=1}^N \cos\eta_k,\quad A_2 :=  \frac{1}{N^2} \sum_{j,k=1}^N \cos(\eta_j+\eta_k).
\]
In what follows, all differential equalities and inequalities involving maxima are understood to hold almost everywhere in time; the corresponding quantities are locally Lipschitz. First, for $t\geq T$, since $D(t)<\pi$,  we have $\cos\eta_k>0$ and hence we have
\[
0\leq A_1(t) \leq 1,\quad t\geq T.
\]
Second for $A_2$, since we assume $0<D<\frac\pi2$ for $t\geq T$ and $\eta_j + \eta_k \in [-D,D]$, we have
\[
A_2 \geq \cos D>0,\quad t\geq T.
\]
We use $A_1\leq 1$ and $A_2\geq \cos D$ with $\kp_2<0$ to find
\[
\dot D \geq -2\kp_1\sin \frac D2-\kp_2\sin D \cos D = \sin D \left[  -\frac{\kp_1}{\cos \frac D2} -\kp_2\cos D\right],\quad t\geq T.
\]
For $\alpha= \frac{\kp_1}{-\kp_2}\in (0,1)$, we choose small $\delta\in (0,\frac\pi2)$ satisfying
\[
\cos\delta \cos \frac\delta2>\alpha.
\]
Then, for $0<D\leq \delta$, we have
\[
\cos D\geq \cos \delta,\quad \frac{1}{\cos \frac D2} \leq \frac{1}{\cos\frac\delta2}
\]
which gives
\[
-\frac{\kp_1}{\cos \frac D2} -\kp_2\cos D \geq -\frac{\kp_1}{\cos \frac\delta2} -\kp_2\cos \delta =:\gamma_\delta.
\]
Since $\kp_2<0$ we observe
\[
\gamma_\delta :=  (-\kp_2)\cos \delta -\frac{\kp_1}{\cos \frac\delta2} >0 \quad \Longleftrightarrow \quad \cos\delta \cos\frac\delta 2 > \frac{\kp_1}{-\kp_2}=\alpha.
\]
Hence we have 
\[
\gamma_\delta>0.
\]
Therefore, for $0<D<\delta$, we have
\[
\dot D \geq \gamma_\delta \sin D \geq \frac{2\gamma_\delta}{\pi} D
\]
which shows that $D(t)$ cannot converge to zero. This is a contradiction.

(ii) We recall
\[
z= \frac1N \sum_{j=1}^N e^{\mi \theta_j} = Re^{\mi \Psi}.
\]
Then, we observe
\[
\frac1N \sum_{k=1}^N \sin(\theta_k - \theta_i) = R\sin   (\Psi - \theta_i),\quad \frac{1}{N^2} \sum_{j,k=1}^N \sin(\theta_j+\theta_k - 2\theta_i) = R^2\sin (2\Psi-2\theta_i).
\]
Hence, we write \eqref{kura} in a mean-field form:
\[
\dot\theta_i = \kp_1R\sin (\Psi-\theta_i) + \frac{\kp_2}{2}R^2\sin(2\Psi-2\theta_i) 
\]
or it can be written as
\begin{equation} \label{G-20}
\dot\theta_i =  KR\sin x_i ( R\cos x_i -\alpha), 
\end{equation}
where we defined the relative phase:
\[
x_i := \theta_i - \Psi,\quad K =-\kp_2>0,\quad \kp_1 = K\alpha.
\]
Since  $R= \frac1N \sum_{i=1}^N \cos x_i$, we have
\begin{align*}
\dot R &= -\frac1N \sum_{i=1}^N \sin x_i \dot x_i = -\frac1N \sum_{i=1}^N \sin x_i (\dot \theta_i - \dot \Psi) = - \frac1N \sum_{i=1}^N \sin x_i \dot \theta_i \\
& = KR\left[ \frac \alpha N \sum_{i=1}^N \sin^2 x_i - \frac RN \sum_{i=1}^N \sin^2 x_i \cos x_i\right].
\end{align*}
We claim that if $R(t)<\alpha$, then $R(t)$ cannot decrease. For this, since $\cos x_i\leq 1$, we have $\sin^2 x_i \cos x_i \leq \sin^2x_i$. Hence, if $0<R<\alpha$, we have
\[
\frac \alpha N \sum_{i=1}^N \sin^2 x_i - \frac RN \sum_{i=1}^N \sin^2 x_i \cos x_i \geq0.
\]
If $R(T)=0$ at some finite time $T$, then the configuration at $t=T$ is an equilibrium. Thus, uniqueness of a solution implies $R(0)=0$. Thus, $R(t)>0$ for $t\geq0$ whenever $R(0)>0$.  Now, suppose to the contrary that $R(t)$ converges to zero. Then, there exists $T>0$ such that $0<R(t)<\alpha$ for $t>T$. Then, we have
\[
R(t) \geq R(T)>0,\quad t\geq T
\]
which contradicts. 
\end{proof}

\subsection{Basin of attraction for two-cluster locked states} \label{sec:5.2} 
In this subsection, we provide a basin of attraction leading to the balanced state with finite-$N$ correction.

\subsubsection{Construction of phase-locked states}
We decompose $\{1,\cdots,N\}$ as
\[
A\sqcup B = \{1,\cdots,N\},\quad |A| = m,\quad |B|=\ell,\quad m+\ell = N
\]
and without loss of generality, we assume $m\geq \ell$ and denote
\[
 p := \frac mN,\quad q:= \frac \ell N, \quad \rho:= p-q = \frac{m-\ell}{N},\quad m,\ell\geq1.
\]
By restricting the dynamics to the two-cluster configuration, we write 
\[
\theta_i(t) = \begin{cases} 
\theta_A,\quad i\in A, \\
\theta_B,\quad i \in B.
\end{cases}
\]
For the equation of $\theta_A$, we see
\begin{align*}
\dot \theta_A & = -\kp_1 \frac\ell N \sin \Delta  + \frac{\kp_2}{2N^2} ( -2m\ell \sin \Delta - \ell^2 \sin (2\Delta)) \\
&=-\kp_1 q \sin \Delta -\kp_2 pq \sin \Delta - \frac{\kp_2}{2}  q^2 \sin (2\Delta).
\end{align*}
Similarly, the equation for $\theta_B$ becomes
\[
\dot \theta_B = \kp_1p \sin \Delta + \kp_2pq \sin \Delta + \frac{\kp_2}{2} p^2 \sin (2\Delta).
\]
Then, the difference $\Delta=\theta_A - \theta_B$  satisfies
\begin{equation} \label{F-10}
\dot \Delta = - \sin \Delta ( \kp_1+2\kp_2pq + \kp_2(p^2 + q^2)\cos\Delta).
\end{equation}
Since we want nontrivial two-cluster locked profile, $\Delta=0$ (complete synchronization) and $\Delta=\pi$ (antipodal state) are excluded. Hence, a difference for locked phase $\Delta_*\in (0,\pi)$ is determined by 
\[
\cos \Delta_* =  \frac{\alpha - 2pq}{p^2+q^2}  = \frac{2\alpha -1 + \rho^2}{1+\rho^2},\quad p^2 + q^2 = \frac{1+\rho^2}{2},\quad 2pq = \frac{1-\rho^2}{2}.
\]
Define the target profile $e:=(e_1,\cdots,e_N)$ as $e_i := a_*$ for $i\in A$ and $e_i := b_*$ for $i\in B$. Here, $a_*$ and $b_*$ should satisfy $a_* - b_* = \Delta_*$. In addition, by using rotational invariance, we fix the phase gauge by requiring that the average of $e_i$ is zero. Thus, $e_i$ is introduced to satisfy 
\[
e_i = \begin{cases}
a_*  = q\Delta_*,\quad i\in A, \\
b_* = -p \Delta_*,\quad i\in B,
\end{cases}\quad a_* - b_* = \Delta_*,\quad \sum_{i=1}^N e_i =0.
\]
Then, the squared order parameter satisfies 
 \begin{equation} \label{F-12}
R_*^2  = \left| \frac1N \sum_{k=1}^N e^{\mi e_k}    \right|^2 = |pe^{\mi a_*} + qe^{\mi b_*}|^2 =  p^2 + q^2 + 2pq \cos \Delta_* = \alpha + \frac{2(1-\alpha)\rho^2}{1+\rho^2}.
\end{equation}
Or sometimes, $R_*^2$ can be written as $R_{N,k}^2$ to emphasize the imbalance:
\[
R_*^2 = \alpha + \frac{2(1-\alpha)\rho^2} {1+\rho^2} = \alpha + \frac{2(1-\alpha)k^2}{N^2+k^2}=:R_{N,k}^2,\quad \rho = \frac{m-\ell}{N} = \frac kN.
\]
Note that $e$ would not be  an equilibrium; in general, it is a rotating wave profile. In other words, there exists $\Omega_* \in \bbr$ such that
\[
F(e) = \Omega_* \mathbf{1}.
\]
The profile is stationary when $\rho=0$, whereas $\Omega_*\neq0$ for $\rho>0$. 
\subsubsection{Stability parameters and local basin condition} 
We introduce four positive constants:
\begin{align} \label{fourconstants}
\begin{aligned}
\Lambda_A &:= K(1- \alpha ) \frac{ \alpha (1-3\rho + \rho^2+\rho^3) -4\rho^3}{(1+\rho^2)^2}, \\
\Lambda_B &:= K(1- \alpha ) \frac{  \alpha (1 + 3\rho + \rho^2 -\rho^3) + 4\rho^3}{(1+\rho^2)^2}, \\
\Lambda_\Delta &:= 2K(1- \alpha )\frac{ \alpha +\rho^2}{1+\rho^2}, \\
\Lambda_0 & := \min\{ \Lambda_A,\Lambda_B,\Lambda_\Delta\}.
\end{aligned}
\end{align}
Here, $\Lambda_B$ is obviously positive, due to $1-\rho^3\geq0$. Now, to guarantee the positiveness of $\Lambda_A$, we need to assume
\[
 \alpha (1-3\rho + \rho^2 + \rho^3)>4\rho^3
 \]
 which can be achieved by small $k$ (see Section \ref{sec:smallk}). Choose $\delta \in (0,\frac\pi8)$ to satisfy 
\[
 \lambda_\delta: = \Lambda_0 - 12\kp_1\sin(2\delta)-12K \sin (4\delta)>0. 
\]
Since $\Lambda_0>0$ and $\lambda_\delta \to \Lambda_0$ as $\delta\to0$, such a $\delta>0$ exists. 
  Under this assumption, we choose initial data as a slight perturbation of the two-cluster states with a common phase.

\begin{theorem} \label{T5.2}
Suppose that
\[
\kp_1>0,\quad \kp_1+\kp_2<0,\quad \alpha = \frac{\kp_1}{-\kp_2} \in (0,1),\quad \alpha (1-3\rho +\rho^2 + \rho^3)>4\rho^3.
\]
Then, for $\veps \in (0,\frac\delta2)$, if the initial data satisfy
\[
\theta_i^0 = \begin{cases}
\psi + a_* + \xi_i,\quad i\in A, \\
\psi+ b_* + \eta_i,\quad i\in B,
\end{cases},\quad \max \left\{  \max_{i\in A} |\xi_i|,\max_{j\in B} |\eta_j|     \right\} <\veps,
\]
then we have
\[
\max_{1\leq i\leq N} \left| \theta_i(t) -\frac1N\sum_{j=1}^N\theta_j(t) -e_i \right| \leq 4\veps e^{-\lambda_\delta t}.
\]
Consequently, we obtain
\[
\lim_{t\to\infty} ( \theta_i(t)-\theta_j(t)) = \begin{cases} 
0,&i,j\in A\ \textup{or}\ i,j\in B,\\
\Delta_*,&i\in A,\ j\in B
\end{cases}
\]
and 
\[
\lim_{t\to\infty} R(t) ^2 = \alpha + \frac{2(1-\alpha)k^2}{N^2+k^2} = R_{N,k}^2.
\]
In addition, we have 
\[
\lim_{t\to\infty} \dot\theta_i(t) = \Omega_*,\quad i\in [N].
\]
\end{theorem}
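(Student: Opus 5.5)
We will work in the co-rotating, mean-removed frame. Set $\bar\theta(t):=\frac1N\sum_j\theta_j(t)$ and $w_i(t):=\theta_i(t)-\bar\theta(t)-e_i$. Since the vector field of \eqref{kura} is invariant under common rotations, $\dot w_i = F_i(e+w)-\frac1N\sum_j F_j(e+w)$, where $F$ denotes the right-hand side of \eqref{kura}. At $w=0$ this vanishes because $F(e)=\Omega_*\mathbf 1$. Moreover $\sum_i w_i=0$ is preserved. The initial assumption gives $|w_i(0)|\le 2\veps$ after recentering, since $|\bar\xi|<\veps$.

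\textbf{Step 1 (linearization).} We compute the Jacobian $DF(e)$. Writing $\phi_i=\theta_i-\Psi$, the mean-field form \eqref{G-20} shows that the diagonal part of $\partial F_i/\partial\theta_i$ is $f_R'(\phi_\pm)$ on the two clusters. The coupling through $(R,\Psi)$ enters only via cluster averages. The linearized system on $\{\sum w_i=0\}$ therefore splits into three invariant subspaces:
\begin{itemize}
\item intra-cluster modes on $A$ ($\sum_{A}w_i=0$, supported on $A$), with rate $-\Lambda_A$;
\item intra-cluster modes on $B$, with rate $-\Lambda_B$;
\item the one-dimensional inter-cluster mode $w_A-w_B$, governed by the linearization of \eqref{F-10} at $\Delta_*$.
\end{itemize}
Differentiating \eqref{F-10} at the root gives $\kp_2(p^2+q^2)\sin^2\Delta_* = -\Lambda_\Delta$. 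Using $\cos\Delta_*=\frac{2\alpha-1+\rho^2}{1+\rho^2}$, $R_*^2$ from \eqref{F-12}, and the formulas for $\cos\phi_\pm,\sin\phi_\pm$ in terms of $\rho$, one checks that the intra-cluster rates equal $-f_R'(\phi_\pm)=\Lambda_A,\Lambda_B$ in \eqref{fourconstants}. This identification is routine algebra.

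\textbf{Step 2 (nonlinear estimate in max-norm).} Rather than using a spectral norm, we run a bootstrap on three quantities:
\[
\max_{i,j\in A}|\theta_i-\theta_j|,\qquad \max_{i,j\in B}|\theta_i-\theta_j|,\qquad |\Delta_{AB}-\Delta_*|,
\]
where $\Delta_{AB}$ is, for instance, the difference of cluster means. For the intra-cluster diameter on $A$, the extremal-index argument from the proof of the preceding theorem gives
\[
\dot D_A\le -\Lambda_A D_A+\text{(error)}.
\]
The error comes from the deviation of $f_R'$ along the segment and from the deviation of $(R,\Psi)$ from $(R_*,\Psi_*)$. Each of these is bounded by Lipschitz constants of $\sin u$ and $\sin 2u$, which scale like $\kp_1\sin(2\delta)$ and $K\sin(4\delta)$, as long as all $|w_i|<\delta$. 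Summing the contributions (a crude count of 12 terms) gives the rate $\lambda_\delta=\Lambda_0-12\kp_1\sin 2\delta-12K\sin4\delta$ for all three quantities. The same treatment applies to $D_B$ and to the inter-cluster variable via \eqref{F-10} with perturbation terms.

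\textbf{Step 3 (bootstrap closure and conclusions).} Let $T_*$ be the maximal time on which $\max_i|w_i|<\delta$. On $[0,T_*)$ the Gronwall bound gives
\[
\max_i|w_i(t)|\le 4\veps e^{-\lambda_\delta t},
\]
where the factor $4$ absorbs the recentering together with the reconstruction of $w$ from the diameters and the inter-cluster deviation. Since $4\veps<2\delta$ is not quite enough, we instead use $\veps<\delta/2$ to obtain $2\veps<\delta$ at the level of the controlled quantities. This yields $T_*=\infty$.

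The remaining claims follow directly. Phase differences converge to $0$ or $\Delta_*$. $R^2\to R_*^2=R_{N,k}^2$ by continuity and \eqref{F-12}. Finally, $\dot\theta_i=F_i(e+w)+\text{const}\to\Omega_*$ because $F$ is Lipschitz and $F(e)=\Omega_*\mathbf 1$.

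\textbf{Main obstacle.} The delicate point is Step 2. We must show that the cross-coupling between intra-cluster spreads and the inter-cluster mode, which arises through $R$ and $\Psi$, does not spoil the decay rate. Decoupling the exact linear part from the nonlinear remainder with constants as explicit as those in $\lambda_\delta$ requires careful bookkeeping. I would first attempt it by writing every $\dot w_i$ exactly via the mean-field form and applying the mean value theorem on each $\sin$ term.
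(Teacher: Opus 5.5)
Your plan follows the paper's proof (Appendix~\ref{sec:app.B}). That proof uses the same mean-zero deviation (your $w$, the paper's $u$) and controls the same three quantities. It uses $R_A=\max_{i\in A}|u_i-\bar u_A|$ instead of the diameter, which makes no difference. It identifies the same rates $\Lambda_A,\Lambda_B,\Lambda_\Delta$ and closes a bootstrap on $\Gamma=\max\{R_A,R_B,H\}$ with $\Gamma(0)\le 2\veps<\delta$ and $\|u\|_\infty\le 2\Gamma$; the last bound is where the factor $4\veps$ comes from. Your Step~1 identifications $\Lambda_A=-f_R'(\phi_+)$, $\Lambda_B=-f_R'(\phi_-)$ and $\Lambda_\Delta=K(p^2+q^2)\sin^2\Delta_*$ are correct.

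The gap is Step~2, which is exactly where the stated rate comes from, and your diagnosis of the difficulty points the wrong way. Among the three controlled quantities there is no coupling at the linear level. Since $J=DF(e)$ has zero row sums and constant block values $J^{AA}_{\textup{off}},J^{BB}_{\textup{off}},J^{AB},J^{BA}$, one has exactly $(Ju)_i-(Ju)_j=-\Lambda_A(u_i-u_j)$ for $i,j\in A$, the analogue on $B$, and $\overline{(Ju)}_A-\overline{(Ju)}_B=-\Lambda_\Delta(\bar u_A-\bar u_B)$. Cross-coupling appears only in $\mathcal R(u)=\int_0^1[DF(e+su)-DF(e)]u\,ds$, and it is harmless because it is bounded by $\Gamma$ itself.

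The estimate you need is the following. For $s\in[0,1]$, $M=DF(e+su)-DF(e)$ has zero row sums and off-diagonal entries bounded by $a_\delta/N$, with $a_\delta=2\kp_1\sin 2\delta+2K\sin 4\delta$. This holds because $\|su\|_\infty\le 2\Gamma\le 2\delta$ keeps the phase increments at most $4\delta$ and $8\delta$, and $|\cos(x+h)-\cos x|\le 2\sin(|h|/2)$; this is why $\delta<\pi/8$. Hence $(Mu)_i=\sum_{\ell\ne i}M_{i\ell}(u_\ell-u_i)$ together with $|u_\ell-u_i|\le 3\Gamma$ gives $\|\mathcal R(u)\|_\infty\le 3a_\delta\Gamma$. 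Subtracting a cluster average doubles this, so
$\dot R_A\le-\Lambda_A R_A+6a_\delta\Gamma$, $\dot R_B\le-\Lambda_B R_B+6a_\delta\Gamma$ and $\dot H\le-\Lambda_\Delta H+6a_\delta\Gamma$, with $6a_\delta=12\kp_1\sin 2\delta+12K\sin 4\delta$.

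So the $12$ is $2\cdot 3\cdot 2$, not a count of terms. Bounding the diagonal of $M$ separately already gives $16$, and an estimate through $\delta R,\delta\Psi$ in the mean-field form would give yet another constant. As written, the rate $\lambda_\delta$ in the statement is asserted rather than derived.

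Relatedly, the bootstrap must be on $\Gamma\le\delta$ from the start; your Step~3 ends up there. Every Lipschitz bound then has to be taken under $\|u\|_\infty\le 2\delta$, not under $\max_i|w_i|<\delta$ as in your Step~2.
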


\begin{proof}
Since the proof consists of several steps, it is provided in Appendix \ref{sec:app.B}.

 \end{proof}

\section{Stability: Why two-cluster locked state is generic?} \label{sec:6}
\setcounter{equation}{0}

Contrary to the high-dimensional case, an essential feature of the dynamics on $\bbs^1$ is the distinction between stationary equilibria and rotating phase-locked states.  The circle differs qualitatively from the genuinely high-dimensional spheres, because a finite population imbalance cannot be absorbed by a stationary balanced configuration. Thus, the residual imbalance is compensated by a collective rotation rather than by a stationary balance. Consequently, stationary balance imposes a substantially stronger constraint on $\bbs^1$. 

In this section, we classify these two cases and investigate their linear stability properties.

\subsection{Stationary equilibrium} \label{sec:6.1} 
First, we recall from \eqref{G-20} that
\[
\dot \theta_i  = KR\sin \phi_i (R\cos \phi_i-\alpha) =: f_R(\phi_i),\quad f_R(u) = -KR\sin u (\alpha -R\cos u),\quad \phi_i =\theta_i - \Psi.
\]
Then, the stationary equilibrium satisfies $f_R(u)=0$ for all $i\in [N]$. Note that $R=0$ is a stationary equilibrium. Below, we consider $R>0$. 

\subsubsection{Self-consistency equation} For each $i\in [N]$, we have
\[
\sin \phi_ i= 0 \quad \textup{or}\quad \cos \phi_i = \frac{\alpha}{R}
\]
which corresponds to 
\[
\phi_i =0,\quad \pi,\quad \phi_L, \quad -\phi_L,\,\, \mathrm{where}\,\, \cos\phi_L := \frac\alpha R.
\]
For $\pm\phi_L$ to exist, one must have $R\geq\alpha$.  Define
\begin{align*}
&n_+:= |\{ i: \phi_i=0\}|,\quad n_- := |\{ i: \phi_i = \pi\}|, \\
&  n_{L,+} := |\{ i: \phi_i =\phi_L\}|, \quad  n_{L,-} := |\{ i: \phi_i = -\phi_L\}|, \\
 & n_L:=|\{ i: \phi_i = \pm \phi_L\}|,\quad  n_L = n_{L,+}+n_{L,-}.
\end{align*}
Here, self-consistency equation becomes
\begin{equation} \label{D-60}
\frac1N \sum_{i=1}^N e^{\mi \phi_i}=R.
\end{equation}
Then, the imaginary part of \eqref{D-60} becomes
\[
\frac1N (n_{L,+}-n_{L,-})\sin \phi_L=0
\]
which gives for non-trivial case $\sin\phi_L\neq0$,
\[
n_{L,+}=n_{L,-}=:s.
\]
Hence, $n_L=2s$ should be even. On the other hand, the real part of \eqref{D-60} gives
\[
R = \frac1N (n_+-n_- + 2s\cos \phi_L) = \frac1N \left( n_+ - n_- + \frac{2s\alpha}{R}\right)
\]
which is rewritten as
\[
NR^2 - (n_+ - n_-)R - 2s\alpha =0.
\]
Since the Kuramoto model with higher-order interactions coincides with the high-dimensional model for $m=1$, we directly use the stability result for some cases. 

\subsubsection{Pure bipolar equilibria}
In this case, we have
\[
n_L = 0.
\]
Hence, for all $i\in [N]$, all oscillators   lie at one of the two antipodal phases $0$ and $\pi$:
\[
\phi_i = 0, \, \pi.
\]
Then, we use the previous instability results for the high-dimensional case to see that all pure bipolar equilibria are linearly unstable. 

\subsubsection{Pure balanced equilibria} In this case, we have
\[
n_+=n_-=0,\quad n_L=N.
\]
Hence, $N=2s$ should be even. In this case, self-consistency equation gives
\[
R^2=\alpha.
\]

\subsubsection{Mixed polar-latitude equilibria}

What is different between $\bbs^1$ and $\bbs^m$ with $m\geq2$ is that in $\bbs^1$, we have 
\[
\textup{dim} p^\perp =1.
\]
In other words, there are only two unit vectors $\pm \nu$ in $p^\perp$: $y_i = \pm \nu$. Thus, when we consider $P_L$, we only have
\[
x_i = \frac\alpha R p \pm \sqrt{ 1- \frac{\alpha^2}{R^2}}\nu,\quad \sum_{i\in P_L} y_i =0.
\]
Thus, $n_L$ must be even on $\bbs^1$. 



Let us consider mixed equilibrium case in Section \ref{sec:2.4}. 

(i) ($n_+\geq 2$ or $n_-\geq2$):  Suppose that $n_+\geq 2$. Then, we similarly choose a perturbation supported only on $P_+$ with zero sum:
\[
\xi_i = \begin{cases}
\eta_i \nu, \quad i\in P_+, \\
0,\quad \textup{otherwise}, 
\end{cases}\quad \sum_{i\in P_+} \eta_i=0.
\]
Then, by a similar argument, we see that such mixed equilibria are linearly unstable on $\bbs^1$. In addition, exactly the same result holds for $n_-\geq2$. 

(ii) It remains to consider the two-sided singleton case 
\[
n_+=1,\quad n_-=1.
\]
In $\bbs^1$, the operator 
\[
C = \sum_{i\in P_L} y_i \otimes y_i
\]
on $p^\perp$ has the single eigenvalue $\lambda = n_L$. Then, the trace becomes $\alpha + \alpha^2 + \frac{\alpha b^2 n_L}{N}>0$. Hence, it has at least one eigenvalue with positive real part. Hence, the equilibrium is linearly unstable.

\begin{remark}
Then, what fails on $\bbs^1$? In fact, the singleton one-sided polar case is unstable. Previously on $\bbs^m$ with $m\geq2$, $C$ has an eigenvalue $\lambda<n_L$. This is true when $\textup{dim}p^\perp \geq 2$. However on $\bbs^1$, since $\textup{dim}p^\perp = 1$ and $y_i = \pm \nu$, we have
\[
C = n_L \nu\otimes \nu.
\]
Thus, the only eigenvalue is $\lambda = n_L$.  
\end{remark}

(iii) Note that the case
\[
n_+ = 0,\quad n_-=1,\quad n_L = N-1
\]
remains linearly unstable on $\bbs^1$, since it has a positive eigenvalue. Precisely, since $\lambda = n_L$, matrix in \eqref{P-50} has determinant 0 and a positive trace. Hence, the equilibrium has  a positive eigenvalue.

(iv) The exceptional case is
\[
n_+=1,\quad n_-=0,\quad n_L = N-1.
\]
The scalar $R$ satisfies
\[
NR^2 - R - (N-1)\alpha=0.
\]
The matrix has determinant 0 and its nonzero eigenvalue is 
\[
\mu_+ = K\frac{R-\alpha}{RN}( (N-2)\alpha -R).
\]
Hence, if $R<(N-2)\alpha$, then the equilibrium is linearly unstable. If $R\geq (N-2)\alpha$, then the corresponding block has no positive eigenvalue, and the equilibrium becomes non-hyperbolic with a zero eigenvalue and non-positive eigenvalues.  To remove $R$ from the condition $R\geq (N-2)\alpha$, we consider
\[
f(u) = Nu^2 - u-(N-1)\alpha
\]
whose positive root is $R$. Hence,
\[
R<(N-2)\alpha \quad \Longleftrightarrow \quad f((N-2)\alpha)>0.
\]
Hence, instability condition becomes
\[
\alpha > \frac{2N-3}{N(N-2)^2}=:\alpha_c(N).
\]
For $N=3$, one has $\alpha_c(3)=1$. Since $0<\alpha<1$, the instability criterion $\alpha>\alpha_c(3)$ cannot be satisfied. Hence, this mode does not yield linear instability.   For $N=5$, since $\alpha_c(5) = \frac{7}{45}$, if $\alpha>\frac{7}{45}$, then the state is linearly unstable.

Unlike the case $\bbs^m$, $m\ge2$, the circle imposes an additional parity constraint on the latitude class. Indeed, since the latitude phases are only $\pm\phi_L$, the imaginary part of the self-consistency condition requires the two latitude populations to be equal. Hence $n_L$ must be even. The resulting stability classification of stationary equilibria on $\bbs^1$ is summarized in Table~\ref{tab:S1-equilibria}. Apart from the pure balanced   equilibrium and the parameter-dependent exceptional family $(n_+,n_-)=(1,0)$, all stationary equilibria are linearly unstable.

\begin{table}[ht]
\centering
\caption{Stationary equilibria and their linear stability on $\bbs^1$ in the regime $\kp_1>0$ and $\kp_1+\kp_2<0$.}
\label{tab:S1-equilibria}
\renewcommand{\arraystretch}{1.3}
\begin{tabular}{c c c}
\toprule
\textbf{Equilibrium class} & \textbf{Existence condition} & \textbf{Conclusion} \\
\midrule


Complete synchronization
&
$\textup{$n_+=N$ } $
&
Unstable
\\

Pure bipolar equilibria
&
$n_L=0$
&
Unstable
\\

Pure balanced latitude
&
$n_L=N$, $N$ even
&
\textit{Stable}
\\

\quad Internal polar case
&
$n_L>0$ and $\textup{$n_+\geq2$ or $n_-\geq2$}$
&
Unstable
\\

\quad Two-sided singleton case
&
$n_L>0$, $(n_+,n_-)=(1,1)$
&
Unstable
\\

\quad Negative singleton case
&
$n_L>0$,  $(n_+,n_-)=(0,1)$
&
Unstable
\\

\quad Positive singleton case
&
$(n_+,n_-)=(1,0)$, $N$ odd
&
\makecell{Unstable if $\alpha>\alpha_c(N)$}
\\

\bottomrule
\end{tabular}
\end{table}

The fact that stable stationary equilibria occur only in restricted cases is one of the reasons why generic coherent states on $\bbs^1$ are more appropriately described as phase-locked equilibria rather than stationary equilibria. In particular, finite population imbalance is compensated by a common angular velocity.

%
%
%

\subsection{Phase-locked states} \label{sec:6.2} 
 
So far, we have shown that except for the pure balanced equilibrium  $(n_+,n_-,n_L) = (0,0,N)$ which exists only for even $N$, all stationary equilibria are linearly unstable, with one exceptional mixed family $(n_+,n_-,n_L) = (1,0,N-1)$ which exists only for odd $N$ and can be linearly stable for $\alpha <\alpha_c(N)$. Hence, we would say that the stationary equilibrium is restrictive; in other words, requiring a configuration to be a stationary equilibrium imposes a much stronger constraint. 

In contrast, a phase-locked state can be understood as a relative equilibrium that relaxes this rigidity by allowing an additional degree of freedom, namely the common angular velocity, say, $\Omega$. 

This additional freedom is particularly crucial when the stationary balance condition cannot be satisfied, for instance, due to a finite population imbalance. In such cases, the imbalance is not absorbed by a static configuration but is instead compensated by a collective rotation.  

To this end, the locked states satisfy
\[
\dot \theta_i = \Omega,\quad i\in [N].
\]
Hence, $\dot \Psi = \Omega$ and $\phi_i = \theta_i - \Psi$ converge to definite values. In this case, we have
\[
KR\sin \phi_i (R\cos \phi_i - \alpha) = \Omega. 
\]
Then, the self consistency equation \eqref{D-60} gives
\begin{equation} \label{D-78}
\sum_{i=1}^N \sin \phi_i = 0,\quad \frac1N \sum_{i=1}^N \cos \phi_i = R.
\end{equation}

 \begin{figure}[H]
\centering
\mbox{
\subfigure{  
\includegraphics[width=1\textwidth]{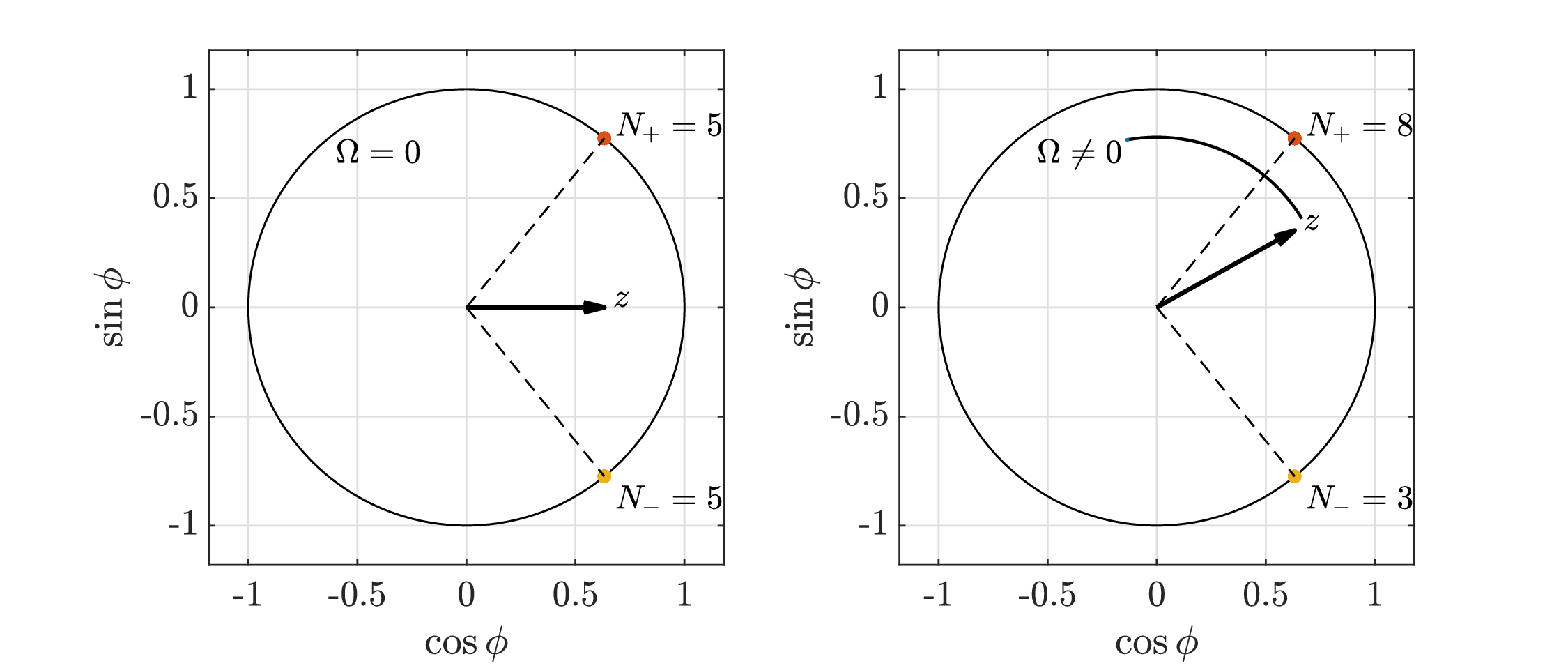}\label{fig:3-1}}
}
\caption{  Parity obstruction on $\bbs^1$: (Left) An equal occupation of the two balanced roots is possible for even $N$ (here, $N=10$) which produces a stationary state with $R^2=\alpha$. (Right) For odd $N$ (here, $N=11$), the population imbalance is converted into a common rotation.  }  \label{fig5-1}
\end{figure}

\begin{remark}
Phase-locked states and their stability have been studied for the Kuramoto-type model, for instance, \cite{HR,LH}. 
\end{remark}

\subsection{Why do two-cluster $R_{N,k}$-branches dominate the dynamics?} \label{sec:6.3} 
In Theorem \ref{T5.2}, we provide a sufficient initial condition under which the order parameter $R(t)^2$ converges to  $R_{N,k}^2$ which can be realized as a two-cluster phase-locked state. In this subsection, we justify why $R_{N,k}^2$ can be observed in most numerical simulations.

Note that the phase-locked state satisfies
\begin{equation} \label{E-36}
f_R(\phi) = -KR(\alpha - R\cos\phi) \sin \phi = \Omega
\end{equation}
for some common angular velocity $\Omega$. Now, we show that there are four roots for equation \eqref{E-36}. Consider the equation $f_R(\phi_i)=\Omega$ with $\eqref{D-78}_1$:
\[
-K R (\alpha - R\cos \phi_i)\sin \phi_i = \Omega,\quad \sum_{i=1}^N \sin \phi_i=0.
\]
Since $\Omega\neq0$,  no locking root can satisfy $\sin\phi_i=0$, because $f_R(\phi_i)=0$ whenever $\sin \phi=0$.  Thus for each $i\in [N]$, we have either $\sin\phi_i<0$ or $\sin \phi_i>0$. Without loss of generality, assume that $\Omega>0$. Suppose that $\sin \phi_i>0$. Then, we should have $R\cos\phi_i>\alpha$ and $R>\alpha$. Define 
\[
\phi_L := \arccos \frac{\alpha}{R}\in \left(0,\frac\pi2\right).
\]
We see that  $\phi_i \in I_+:= (0,\phi_L)$. Similarly, for the case of $\sin \phi_i<0$, we should have $R\cos\phi_i<\alpha$ and $\phi_i \in I_-:= (-\pi,-\phi_L)$. Hence, there exists at least one zero in each of the intervals $I_+$ and $I_-$. By recalling 
\[
g(\phi) = f_R(\phi)-\Omega, 
\]
we see that at the boundary of $I_+ = (0,\phi_L)$, 
\[
g(0) = f_R(0)-\Omega = -\Omega<0,\quad g(\phi_L) = -\Omega<0.
\]
Denote $s_1$ as the leftmost root in $I_+$. Generically, $s_1$ is a simple root, i.e., $g'(s_1)\neq0$ and in fact, $g'(s_1)>0$. Due to the continuity of a solution, we see that there exists $s_2\in (s_1,\phi_L)$ such that $g(s_2)=0$. Thus, in $I_+$, there are at least two zeros. By   exactly the same argument, there exist at least two zeros in $I_-$. Thus, the self-consistency condition yields at least two roots in each of $I_+$ and $I_-$. Since $f_R-\Omega$ is a nonzero trigonometric polynomial of degree two, it has at most four roots in one period. Hence, there are exactly four simple locking roots.

Note that if a root is simple, then the stability of the root is determined by the sign of its derivative. In addition, when crossing the root, the sign of $g$ changes and stability also changes. For instance, if $g$ changes from $+$ to $-$, then the root is stable. For the next root, $g$ changes from $-$ to $+$ and hence the next root is unstable. Thus,  if there are four roots, then there are two stable roots and two unstable roots.

Consequently, generic and robust locked attractors are expected to avoid unstable roots and is selected by the stable roots: unstable roots cannot persist under perturbations, whereas stable roots can be robustly occupied. Since there are two stable roots, the resulting robust locked attractors are naturally two-cluster states. This provides a mechanism explaining why the $R_{N,k}$ branches appear so prominently.

 \begin{figure}[h]
\centering
\mbox{
\subfigure{  
\includegraphics[width=1\textwidth]{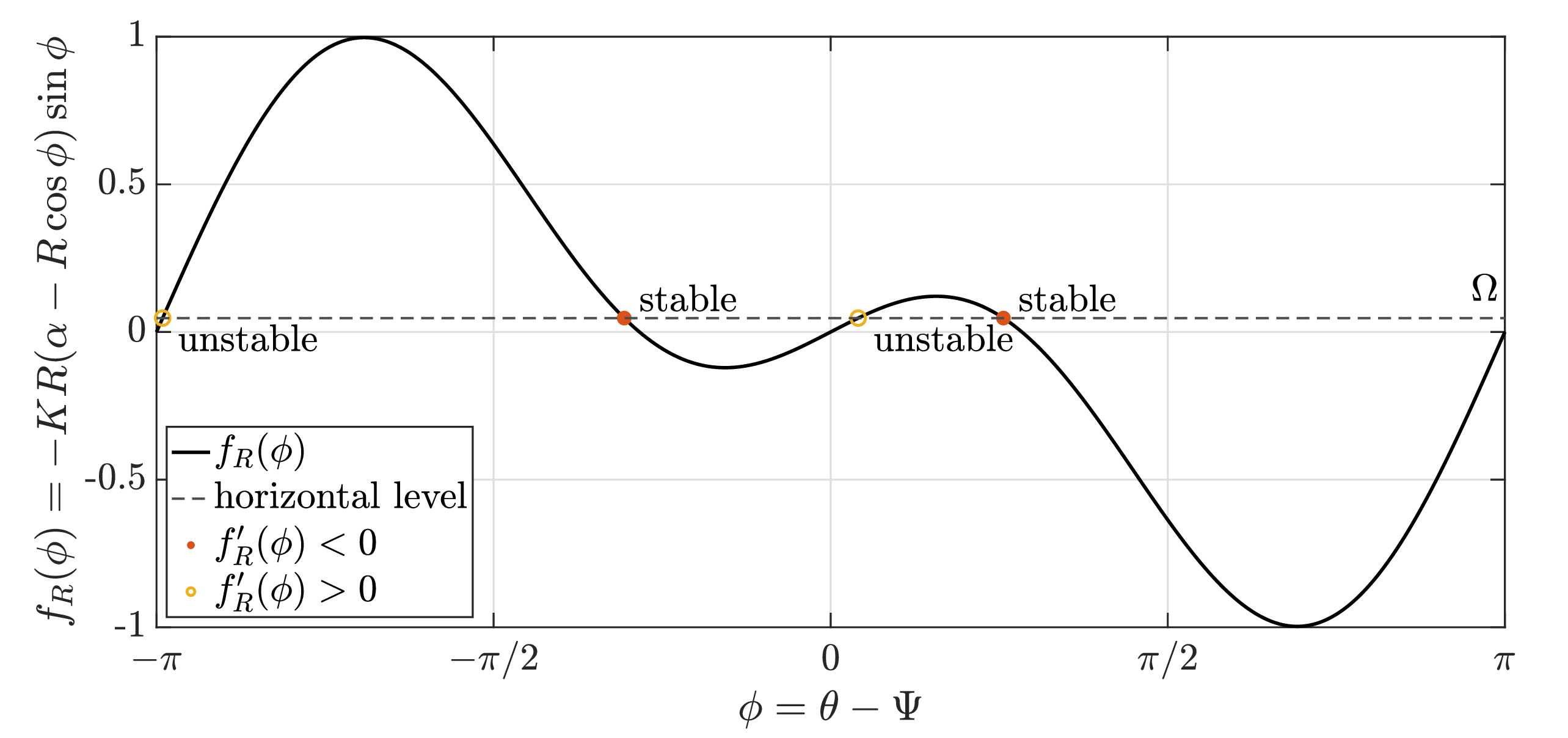}}
}
\caption{ Root selection for phase-locked states: The intersections of $f_R(\phi)$ with the horizontal level $\Omega$ determine the admissible phase-locked states. Filled and unfilled markers indicate roots with negative and positive slopes, respectively, which correspond to internally stable and internally unstable clusters. }  \label{fig6-2}
\end{figure}


\subsection{Why are the $R_{N,k}$ branches not equally likely?} \label{sec:smallk}

Recall
\[
R_{N,k}^2 =\alpha + \frac{  2 (1-\alpha)k^2}{N^2+k^2},\quad  0\leq k\leq N-2,\quad k\equiv N \textup{ (mod } 2).
\]
Here, $k$ and $N$ have same parity. Algebraically, all $k$ are possible; however, they are not equal. Our goal of this subsection is to show that small $k$ is stable, whereas large $k$ is unstable. 

\subsubsection{Internal stability filter}
Write two locking roots $\phi_+$ and $\phi_-$ with their cluster sizes $N_+$ and $N_-$, respectively, so that $N_+ + N_-=N$. Write also $\eta:= \frac{N_+-N_-}{N}$:
\[
P_+:= \{ i : \phi_i = \phi_+\}, \quad P_-:= \{ i : \phi_i = \phi_-\},\quad N_+:= |P_+|,\quad N_-:= |P_-|.
\]
Suppose that several oscillators occupy the same cluster root $\phi_\pm$:
\[
\phi_i = \begin{cases}
\phi_+,\quad i\in P_+, \\
\phi_-,\quad i\in P_- .
\end{cases}
\] 
To establish the internal stability of the $P_+$-cluster, we restrict our attention to zero-sum perturbations
\[
\phi_i \to \phi_i+ \veps_i,\quad i\in P_+,\quad \sum_{i\in P_+}\veps_i=0.
\]
This condition removes the coherent displacement of the cluster and isolates perturbations that only affect its internal phase. Moreover, the contribution of such perturbations to the complex-valued order parameter vanishes at first order. Hence, both the mean-field amplitude and the mean phase remain unchanged to linear order. Precisely, we perturb the oscillators in $P_+$:
\[
\theta_i(t) = \theta_i^*(t) + \veps_i(t),\quad \dot \theta_i^* = \Omega,\quad \theta_i^* = \theta_+^*.
\]
Then, the contribution of the perturbation to the complex-valued order parameter is calculated as
\begin{align*}
\frac{1}{N}\sum_{i\in P_+} e^{\mi (\theta_+^* + \veps_i)} = \frac{e^{\mi\theta_+^*}}{N}\left(  N_+ + \mi \sum_{i\in P_+} \veps_i    \right) + O(\|\veps\|^2) = \frac{N_+}{N}e^{\mi \theta_+^*} + O(\|\veps\|^2).
\end{align*}
This shows that the order parameter is preserved at the first-order level. Next, we denote
\[
R^\veps := R+ \delta R ,\quad \psi^\veps := \psi + \delta \psi.
\]
Then, the order parameter becomes
\[
Z^\veps = (R+\delta R) e^{\mi (\psi + \delta\psi)} = Re^{\mi \psi} + e^{\mi\psi}(\delta R + \mi R \delta \psi) + \textup{higher order terms}
\]
which gives the first variation of the order parameter
\[
\delta Z = e^{\mi \psi}(\delta R+ \mi R\delta \psi) .
\]
Since   $\delta Z =0$ and $R>0$,  we have
\[
\delta R = 0,\quad \delta \psi=0.
\]
Note that
\[
\phi_i^\veps = \theta_i^\veps - \psi^\veps = (\theta_i^* - \psi^*) + \veps _i - \delta \psi = \phi_+ + \veps_i - \delta \psi.
\]
Now, we recall the governing equation 
\[
\dot \theta_i = f_{R^\veps} (\phi_i^\veps) = f_{R+\delta R} (\phi_+ + \veps_i-\delta \psi).
\]
In fact, since $f$ is a function of two variables, we introduce $F=F(R,\phi)$ 
\[
F(R,\phi):= -K R (\alpha - R\cos\phi) \sin \phi.
\]
By the Taylor expansion at $(R,\phi_+)$, we observe
\begin{align*}
F(R+\delta R ,\phi_++\veps_i - \delta \psi) & = F(R,\phi_+) + \partial_R F(R,\phi_+) \delta R + \partial_\phi F(R,\phi_+) (\veps_i - \delta \psi) + O(\|\veps\|^2) \\
& = f_R(\phi_+) + \partial_R f_R(\phi_+) \delta R + f_R'(\phi_+) (\veps _i - \delta \psi) + O(\|\veps\|^2) \\
& = f_R(\phi_+) + f_R'(\phi_+) \veps_i + O(\|\veps\|^2).
\end{align*}
Thus, in the first-order expansion of the perturbed equation, we derive
\[
\Omega+ \dot \veps_i = \dot \theta_i  =  f_R(\phi_+) + f_R'(\phi_+) \veps_i = \Omega + f_R'(\phi_+) \veps_i
\]
which gives
\[
\dot \veps_i = f'_R(\phi_+)\veps_i.
\]
Hence, the internal eigenvalue is
\[
\lambda_+ = f_R'(\phi_+).
\]
Here, we only have two clusters $\phi_+$ and $\phi_-$. Then $\sum \sin \phi_j=0$ gives
\[
\frac{N_+}{N} \sin \phi_+ + \frac{N_-}{N} \sin \phi_-=0.
\]
Hence, the signs of $\sin \phi_+$ and $\sin \phi_-$ are different. After choosing representatives in $(-\pi,\pi]$, the signs of $\phi_+$ and $\phi_-$ are different. By a similar argument, the internal eigenvalue for $P_-$ satisfies
\[
\lambda_- = f_R'(\phi_-).
\]
Now, we calculate $f_R'(\phi)$. Recall 
\[
f_R(\phi) = -KR(\alpha - R\cos \phi)\sin \phi
\]
and hence
\[
f_R'(\phi) = -KR(R\sin^2\phi + (\alpha - R\cos\phi)\cos\phi). 
\]
We calculate 
\[
q_+ := R\cos \phi_+,\quad q_- =: R\cos \phi_-.
\]
For simplicity, we write 
\[
\phi_+ = :a>0,\quad \phi_- =: -b<0.
\]
 Then, by definition, 
\[
Re^{\mi \psi} = \frac1N \sum_{j=1}^N e^{\mi \theta_j} = e^{\mi \psi} (w_+ e^{\mi a} + w_- e^{-\mi b}),\quad w_+:= \frac{N_+}{N},\quad w_-:= \frac{N_-}{N}
\]
which gives
\[
R = w_+ e^{\mi a} + w_-e^{-\mi b}.
\]
Hence, we find
\[
w_+ \sin a = w_- \sin b,\quad R = w_+ \cos a + w_- \cos b.
\]
Multiplying $e^{-\mi a}$ and taking real parts give
\[
q_+ = w_+ + w_-\cos(a+b).
\]
Similarly, we have
\[
q_- = w_- + w_+ \cos(a+b).
\]
On the other hand, squaring gives
\[
R^2 = |w_+ e^{\mi a} + w_- e^{-\mi b}|^2 = w_+ ^2 + w_-^2 + 2w_+ w_- \cos(a+b) = \frac{1+\eta^2}{2} + \frac{1-\eta^2}{2} \cos(a+b).
\]
Hence, we have
\[
\cos(a+b) = \frac{2R^2- 1-\eta^2}{1-\eta^2}
\]
which yields
\begin{align*}
q_+ &= w_+ + w_- \cos(a+b)  = \frac{1+\eta}{2} + \frac{1-\eta}{2} \frac{2R^2-1-\eta^2}{1-\eta^2} \\
&= \frac{1+\eta}{2} 
+ \frac{2R^2-1-\eta^2}{2(1+\eta)} 
 = \frac{ (1+\eta)^2 + 2R^2-1-\eta^2}{2(1+\eta)}  = \frac{R^2+\eta}{1+\eta}.
\end{align*}
Similarly, we see
\[
q_- = \frac{R^2-\eta}{1-\eta}.
\]
Now, we recall the locked condition
\[
f_R(a) = f_R(-b),\quad \textup{or, equivalently,} \quad -KR(\alpha-q_+) \sin a  = KR(\alpha - q_-)\sin b
\]
which is
\[
(q_+-\alpha) \sin a = (\alpha - q_-)\sin b.
\]
It follows from the self-consistency that 
\[
(1+\eta) \sin a = (1-\eta) \sin b.
\]
Hence, we have
\[
R^2 = \alpha + \frac{2(1-\alpha)\eta^2}{1+\eta^2},\quad \eta = \frac kN.
\]
Then, we observe
\[
f_R'(\phi) = -KR(R\sin^2\phi  + (\alpha - R\cos \phi)\cos\phi) = -K(R^2 + \alpha q-2q^2).
\]
Hence, the corresponding projections are
\begin{align*}
q_+ &= \frac{\alpha + (1-\alpha)\eta + \eta^2}{1+\eta^2} = \frac{R^2+\eta}{1+\eta}, \\
q_- & = \frac{\alpha - (1-\alpha)\eta + \eta^2}{1+\eta^2}.
\end{align*}
Then, by straightforward calculation, we have 

\begin{align*}
\lambda_+ &= -K(R^2 + \alpha q_+ - 2q_+^2)  =-K (1-\alpha)\frac{P_\alpha(\eta)}{(1+\eta^2)^2}, 
\end{align*}
where an auxiliary polynomial is defined as
\[
P_\alpha(\eta) := \alpha(1-3\eta + \eta^2 + \eta^3)-4\eta^3.
\]
Hence, the majority cluster is internally stable only if $P_\alpha(\eta)>0$. On the other hand, if $P_\alpha(\eta)<0$, then $\lambda_+>0$ and the branch is linearly unstable. 

Similarly for $\lambda_-$, we have
\[
\lambda_- = -K(1-\alpha) \frac{   \alpha(1+3\eta + \eta^2 - \eta^3) + 4\eta^3   }{(1+\eta^2)^2}.
\]
Since $0\leq \eta<1$ and $0<\alpha<1$, we have $1+3\eta + \eta^2 - \eta^3>0$ and hence $\lambda_-<0$. Hence, the minority cluster is always internally stable. The majority cluster is the one that may lose stability. 

For $P_\alpha(\eta)$, we observe
\[
P_\alpha(0)= \alpha>0,\quad P_\alpha(1)  =-4<0.
\]
In addition, $P_\alpha(\cdot)$ is strictly decreasing on $[0,1]$ for $0<\alpha<1$. Hence, there exists a unique threshold $\eta_c(\alpha)\in (0,1)$ such that $P_\alpha(\eta_c)=0$. Hence, if $\eta= \frac kN  <\eta_c$, then both clusters are internally stable and if $\eta=\frac kN>\eta_c$, then two-cluster branch is linearly unstable. This explains why the $R_{N,k}$ branches with large imbalance (or large $k$) are not equally likely; many of them are indeed internally unstable. In other words, large imbalance branches lose internal stability because the majority cluster eigenvalue becomes positive for large imbalance. 

Note also that we have
\[
\frac{dR^2}{d(\eta^2)} = \frac{2(1-\alpha)}{(1+\eta^2)^2}>0.
\]
Hence, the smaller $\eta$ is, the closer $R^2$ is to the balanced value $\alpha$. In addition, the internal stability condition is easy to satisfy for small $\eta$. In fact, as $\eta\to0$, $P_\alpha(\eta)\to\alpha>0$. Thus, small-imbalance branches have negative internal eigenvalues and are dynamically robust.

  \begin{figure}[H]
\centering
\mbox{
\subfigure{  
\includegraphics[width=1\textwidth]{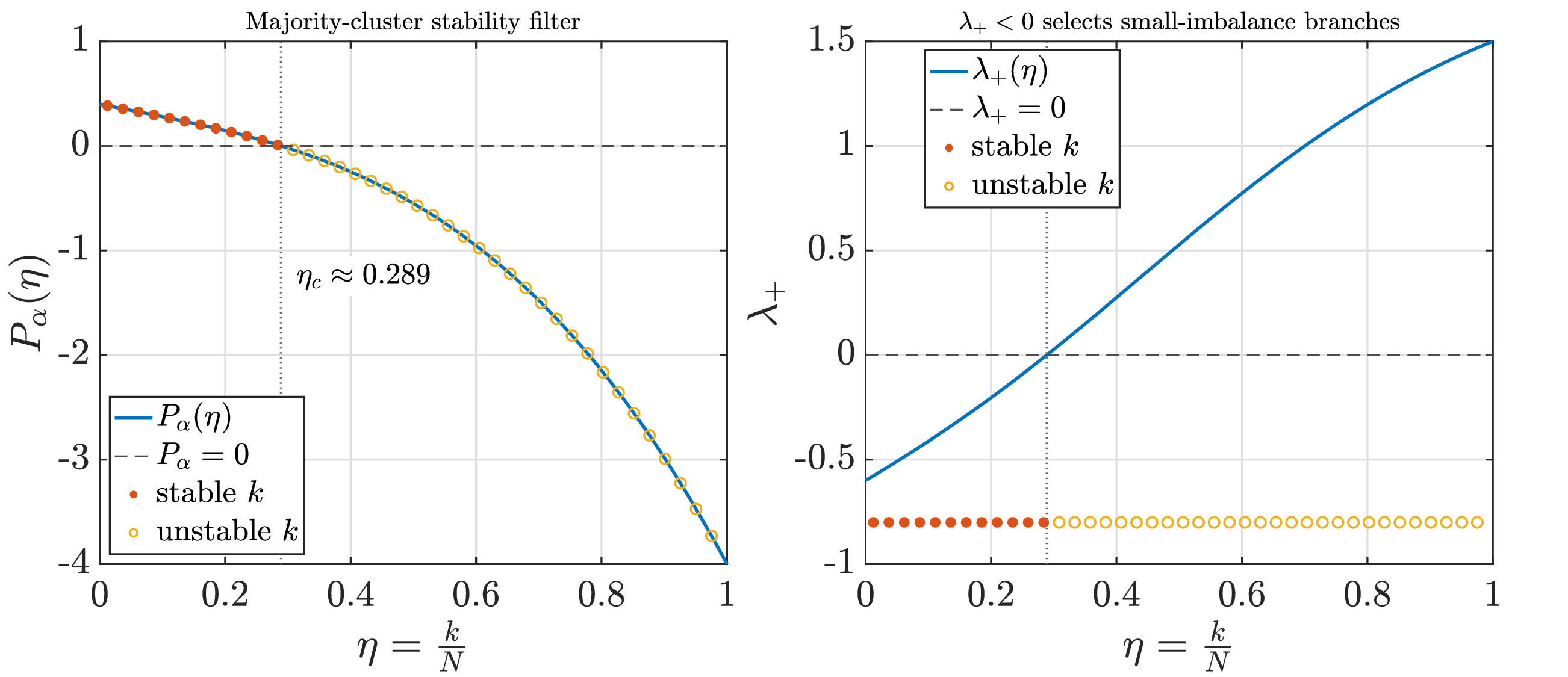}\label{fig:3-1}}
}
\caption{ Stability figures: The algebraic branch $R_{N,k}$ exists for many $k$ but sufficiently large population imbalance leads to instability. Consequently, only branches in the regime $\frac kN<\eta_c$ satisfy the internal-stability condition.   Here, for $\alpha=0.4$, the critical threshold $\eta_c$ is approximately $\eta_c \approx 0.289$. $P_\alpha(\eta)>0$, equivalently $\lambda_+<0$, holds for $\eta<\eta_c$. Filled and unfilled markers indicate the admissible finite-$N$ branches in the stable and unstable regimes, respectively.    }   \label{fig6-3}
\end{figure}

%
%
For even $N$, the two-cluster solutions form a discrete family indexed by $\eta  = \frac{k}{N}$ for $k=0,2,\cdots,N-2$ (even $N$). The branch $k=0$ corresponds to an equal population split and is a stationary balanced equilibrium whereas every $k>0$ branch has unequal cluster populations which give a nonzero common angular velocity. We see that the coupling parameters determine which of these branches are dynamically admissible. More precisely, the $k$-th branch can be internally stable only if $P_\alpha(\eta)>0$. Among the stable branches, the asymptotically selected value of $k$ depends on the basin of attraction containing the initial phase configuration. Since the phase roots evolve self-consistently with the mean-field, the selected branch cannot be chosen from a single initial observable, such as the initial order parameter. Thus, the even-$N$ dynamics exhibits multistability and basin-dependent selection between the stationary balanced branch $k=0$ and rotating phase-locked branches $k>0$.

\subsection{Population imbalance is compensated by rotation}

On the circle, the residual finite-population imbalance has only one way to resolve itself: it becomes a common angular drift. However, in higher dimensions, the same imbalance can be absorbed geometrically by redistributing the transverse components by producing a stationary balanced state instead of a rotating locked state. Thus, the circle resolves finite-population imbalance dynamically through collective rotation, whereas higher-dimensional spheres can absorb the same imbalance geometrically through continuous redistribution in the transverse directions.

Below, we find an explicit formula for $\Omega$ in terms of $K,\alpha$ and $\eta$. Recall that
\[
\Omega = f_R(\phi_+) = KR(q_+ - \alpha)\sin \phi_+.
\]
Since $q_+ = R\cos\phi_+$ and $\phi_+ \in (0,\pi)$, we observe
\[
R^2\sin^2 \phi_+ = R^2 - R^2\cos^2\phi_+ = R^2 - q_+^2.
\]
Hence, $\Omega$ is written as
\[
\Omega = K(q_+-\alpha) \sqrt{R^2 - q_+^2}.
\]
Now, it suffices to represent $R$ and $q_+$ in terms of $\alpha$ and $\eta$. For $q_+-\alpha$, we find
\[
q_+ - \alpha = \frac{\alpha + (1-\alpha)\eta + \eta^2}{1+\eta^2}-\alpha = \frac{(1-\alpha)\eta(1+\eta)}{1+\eta^2}.
\]
On the other hand, for $R^2 - q_+^2$, we see
\begin{align*}
R^2 - q_+^2 &= \frac{\alpha + (2-\alpha)\eta^2}{1+\eta^2} - \left( \frac{\alpha + (1-\alpha)\eta + \eta^2}{1+\eta^2}\right)^2  \\
&= \frac{(1-\alpha)(\alpha+\eta^2)(1-\eta)^2}{(1+\eta^2)^2}.
\end{align*}
Hence, we obtain
\[
\Omega = K(1-\alpha)^\frac32 \frac{(1-\eta^2)\sqrt{\alpha + \eta^2}}{(1+\eta^2)^2}\eta.
\]
For small $\eta$, we use the Taylor expansion to find 
\[
\sqrt{\alpha+\eta^2} = \sqrt\alpha \left( 1 + \frac{\eta^2}{2\alpha} + O(\eta^4)\right),\quad \frac{1-\eta^2}{(1+\eta^2)^2} = 1-3\eta^2 + O(\eta^4)
\]
which gives
\[
\Omega = K\sqrt\alpha(1-\alpha)^\frac32 \eta + O(\eta^3).
\]

\subsection{Locked states are more restrictive in higher dimensions}
Of course, locked states can also be defined on $\bbs^m$ with $m\geq2$. Precisely, such a state satisfies
\[
\dot x_i = \Omega x_i,\quad \Omega\in \mathfrak{so}(m+1).
\]
However for $m\geq2$, this condition is significantly more restrictive. It requires all particle velocities to be generated by a single $\Omega$ and hence all pairwise distances must be preserved:
\[
\frac{d}{dt} \langle x_i,x_j\rangle=0,\quad i,j\in [N].
\]
This imposes many simultaneous algebraic constraints. Hence, for a generic genuinely high-dimensional configuration, these constraints are difficult to be satisfied.

On the other hand on $\bbs^1$, by contrast, the tangent space is one-dimensional. Thus, every tangent velocity is automatically described by a scalar angular velocity. It is therefore enough to match a common scalar velocity in order to obtain a locked state. 

We need to mention that if all particles on $\bbs^m$ with $m\geq2$ lie in a common two-dimensional plane, then that plane is invariant under the dynamics. In such a case, the dynamics reduces to an embedded $\bbs^1$ system, and $\bbs^1$-type phase locked states may appear. However, for genuinely high-dimensional random initial data, such coplanar configurations are nongeneric. 

\subsection{Comparison with  the classical Kuramoto model}
So far, we have shown that within the two-cluster family, the circle $\bbs^1$ exhibits a finite-size selection between stationary equilibria and rotating phase-locked states. Recall that the exact formula for $\Omega$
\[
\Omega = K(1-\alpha)^\frac32 \frac{(1-\eta^2)\sqrt{\alpha + \eta^2}}{(1+\eta^2)^2}\eta,\quad \eta = \frac kN
\]
shows that the branch is stationary if and only if the two populations are exactly balanced, i.e., $\eta=0$. On the other hand, for $\eta\neq0$, the population imbalance is compensated dynamically by collective rotation. Consequently, the even-$N$ branch $k=0$ is a stationary balanced equilibrium, whereas every even branch with $k\geq1$ and every odd branch is a rotating phase-locked state. This mechanism is called population-imbalance-induced phase-locking or finite-size parity selection between equilibria and phase-locked state.

In fact, this mechanism is reminiscent of phase-locking in the classical Kuramoto model
\[
\dot\theta_i = \nu_i + \frac\kp N \sum_{k=1}^N \sin(\theta_k - \theta_i)
\]
but its origin is fundamentally different. By summing the equation above over $i\in [N]$, we see that any phase-locked state has common frequency
\[
\Omega = \bar\nu := \frac1N \sum_{i=1}^N \nu_i.
\]
Thus, the collective drift in the Kuramoto model is given by the mean natural frequency and disappears in the mean-zero rotating frame. In particular, even when the natural frequencies are non-identical, a phase-locked state becomes stationary whenever $\bar \nu=0$. In the present model, by contrast, all oscillators are intrinsically identical, but a  common angular velocity is generated by an imbalanced cluster population $\Omega_\eta$. Hence, the drift is self-generated by finite-population asymmetry and the nonlinear mean-field coupling, rather than imposed by intrinsic frequency heterogeneity.

\subsection{Summary}
On $\bbs^1$, finite-$N$ imbalance cannot be eliminated by redistributing transverse components, because the fixed-projection level set consists of only two points.   For the pure two-cluster balanced mechanism, a finite population imbalance obstructs stationary balance and is instead converted into a common angular drift.  However, in higher dimensions, the transverse space has enough degrees of freedom to absorb the same imbalance geometrically. As a result, the dynamics can converge to a stationary balanced equilibrium rather than to a rotating locked state.

 \section{Numerical simulations} \label{sec:7}
 \setcounter{equation}{0}
 
 In this section, we provide numerical results that support our theoretical findings and suggest qualitative insights. For numerical implementation, we use the fourth-order Runge--Kutta method with a time step $\Delta t= 10^{-3}$. For coupling strengths, we choose
 \[
 \kp_1=0.4,\quad \kp_2=-1,\quad \kp_1+\kp_2=-0.6<0,\quad  \alpha = \frac{\kp_1}{-\kp_2} =0.4.
 \]

\subsection{Case of $\bbs^m$ with $m\geq2$} 
In this subsection, we consider $\bbs^m$ with $m\geq2$. For simulations, we define the total velocity functional
\[
V(\dot x(t)) := \sqrt{\frac1N \sum_{i=1}^N \|\dot x_i(t)\|^2} 
\]
and recall $q_i(t) =\langle x_i(t),x_c(t)\rangle$.

 \begin{figure}[h]
\centering
\mbox{
\subfigure{  
\includegraphics[width=1\textwidth]{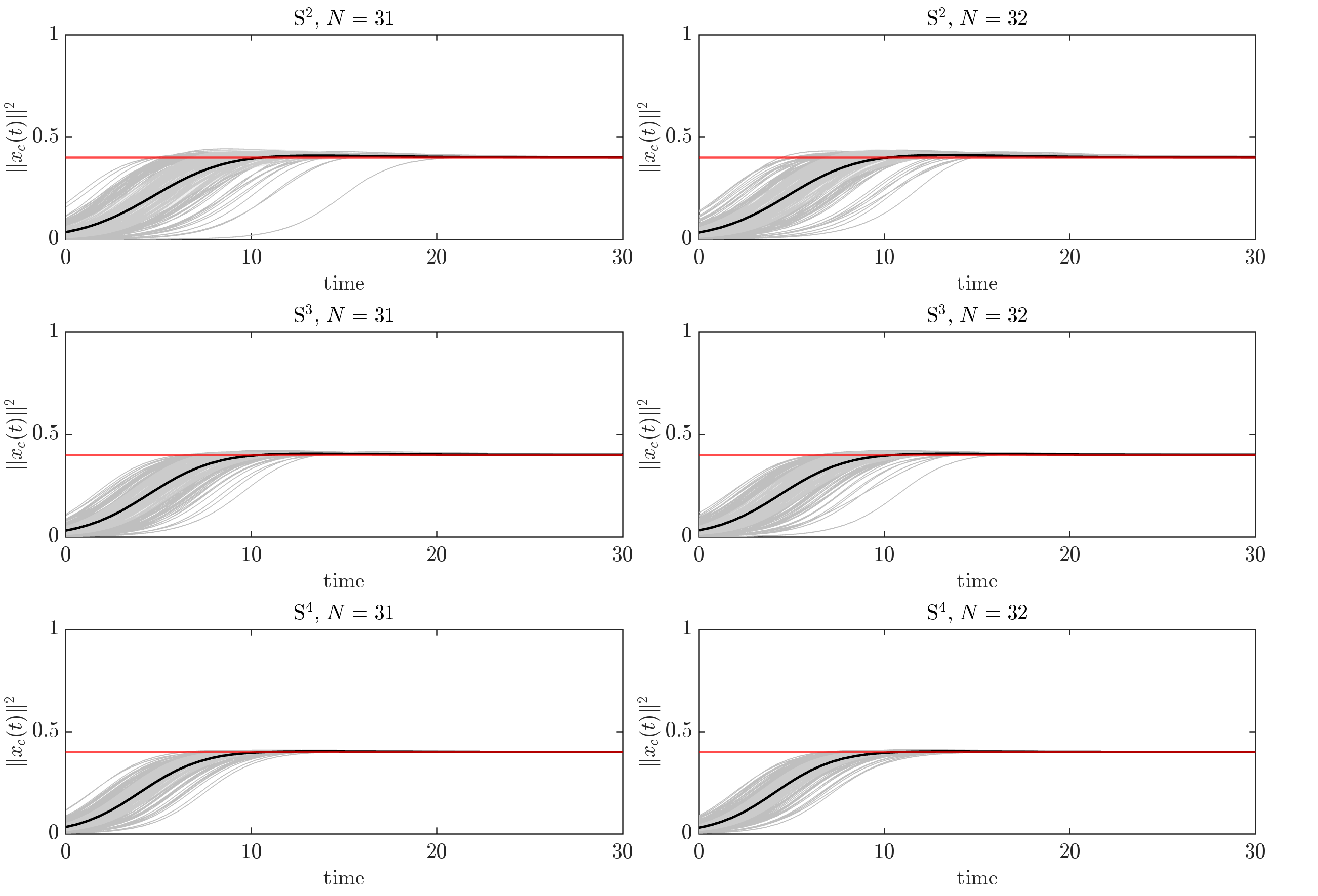}\label{fig:3-1}}
}
\caption{ The left and right panels correspond to $N=31$ (odd) and $N=32$ (even), respectively. The trajectories from 200 independent trials are shown in gray, whereas their ensemble average is represented by the solid black curve. The horizontal red line marks the value $\alpha = 0.4$.  The simulations show that $\|x_c(t)\|^2$ converges to $\alpha$ for the sampled random initial data, for both parities of $N$ and for all dimensions tested in the simulations.   }  \label{fig5-1}
\end{figure}
 
 \begin{figure}[h]
\centering
\mbox{
\subfigure{  
\includegraphics[width=1\textwidth]{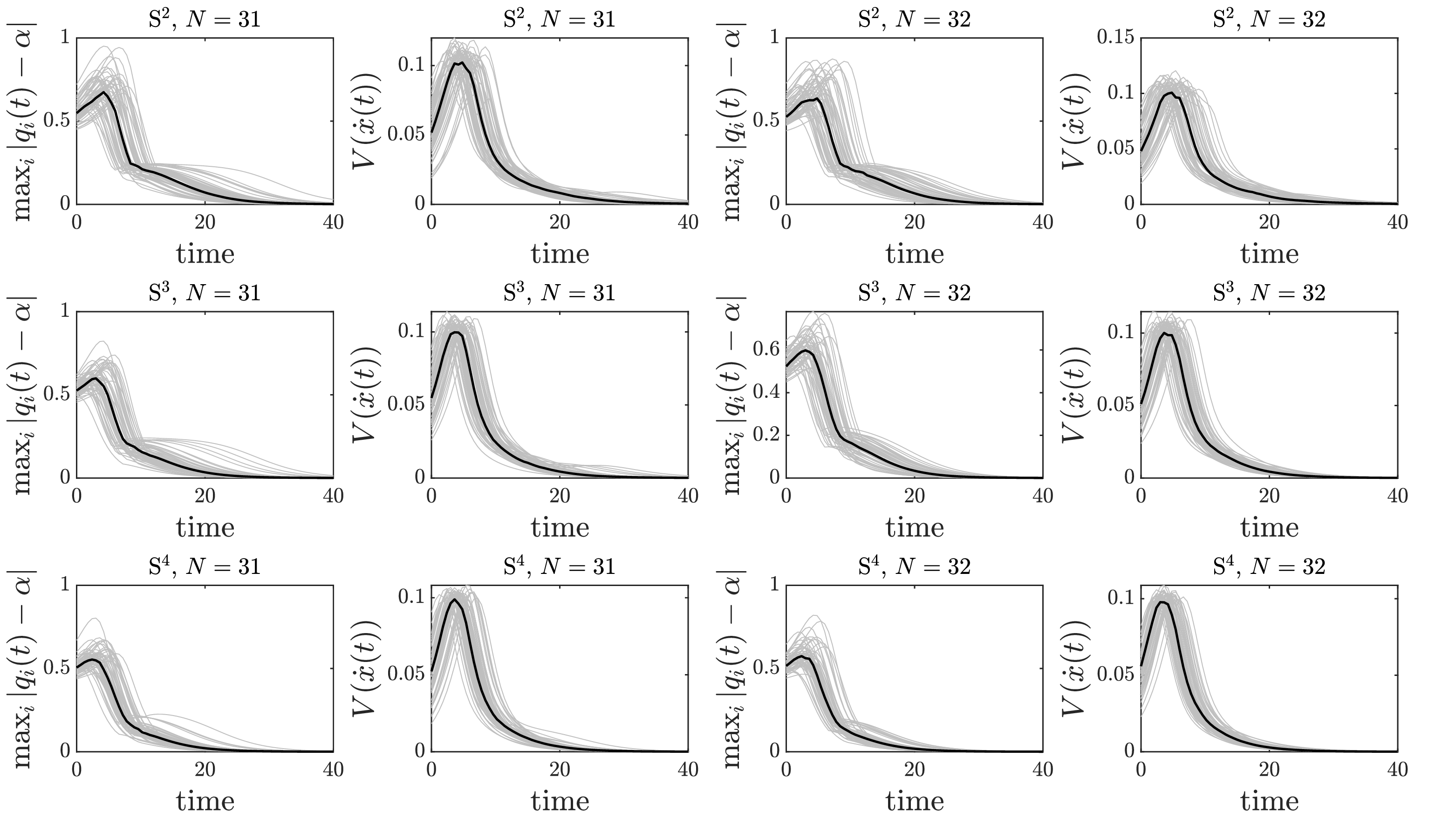}\label{fig:3-1}}
}
\caption{  The two left panels correspond to $N=31$ (odd), whereas the two right panels correspond to $N=32$ (even). The quantities $\max |q_i(t)-\alpha|$ and $V(\dot x(t))$ are plotted. The trajectories from 200 independent trials are shown in gray and their ensemble averages are presented by the solid black curves. The numerical results show that both $\max |q_i(t)-\alpha|$ and $V(\dot x(t))$ converge to zero for the sampled random initial data, for both parities of $N$ and for all dimensions tested in the simulations. }  \label{fig5-1}
\end{figure}

\subsection{Case of $\bbs^1$}
In this subsection, we consider $\bbs^1$. 
For simulations, we define
 \[
 V(\dot \theta(t)) :=  \left( \frac1N \sum_{i=1}^N (\dot \theta_i(t) - \bar{\dot \theta}(t))^2\right)^\frac12       ,\quad \bar{\dot\theta}(t) := \frac1N \sum_{i=1}^N \dot \theta_i(t)
 \]
 which measures the difference of the velocities around their mean. In particular, $V(\dot \theta(t))=0$ if and only if all oscillators have the same velocity, although the common limiting velocity need not vanish.

%

 \begin{figure}[H]
\centering
\mbox{
\subfigure{  
\includegraphics[width=0.9\textwidth]{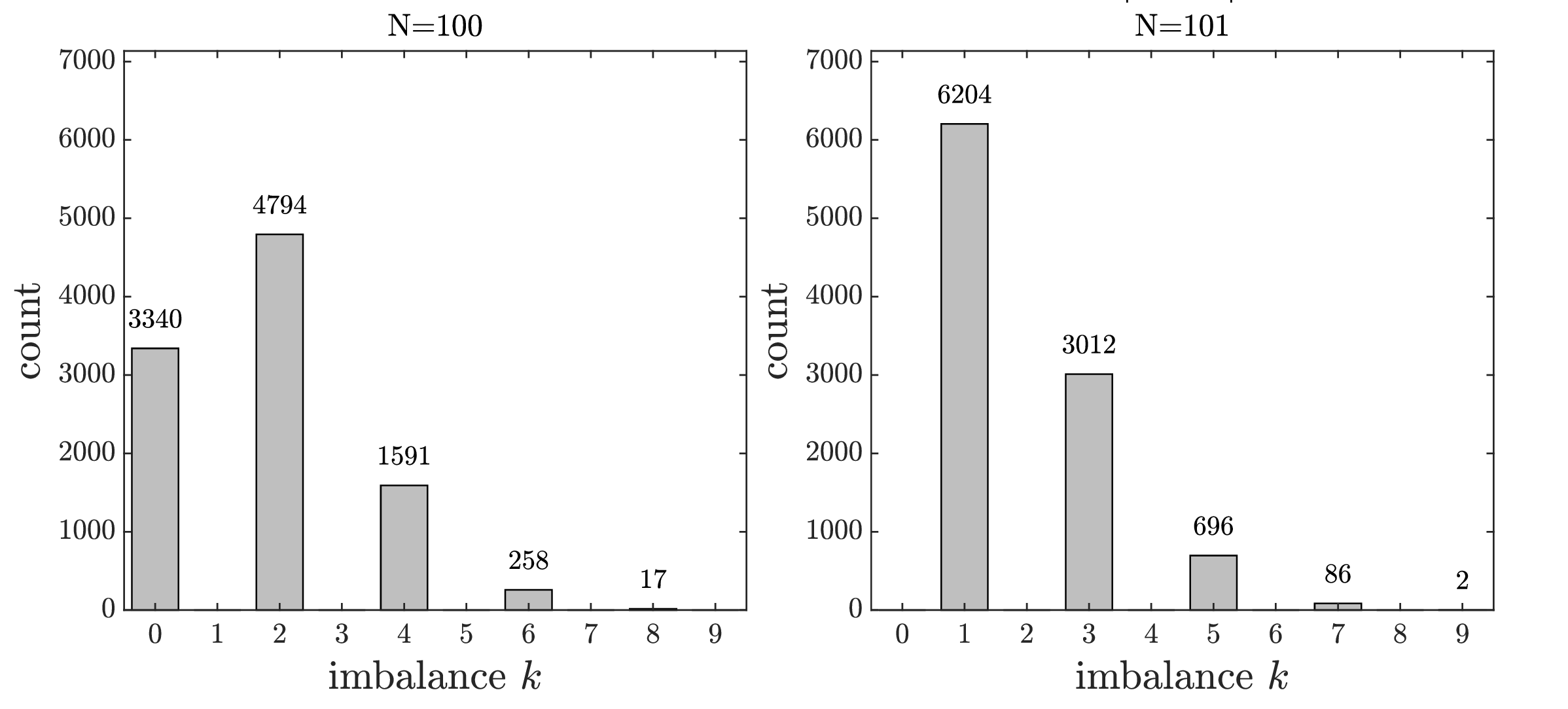}\label{fig:3-1}}
}
\caption{ For each of the two cases with $N=100$ and $N=101$, we performed 10,000 independent simulations. By writing the convergent value of the order parameter as $R_\infty^2 = R_{N,k}^2$, the histograms show the number of realizations associated with each imbalance $k$. Large-imbalance branches are rarely observed and the empirical distribution is concentrated on branches with small $k$.  }  \label{fig5-1}
\end{figure}

 \begin{figure}[H]
\centering
\mbox{
\subfigure{  
\includegraphics[width=0.9\textwidth]{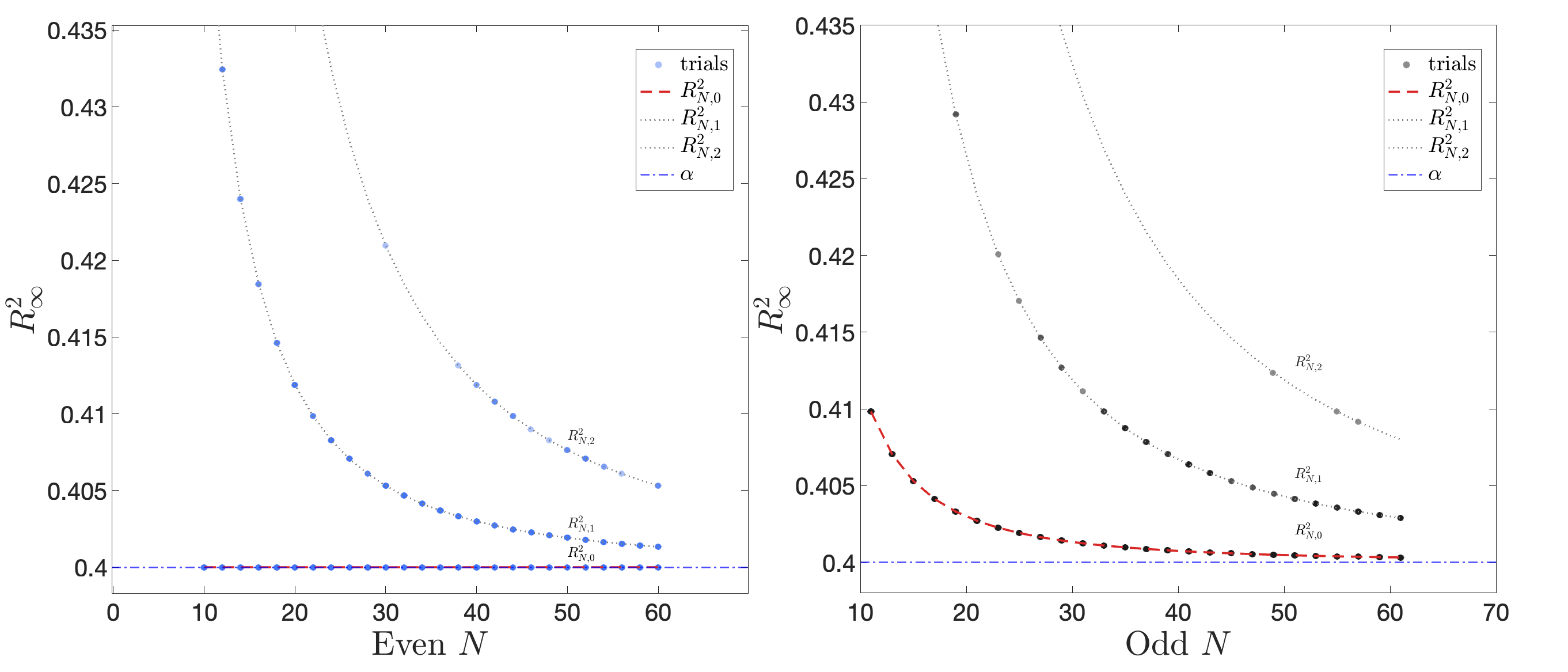}\label{fig:3-1}}
}
\caption{ The left and right panels show the convergent values of the squared order parameter for even and odd values of $N$, respectively. The markers represent the values obtained from numerical simulations and the curves indicate the theoretical branches: for each fixed imbalance $k$, the finite-size correction satisfies
\[
R_{N,k}^2-\alpha = O(N^{-2}),\quad N\to\infty.
\] The horizontal blue line denotes the limiting value $\alpha=0.4$. The numerical data lie on the predicted discrete branches which demonstrates the quantized selection of $R_\infty^2$. For even $N$, the balanced branch $R_{N,0}^2=\alpha$ is admissible. For odd $N$, exact population balance  $R_{N,0}^2=\alpha$ is not possible, so the lowest admissible branch remains slightly above $\alpha$, although it approaches $\alpha$ as $N$ increases. }  \label{fig5-1}
\end{figure}


 \begin{figure}[h]
\centering
\mbox{
\subfigure{  
\includegraphics[width=0.9\textwidth]{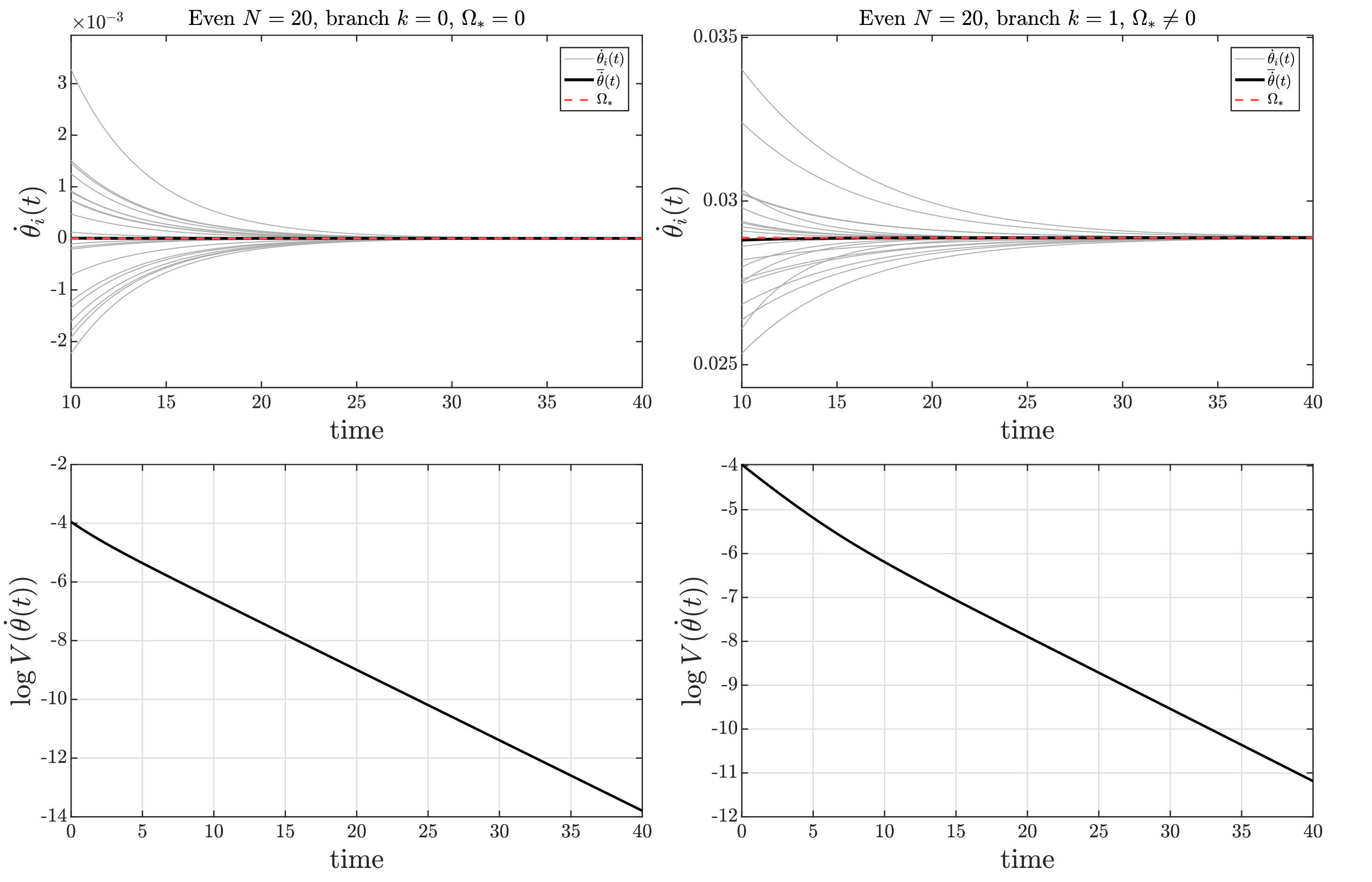}\label{fig:3-1}}
}
\caption{ For even $N$ (here, $N=20$), both convergence to an equilibrium and convergence to a phase-locked state may be observed. On the balanced branch $k=0$, the exponential decay of $\dot\theta_i(t)$ to zero shows convergence to an equilibrium. On the imbalanced branch $k=2$, by  contrast, all angular velocities converge to a common nonzero value $\Omega_*\neq0$ which means convergence to a rotating phase-locked state.}  \label{fig5-1}
\end{figure}

 \begin{figure}[H]
\centering
\mbox{
\subfigure{  
\includegraphics[width=0.9\textwidth]{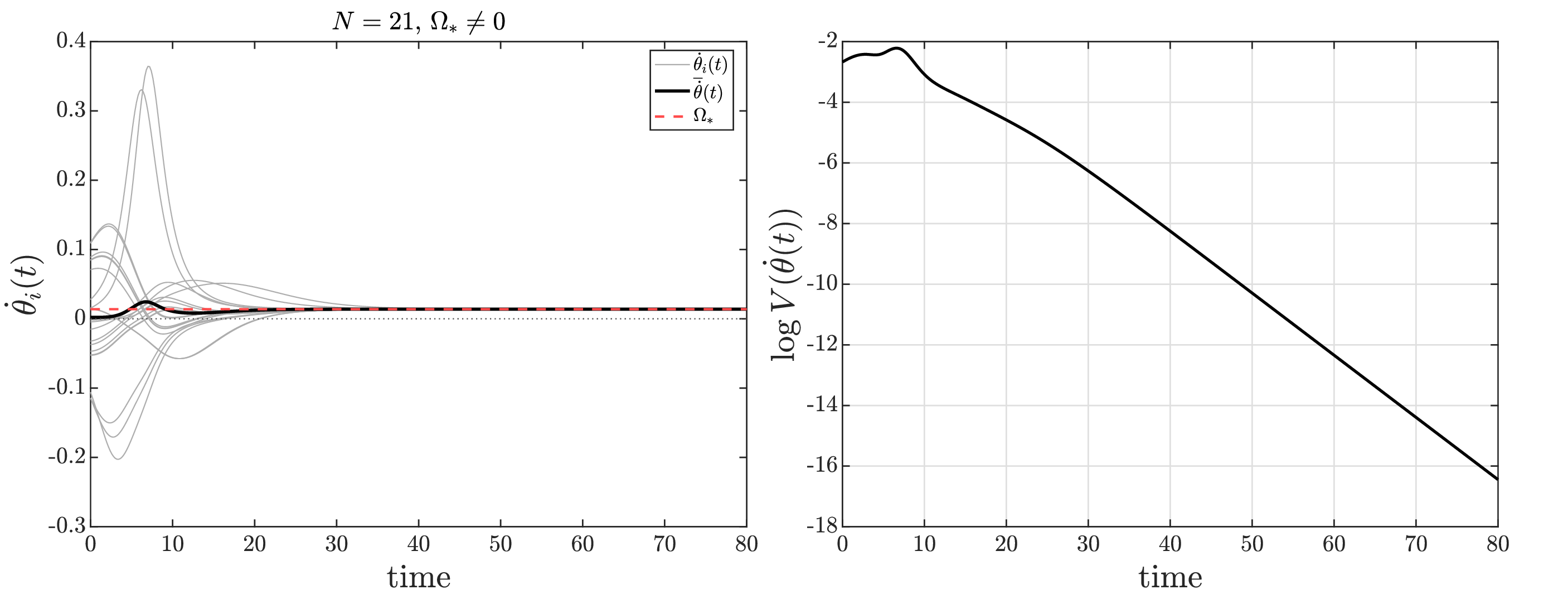}\label{fig:3-1}}
}
\caption{For odd $N$ (here, $N=21$), the balanced branch $k=0$ is not admissible. Consequently, the convergence of all $\dot\theta_i(t)$ to the same nonzero value $\Omega_*\neq0$ confirms convergence to a rotating phase-locked state rather than to an equilibrium. }  \label{fig5-1}
\end{figure}

%
%

\appendix
\section{Proof of Lemma \ref{L3.1}} \label{sec:app.A}
\setcounter{equation}{0}
Let $G^* = (g_{ij}^*)$ with $g_{ij}^*=(1-\alpha)(1-\cos(\theta_i-\theta_j))$ and $\theta_i = \frac{2\pi(i-1)}{N}$. Define 
\[
\mathbf{1} := (1,\cdots,1)^\top,\quad \mathbf{a} := (\cos\theta_1,\cdots,\cos\theta_N)^\top,\quad \mathbf{b}:= (\sin\theta_1,\cdots,\sin\theta_N)^\top.
\]
Then, we have
\[
G^* = (1-\alpha) (\mathbf{1}\mathbf{1}^\top - \mathbf{a}\mathbf{a}^\top - \mathbf{b}\mathbf{b}^\top). \] 
We observe
\[
\sum_{i=1}^N \cos\theta_i=0=\sum_{i=1}^N \sin \theta_i,\quad \sum_{i=1}^N \cos^2\theta_i=\frac N2=\sum_{i=1}^N \sin^2 \theta_i,\quad \sum_{i=1}^N \cos\theta_i\sin\theta_i=0.
\]
Since we have $\mathbf{1}\perp \mathbf{a}$ and $\mathbf{1}\perp \mathbf{b}$,
\[
\|\mathbf{a}\|^2 = \frac N2 = \|\mathbf{b}\|^2,\quad \mathbf{a}\perp \mathbf{b}.
\]
Below, we find eigenvalues for $G^*$ corresponding to eigenvectors $\mathbf{1},\mathbf{a}$ and $\mathbf{b}$.

For $\mathbf{1}$, we observe
\begin{align*}
G^*\mathbf{1} &= (1-\alpha) (\mathbf{1}\mathbf{1}^\top - \mathbf{a}\mathbf{a}^\top - \mathbf{b}\mathbf{b}^\top)\mathbf{1}  = N(1-\alpha)\mathbf{1}
\end{align*}
which shows that the eigenvalue of $\mathbf{1}$ is $N(1-\alpha)$. 

For $\mathbf{a}$, we observe
\begin{align*}
G^*\mathbf{a} &= (1-\alpha) (\mathbf{1}\mathbf{1}^\top - \mathbf{a}\mathbf{a}^\top - \mathbf{b}\mathbf{b}^\top)\mathbf{a}  = -\frac N2(1-\alpha)\mathbf{a}
\end{align*}
which shows that the eigenvalue of $\mathbf{a}$ is $-\frac N2(1-\alpha)$. 

Similarly, we find
\[
G^*\mathbf{b} = -\frac N2(1-\alpha)\mathbf{b}
\]
which shows that the eigenvalue of $\mathbf{b}$ is also $-\frac N2(1-\alpha)$. 

Lastly, for a vector $v$ which is perpendicular to $\mathbf{1},\mathbf{a},\mathbf{b}$, we have
\[
G^*v =0.
\]
Hence, all eigenvalues of $G^*$ are
\[
N(1-\alpha),   \quad -\frac N2(1-\alpha), \quad 0.
\]
We observe
\[
\sum_{j=1}^N g_{ij}^* = \sum_{j=1}^N (1-\langle x_i^*,x_j^*\rangle) = N(1-\alpha).
\]
Then, we have for $u=(u_1,\cdots,u_N)\in \bbr^N$, 
\begin{align*}
\mathcal Q_{*}(u) &= \frac{1}{2N} \sum_{i,j=1}^N g_{ij}^*(u_i + u_j)^2 = \frac1N \sum_{i=1}^N u_i^2 \left(\sum_{j=1}^N g_{ij}^*\right) + \frac1N \sum_{i,j=1}^N g_{ij}^* u_i u_j \\
&= (1-\alpha) \sum_{i=1}^N u_i^2 + \frac1N \langle u,G^*u\rangle  = \left\langle u, \left[(1-\alpha)I + \frac1N G^*\right]u     \right\rangle  \\
&\geq  \left( 1-\alpha + \frac{\lambda_{\min}(G^*)}{N} \right) \|u\|^2.
\end{align*}
 Hence, the eigenvalues of the quadratic form matrix $(1-\alpha)I + \frac1N G^*$ are
\[
2(1-\alpha), \,\, \frac{1-\alpha}{2}, \,\, 1-\alpha, 
\]
where the minimal value is 
\[
\chi_* = \frac{1-\alpha}{2}.
\]
Therefore, we have
\[
\frac{1}{2N} \sum_{i,j=1}^N g_{ij}^* (u_i + u_j)^2 \geq \frac{1-\alpha}{2}\sum_{i=1}^N u_i^2.
\]

\begin{remark} \label{rem:A.1} 
Define  the coercivity constant 
\[
\chi_* := 1-\alpha + \frac{\lambda_{min}(G^*)}{N}.
\]
Then, $\chi_*$ need not be strictly positive in the one-dimensional case $m=1$. Indeed, consider the balanced equilibria on $\bbs^1$ with $N=2s$ in which $s$ oscillators are located at $\phi_*$ and the remaining $s$ oscillators at $-\phi_*$ where $\cos\phi_* = \sqrt \alpha$. Then, $g_{ij}^* = 0$ if $i,j$ belong to the same cluster and $g_{ij}^* = 2(1-\alpha)$ if $i,j$ belong to different clusters.  
Hence, 
\[
G^* = 2(1-\alpha) \begin{pmatrix} 0 & J_s \\ J_s & 0 \end{pmatrix}, 
\]
where $J_s$ denotes the $s\times s$ matrix whose entries are all one. Then, the eigenvalues of $G^*$ are
\[
N(1-\alpha), \,\, 0, \,  -N(1-\alpha)
\]
which gives $\chi_*=0$. Thus, in contrast to the case of $m\geq2$, the quadratic form $\mathcal Q_*$ may fail to be coercive on $\bbs^1$. 

This spectral degeneration provides an algebraic manifestation of the geometric obstruction discussed in Proposition \ref{prop:fixed-projection}: on $\bbs^1$, the transverse latitude consists of only two points and the coercive gap available in higher dimensions may collapse. 

\end{remark}

\section{Proof of Theorem \ref{T5.2}} \label{sec:app.B}
\setcounter{equation}{0}
Since the proof is rather lengthy, we split into several steps. \newline

\noindent $\bullet$ (Step 1: Mean-zero dynamics)  Define the mean-zero perturbation $u(t) = (u_1(t),\cdots,u_N(t))$ by  
\[
u_i(t) := \theta_i(t) - \frac1N\sum_{k=1}^N \theta_k(t) - e_i,\quad e_i =  \begin{cases}
a_* = \frac{\ell}{N}\Delta_*,\quad i\in A, \\
b_* = -\frac{m}{N}\Delta_*,\quad i\in B, 
\end{cases}
\]
where $e_i$ was defined in Section \ref{sec:5.2}. Then, our goal is to show 
\[
\lim_{t\to\infty} |u_i(t)|=0,\quad i\in [N].
\]
Thus, once $u_i(t)\to0$ is established, the convergence $R(t)^2\to R_*^2$ follows immediately:
\[
R(t)^2 = \left| \frac1N \sum_{i=1}^N e^{\mi \theta_i(t)}\right|^2 = \left|e^{\mi c(t) } \frac1N \sum_{i=1}^N  e^{\mi (e_i+u_i(t))}\right|^2 = \left| \frac1N \sum_{i=1}^N  e^{\mi (e_i+u_i(t))}\right|^2 \to R_*^2 =R_{N,k}^2
\]
as in \eqref{F-12}. In what follows, we consider $u_i$ instead of $\theta_i$. Since the common phase does not affect the dynamics of the order parameter $R(t)$, we consider the mean-zero projection:
\[
P:= I - \frac1N \mathbf{1}\mathbf{1}^\top.
\]
For $e=(e_1,\cdots,e_N)\in \bbr^N$ and $u=(u_1,\cdots,u_N)\in\bbr^N$, we denote $Y(t):= e + u(t)$. Then, we have $Y(t) = P\Theta(t)$ and if we write $\dot \Theta = F(\Theta)$, then $Y(t)$ satisfies
\[
\dot Y(t) = P \dot \Theta(t) = PF(\Theta(t)) = PF(Y(t)), 
\]
where $F$ is invariant under the common phase shift:
\[
F(\Theta + \beta \mathbf{1}) = F(\Theta).
\]
Then, since $e$ is a two-cluster rotating wave profile with $F(e) = \Omega_*\mathbf{1}$, then we have $PF(e) =0$. Thus, we find
\[
\dot u = PF(e+u) = P [ F(e+u) - F(e)].
\]

\vspace{0.5cm}

\noindent $\bullet$ (Step 2: Local cluster norm) We define the local mean vector for $u_i$:
\[
\bar u_A(t):= \frac1m \sum_{i\in A} u_i(t),\quad \bar u_B(t):= \frac1\ell \sum_{j\in B} u_j(t).
\]
In addition, we define the maximal differences between each group and the local means:
\begin{align*}
&R_A(u(t)) = \max_{i\in A} |u_i(t) - \bar u_A(t)|,\quad R_B(u(t)) := \max_{j\in B} |u_j(t) -\bar u_B(t)|, \\
& H(u(t))   := |\bar u_A(t) - \bar u_B(t)|,\quad \Gamma(u(t))  := \max\big\{R_A(u(t)),R_B(u(t)),H(u(t))\big\}.
\end{align*}
Since the mean of $u$ is zero, we have $m\bar u_A  + \ell \bar u_B=0$ and hence
\[
|\bar u_A| = |q(\bar u_A - \bar u_B)|\leq H,\quad |\bar u_B| = |-p(\bar u_A - \bar u_B)|\leq H.
\]
Hence, if $i\in A$ and $j\in B$,  then  we have
\begin{align*}
|u_i | &\leq |u_i - \bar u_A| + |\bar u_A|\leq R_A + H \leq 2\Gamma, \\
|u_j| & \leq |u_j - \bar u_B| + |\bar u_B| \leq R_B+ H \leq 2\Gamma, \\
|u_i - u_j|&\leq |u_i - \bar u_A| +|\bar u_A  - \bar u_B| + |\bar u_B - u_j| \leq 3\Gamma
\end{align*}
which gives 
\[
\|u\|_\infty\leq 2\Gamma, \quad |u_i- u_j|\leq 3\Gamma. 
\]
For initial data, we recall
\[
\theta_i(0) = \psi + e_i + r_i,\quad r_i = \begin{cases}
\xi_i,\quad i\in A, \\
\eta_i,\quad i\in B.
\end{cases}
\]
Then, we use $\sum_{i=1}^N  e_i=0$ and $\bar r:= \frac1N \sum_{i=1}^N r_i$ to find 
\begin{align*}
u_i(0) = \theta_i(0) - e_i - c(0) = (\psi + e_i + r_i) - e_i - \frac1N \sum_{i=1}^N (\psi+e_i + r_i) = r_i - \bar r.
\end{align*}
In addition, we observe
\begin{align*}
R_A(0) &= \max_{i\in A} \left| \xi_i - \frac1m \sum_{k\in A}\xi_k\right| \leq 2\veps, \\
R_B(0) & =\max_{i\in B} \left| \eta_i - \frac1\ell \sum_{k\in B}\eta_k\right| \leq 2\veps, \\ 
H(0) & = \left| \frac1m \sum_{i\in A}\xi_i - \frac1\ell \sum_{k\in B} \eta_k     \right|\leq 2\veps
\end{align*}
which yields
\[
\Gamma(0) \leq 2\veps <\delta.
\]
$\bullet$ (Step 3: Nonlinear remainder estimate) We use the mean-value theorem to find
\[
F(e+u) - F(e) = DF(e)u + \mathcal R(u),\quad \mathcal R(u) := \int_0^1 [DF(e+su) - DF(e)] u ds.
\]
Here, we write $w:= su$ for $0\leq s\leq 1$, a point between $e$ and $e+u$. Now, we define a temporal set
\[
T_* := \sup \{ T>0: \Gamma(u(t))\leq\delta,\quad t\in [0,T)\}.
\]
Below, we estimate on $[0,T_*)$ where $\Gamma(u(t))\leq \delta$. Then, we have
\[
\|u\|_\infty \leq 2\delta,\quad \|w\|_\infty \leq 2\delta.
\]
Denote
\[
M(w) := DF(e+w) - DF(e),\quad w=su.
\]
We use for $\ell\neq i$, 
\begin{align*}
 \partial_{\theta_\ell} F_i(\Theta) &= \frac{\kp_1}{N}\cos(\theta_\ell - \theta_i) + \frac{\kp_2}{N^2} \sum_{k=1}^N \cos(\theta_\ell + \theta_k - 2\theta_i), \\
  \partial_{\theta_k} F_i(e+w)  & = \frac{\kp_1}{N}\cos(e_k +w_k  - e_i-w_i) + \frac{\kp_2}{N^2} \sum_{\ell=1}^N \cos(e_\ell+w_\ell + e_k +w_k - 2e_i-2w_i)
\end{align*}
to estimate
\begin{align*}
|(M(w))_{i\ell}|  &= |\partial_{\theta_\ell} F_i(e+w) - \partial_{\theta_\ell} F_i(e)| \\
& \leq   \frac{\kp_1}{N}    \mathcal I_{11} + \frac{|\kp_2|}{N^2} \sum_{k=1}^N  \mathcal I_{12}.
\end{align*}
For $\mathcal I_{11}$, we observe
\begin{align*}
\mathcal I_{11} &= |\cos(e_k +w_k  - e_i-w_i) - \cos(e_k - e_i)     |\leq 2\sin (2\delta), 
\end{align*}
where the argument difference is less than or equal to $4\delta$ and  we used
\[
|\cos(x+\eta) - \cos x | = \left| 2\sin \left(x+\frac\eta2\right) \sin \frac\eta2\right| \leq 2\sin\frac{|\eta|}{2}.
\]
Similarly for $\mathcal I_{12}$, we observe
\begin{align*}
\mathcal I_{12} \leq 2\sin (4\delta), 
\end{align*}
where the argument difference is less than or equal to $8\delta$. Hence, we have
\[
| (M(w))_{i\ell}|\leq \frac{2\kp_1\sin (2\delta) + 2K \sin (4\delta)}{N},\quad i\neq \ell.
\]
Since both $DF(e+w)$ and $DF(e)$ have zero row sum, so does $M(w)$:
\[
(M(w)u)_i = \sum_{\ell \neq i} (M(w))_{i\ell} (u_\ell - u_i).
\]
We use $|u_\ell - u_i|\leq 3\Gamma$ to find
\[
|(M(w)u)_i| \leq \sum_{\ell \neq i} \frac{2\kp_1\sin (2\delta) + 2K \sin (4\delta)}{N} \cdot 3\Gamma \leq 3 (2\kp_1\sin (2\delta) + 2K \sin (4\delta)) \Gamma
\]
which gives
\[
\|M(w)u\|_\infty \leq 3 (2\kp_1\sin (2\delta) + 2K \sin (4\delta)) \Gamma.
\]
Hence, we find
\[
\| \mathcal R(u)\|_\infty = \left\|\int_0^1 M(su) uds \right\|_\infty \leq 3 (2\kp_1\sin (2\delta) + 2K \sin (4\delta)) \Gamma= : 3a_\delta \Gamma.
\]
$\bullet$ (Step 4: Block structure of $DF(e)$) For simplicity, we write 
\[
J:= DF(e),\quad c_*:=\cos\Delta_*,\quad c_2:=\cos(2\Delta_*)
\]
and recall
\[
\partial_{\theta_k} F_i(\theta)  = \frac{\kp_1}{N}\cos(\theta_k - \theta_i) + \frac{\kp_2}{N^2} \sum_{k=1}^N \cos(\theta_\ell + \theta_k - 2\theta_i).
\]
Below, we only calculate the off-diagonal term, since the row sum of $J$ is zero. Precisely, since $F_i$ is invariant under a common phase shift
\[
F_i(\Theta + \beta \mathbf{1} )= F_i(\Theta),
\]
we differentiate the above relation with respect to $\beta$ to find
\[
0 = \frac{d}{d\beta } F_i(\Theta + \beta \mathbf{1}) = \sum_{k=1}^N \partial_{\theta_k} F_i(\Theta)=  \sum_{k=1}^N J_{ik}
\]
which gives the diagonal entry:
\[
J_{ii} =-\sum_{k\neq i} J_{ik}.
\]
In addition, for later use, we write
\[
(Ju)_i = J_{ii}u_i + \sum_{k\neq i} J_{ik}u_k = \sum_{k\neq i} J_{ik}(u_k - u_i).
\]
We consider four cases. 

$\diamond$ (Case A: $i,k\in A$, $i\neq k$) In this case, we have
\[
e_i = e_k = a_*.
\]
Hence, the pairwise term becomes
\[
\frac{\kp_1}{N} \cos(e_k - e_i ) = \frac{\kp_1}{N}
\]
and the three-body term becomes
\begin{align*}
\frac{\kp_2}{N^2} \sum_{\ell=1}^N   \cos (e_\ell + e_k - 2e_i) &= \frac{\kp_2}{N^2} \sum_{\ell \in A}   \cos (e_\ell + e_k - 2e_i) + \frac{\kp_2}{N^2} \sum_{\ell \in B}     \cos (e_\ell + e_k - 2e_i) \\
& =  \frac{\kp_2}{N^2} \sum_{\ell \in A}    \cos (a_* + a_* - 2a_* ) + \frac{\kp_2}{N^2} \sum_{\ell \in B}    \cos (a_* + b_* - 2a_*)  \\
& = \frac{\kp_2}{N^2} ( m + \ell \cos \Delta_*) = \frac{\kp_2}{N^2}( m+ \ell c_*) = \frac{\kp_2}{N} (p+ qc_*).
\end{align*}
Hence, 
\[
J_{\textup{off}}^{AA}: = \frac1N ( \kp_1+\kp_2(p+qc_*)).
\]

$\diamond$ (Case B: $i,k\in B$, $i\neq k$) In this case, we have
\[
e_i = e_k = b_*.
\]
Similar to Case A, we have
\[
J_{\textup{off}}^{BB} := \frac1N ( \kp_1+\kp_2(q+pc_*)).
\]

$\diamond$ (Case C: $(i,k)\in A\times B$) In this case, we have
\[
e_i = a_*,\quad e_k = b_*.
\]
Then, the pairwise term becomes
\[
\frac{\kp_1}{N}\cos(e_k - e_i) = \frac{\kp_1}{N} \cos(b_*- a_*) = \frac{\kp_1}{N}c_*
\]
and the three-body part becomes
\begin{align*}
\frac{\kp_2}{N^2} \sum_{\ell=1}^N   \cos (e_\ell + e_k - 2e_i) &= \frac{\kp_2}{N^2} \sum_{\ell \in A}   \cos (e_\ell + e_k - 2e_i) + \frac{\kp_2}{N^2} \sum_{\ell \in B}     \cos (e_\ell + e_k - 2e_i) \\
&= \frac{\kp_2}{N^2} (mc_* + \ell c_2) = \frac{\kp_2}{N}(pc_* + qc_2).
\end{align*}
Hence, we have
\[
J^{AB}:= \frac1N ( \kp_1c_* + \kp_2(pc_* + qc_2)).
\]

$\diamond$ (Case D: $(i,k)\in B\times A$) In this case, we have
\[
e_i = b_*,\quad e_k = a_*.
\]
Similar to Case C, we have
\[
J^{BA}:= \frac1N ( \kp_1c_* + \kp_2(qc_* + pc_2)).
\]

$\bullet$ (Step 5: Dissipative representation for $Ju$) We here calculate $Ju$. For $i\in A$, 
\begin{align*}
(Ju)_i &= J_{\textup{off}}^{AA} \sum_{i\neq k, k\in A} (u_k - u_i) + J^{AB} \sum_{j \in B} (u_j - u_i) \\
& = -J_{\textup{off}}^{AA} m (u_i - \bar u_A) +  J^{AB}\ell( \bar u_B - u_i) \\
& = -(m J_{\textup{off}}^{AA} + \ell J^{AB})(u_i - \bar u_A) - \ell J^{AB} (\bar u_A - \bar u_B) \\
& = -\Lambda_A (u_i - \bar u_A) -  \ell J^{AB} (\bar u_A - \bar u_B).
\end{align*}
Here, $\Lambda_A$ introduced in \eqref{fourconstants} is defined as
\[
\Lambda_A = mJ_{\textup{off}}^{AA} + \ell J^{AB}.
\]
Similarly for $j\in B$, we find
\[
(Ju)_j = - \Lambda_B(u_j  - \bar u_B) + m J^{BA} (\bar u_A - \bar u_B), 
\]
where $\Lambda_B$ introduced in \eqref{fourconstants} is defined as
\[
\Lambda_B = \ell J_{\textup{off}}^{BB} + m J^{BA}.
\]
Then, we also obtain
\[
\overline{(Ju)}_A:= \frac1m \sum_{i\in A} (Ju)_i = - \ell J^{AB} (\bar u_A - \bar u_B),\quad \overline{(Ju)}_B:= \frac1\ell \sum_{j\in B} (Ju)_j = mJ^{BA} (\bar u_A - \bar u_B).
\]
Hence, 
\begin{align*}
(Ju)_i - \overline{(Ju)}_A & = - \Lambda_A(u_i - \bar u_A), \quad i\in A, \\
(Ju)_j - \overline{(Ju)}_B & = - \Lambda_B(u_j - \bar u_B), \quad j\in B, \\
\overline{(Ju)}_A - \overline{(Ju)}_B & = - (\ell J^{AB} + m J^{BA})(\bar u_A- \bar u_B) =-\Lambda_\Delta ( \bar u_A - \bar u_B), 
\end{align*}
where $\Lambda_\Delta$ in \eqref{fourconstants} is defined as
\[
\Lambda_\Delta = \ell J^{AB} + m J^{BA}.
\]
$\bullet$ (Step 6: Differential inequalities for $R_A, R_B$ and $H$) For simplicity, we write
\[
Z:= Ju + \mathcal R(u) = DF(e)u + \mathcal R(u)
\]
to rewrite
\[
\dot u = PZ 
\]
whose componentwise form is
\[
\dot u_i = Z_i - \bar Z,\quad \bar Z = \frac1N \sum_{k=1}^N Z_k,\quad \bar Z_A := \frac1m \sum_{i\in A} Z_i,\quad \bar Z_B := \frac1\ell \sum_{j\in B} Z_j.
\]

$\diamond$ (Case 1: $R_A$) For $i\in A$, we denote
\[
v_i^A:= u_i - \bar u_A.
\]
Then, we observe
\begin{align*}
\dot v_i^A &= \dot u_i - \bar{\dot u}_A = Z_i - \bar Z_A \\
& = \Big[ (Ju)_i - \overline{(Ju)}_A    \Big]  + [\mathcal R_i - \bar{\mathcal R}_A] = - \Lambda_A v_i^A + [\mathcal R_i - \bar{\mathcal R}_A].
\end{align*}
For the last term, we observe
\[
| \mathcal R_i - \bar{\mathcal R}_A| \leq 2\|\mathcal R\|_\infty \leq 6a_\delta \Gamma. 
\]
Hence, we have
\[
\dot R_A \leq -\Lambda_A R_A + 6a_\delta \Gamma.
\]

$\diamond$ (Case 2: $R_B$) For $j\in B$, we denote
\[
v_j^B:= u_j - \bar u_B.
\]
Then similar to Case 1, we observe
\[
\dot v_j^B = - \Lambda_B v_j^B + (\mathcal R_j - \bar{\mathcal R}_B).
\]
Hence, we have
\[
\dot R_B \leq - \Lambda_B R_B + 6a_\delta \Gamma.
\]

$\diamond$ (Case 3: $H$) We denote
\[
h:= \bar u_A - \bar u_B.
\]
Then, we have
\begin{align*}
\dot h &= \frac{d}{dt} (\bar u_A - \bar u_B) = \bar Z_A - \bar Z_B \\
& = \Big[ \overline{(Ju)}_A -  \overline{(Ju)}_B       \Big]    + [\bar{\mathcal R}_A - \bar{\mathcal R}_B] \\
& = - \Lambda_\Delta h + [\bar{\mathcal R}_A - \bar{\mathcal R}_B].
\end{align*}
Since the last term is also bounded by $6a_\delta \Gamma$, we find
\[
\dot H \leq - \Lambda_\Delta H + 6a_\delta \Gamma.
\]
Hence, $\Gamma=\max\{R_A,R_B,H\}$ satisfies
\[
\dot \Gamma \leq - (\Lambda_0 -6a_\delta)\Gamma = -\lambda_\delta \Gamma,\quad t\in [0,T_*)
\]
whenever $\Gamma(t) \leq \delta$ on $t\in [0,T_*)$. \newline

\noindent $\bullet$ (Step 7: Bootstrap argument) Since we assume $\Gamma(0)\leq 2\veps<\delta$, we have
\[
\Gamma(t) \leq \Gamma(0)e^{-\lambda_\delta t} \leq 2\veps e^{-\lambda_\delta t} <\delta,\quad t\in [0,T_*).
\]
However by the definition of $T_*$, we should have
\[
\Gamma(T_*) = \delta
\]
which contradicts. Hence, $T_*=\infty$ and 
\[
\Gamma(t) \leq \Gamma(0) e^{-\lambda_\delta t}.
\]
Hence, $\Gamma (t)\to0$ and $u_i(t) \to 0$ for $i\in [N]$. \newline

\noindent $\bullet$ (Step 8: Convergence of the angular velocities): Recall that
\[
\Theta(t) = c(t)\mathbf{1}  + e + u(t),\quad c(t) := \frac1N \sum_{k=1}^N \theta_k(t).
\]
Since the vector field is invariant under the common phase, we have 
\[
F(\Theta(t)) = F(c(t)\mathbf{1} + e + u(t)) = F(e+u(t)).
\]
Thus, we recall $F(e) = \Omega_*\mathbf{1}$ to find
\[
\dot \Theta(t) -\Omega_*\mathbf{1} = F(e+u(t)) - F(e).
\]
Since $F$ is smooth, there exists $L_\delta>0$ near $e$ such that
\[
\|\dot \Theta(t) - \Omega_*\mathbf{1}\|_\infty = \|F(e+u) - F(e)\|_\infty \leq L_\delta \|u\|_\infty.
\]

\noindent $\bullet$ (Step 9: Convergence of $R(t)^2$) We recall
\[
R(t) = \left| \frac1N \sum_{i=1}^N e^{\mi(e_i + u_i(t))}     \right|,\quad R_* = \left|   \frac1N \sum_{i=1}^N e^{\mi e_i}   \right|.
\]
Hence, we observe
\begin{align*}
|R(t)^2 - R_*^2| \leq 2|R(t) - R_*|\leq 2\left| \frac1N \sum_{i=1}^N e^{\mi e_i}(e^{\mi u_i(t)}-1)    \right| \leq 2\|u(t)\|_\infty \leq 4\Gamma(t) \to0.
\end{align*}

\section{Complementary dynamics} \label{sec:app.C}
\setcounter{equation}{0}

In this Appendix, although these results are not directly related to the main theme of this paper, we include three complementary dynamical regimes to provide a more complete picture and to extend the analysis initiated in our previous work \cite{HK}.

\subsection{A basin for complete desynchronization} 
\begin{theorem}
Suppose that 
\[
\kp_1<0,\quad R_* := \begin{cases}
\vspace{0.2cm} 1,\quad \kp_2=0, \\
\displaystyle \min \left\{1,\frac{-\kp_1}{|\kp_2|}\right\}\quad \kp_2\neq0.
\end{cases}
\]
If initial data   satisfy 
\[
X^0 \notin \mathcal M_{>\frac N2},\quad |R(0)|<R_*,
\]
where
\[
\mathcal M_{>\frac N2} := \bigcup_{I \subseteq[N], |I|>\frac N2}  \{  X\in (\bbs^m)^N: x_i = x_j,\quad \forall i,j\in I \}, 
\]
then complete desynchronization occurs.
\end{theorem}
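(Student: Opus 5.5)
The proof will use the order-parameter identity in Lemma \ref{lem:identity}(i) as a Lyapunov estimate, and then a local repulsion argument to exclude the bipolar limits. Since $-\kp_2(\alpha-q_i)=\kp_1+\kp_2q_i$, identity (i) can be rewritten, without any sign assumption on $\kp_2$, as
\[
\frac12\frac{d}{dt}R^2=\frac1N\sum_{i=1}^N(\kp_1+\kp_2q_i)(R^2-q_i^2).
\]
Because $|q_i|\le R$, we have $R^2-q_i^2\ge0$ and $\kp_1+\kp_2q_i\le \kp_1+|\kp_2|R$. Hence the first step is to show that $\{R<R_*\}$ is positively invariant and that $R^2$ is nonincreasing on it.

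More precisely, if $R(t)\le R(0)<R_*$, then
\[
\kp_1+\kp_2q_i\le -c_0:=\kp_1+|\kp_2|R(0)<0,
\]
so $\frac{d}{dt}R^2\le0$. A continuity (bootstrap) argument then gives $R(t)\le R(0)$ for all $t\ge0$. In the case $\kp_2=0$ we have $R_*=1$ and the bound is immediate. Consequently
\[
\frac{d}{dt}R^2\le-\frac{2c_0}{N}\sum_i(R^2-q_i^2),
\]
so $R(t)\downarrow R_\infty\ge0$ and
\[
\int_0^\infty\sum_i(R^2-q_i^2)\,dt<\infty.
\]
Since all right-hand sides are smooth on a compact phase space, Barbalat's lemma yields $R^2-q_i^2\to0$ for every $i$. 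By LaSalle's principle, every $\omega$-limit point $X^\infty$ satisfies either $R=0$, or $x_i^\infty=\pm p$ for all $i$ with
\[
R_\infty=\frac{n_+-n_-}{N}>0.
\]
In the second case $n_+>N/2$, so $X^\infty\in\mathcal M_{>\frac N2}$.

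The remaining task is to exclude $R_\infty>0$, and this is the main obstacle. The set $\mathcal M_{>\frac N2}$ and its complement are both invariant: coincident agents stay coincident, and by backward uniqueness non-coincident agents never merge. So $X(t)\notin\mathcal M_{>\frac N2}$ for all $t$, but a priori the trajectory could still accumulate on $\mathcal M_{>\frac N2}$.

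To rule this out I will show that the majority pole cluster is locally repelling. The analogue of the computation \eqref{G-30}, with $K(\alpha-\sigma_iR)$ replaced by $\kp_1+\kp_2\sigma_iR$, gives for $i,j\in P_+$
\[
\frac{d}{dt}(\eta_i-\eta_j)=-R(\kp_1+\kp_2R)(\eta_i-\eta_j),
\]
and the rate $-R(\kp_1+\kp_2R)\ge Rc_0>0$ is positive. At the nonlinear level I will use the $\kp$-form of Lemma \ref{lem:identity}(ii) for $g_{ij}=1-h_{ij}$. For $x_i,x_j$ in a small neighborhood of $p$, where $q_i,q_j\approx R_\infty$, I expect to obtain
\[
\dot g_{ij}\ge c\,g_{ij}-C(u_i-u_j)^2,\qquad c>0.
\]
The quadratic error term is small relative to $g_{ij}$ because $|q_i-q_j|\lesssim R\sqrt{g_{ij}}$ and $x_c$ is nearly parallel to $p$.

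I then take a maximal subset $I$ with $|I|>N/2$ whose members are near $p$ at some large time, and let $D_I:=\max_{i,j\in I}g_{ij}$. Then $D_I>0$ by invariance of the complement, and $D_I$ grows exponentially while the configuration stays near a majority bipolar state. This contradicts $D_I\to0$, which LaSalle's principle forces along the relevant subsequence once $R$ is close to $R_\infty$ and the $q_i$ are close to $\pm R_\infty$ uniformly in time.

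The delicate point is this last step. Accumulation only along a subsequence must be upgraded to eventual confinement near the pole cluster. Here I would exploit that $R^2-q_i^2\to0$ holds for \emph{all} $t$, not just along subsequences, which forces each $x_i(t)$ to stay eventually within $o(1)$ of $\pm x_c/R$ with fixed sign assignments. Hence $D_I(t)\to0$ along the full trajectory, while the differential inequality for $D_I$ forbids this. Therefore $R_\infty=0$, that is, complete desynchronization occurs.
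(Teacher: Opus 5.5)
Your proposal is correct and follows essentially the same route as the paper. That route is: monotonicity of $R$ while $R<R_*$; Barbalat giving $R^2-q_i^2\to0$ along the whole trajectory; eventually fixed signs $q_i\to\sigma_i c$, which force a strict majority $I_+$; exponential growth $\dot g_{ij}\ge\lambda g_{ij}$ for pairs in $I_+$; and a contradiction with $g_{ij}\to0$ via uniqueness, since agents that coincide at one time coincide for all times. The differences are cosmetic: the paper skips the LaSalle/$\omega$-limit detour, argues pairwise rather than through $D_I$, and controls $\kp_2(q_i-q_j)^2$ by splitting on the sign of $\kp_2$ (using $(q_i-q_j)^2\le 2R^2g_{ij}$ when $\kp_2<0$) instead of your near-parallelism estimate.
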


\begin{remark}
$X^0 \notin \mathcal M_{>\frac N2} $ is equivalent to 
\[
\max_{y\in\bbsm} |\{i : x_i^0=y\}| \leq \frac N2. 
\]
In other words, the initial configuration does not contain a strict majority synchronized cluster, or we assume that no point on the sphere is initially occupied by more than half of the agents. 
This condition is automatically satisfied if the initial positions are pairwise distinct.  
\end{remark}

\begin{proof}
We observe
\[
\frac12\frac{d}{dt} |x_c|^2 = \frac1N \sum_{i=1}^N (\kp_1+\kp_2\langle x_i,x_c\rangle) (|x_c|^2-\langle x_i,x_c\rangle^2).
\]
and write $q_i = \langle x_i,x_c\rangle $  and $g_{ij} = 1-\langle x_i,x_j\rangle$ 
\[
\dot g_{ij} = \kp_2(q_i - q_j)^2 - \Big(  \kp_1(q_i+q_j) + \kp_2(q_i^2 + q_j^2)\Big) g_{ij}.
\]
In addition, since
\[
\kp_1+\kp_2\langle x_i,x_c\rangle \leq \kp_1+|\kp_2||x_c(t)|,
\]
 as long as $|x_c(t)|<R_*$,  the order parameter is non-increasing. Here, we assume $|x_c(0)|<R_*$ and we have
\[
|x_c(t)|<R_*,\quad t>0. 
\]
We write
\[
c:= R_\infty:= \lim_{t\to\infty} R(t)\geq0
\]
and define
\[
\delta_0 := -(\kp_1+|\kp_2|R_0)>0.
\]
Since we have
\[
\kp_1+\kp_2\langle x_i,x_c\rangle \leq \kp_1+|\kp_2| R(t) \leq \kp_1+|\kp_2|R_0 = -\delta_0, \quad i\in [N],\quad t>0,
\]
we have
\[
\frac12\frac{d}{dt} |x_c|^2 \leq -\frac{\delta_0}{ N} \sum_{i=1}^N (|x_c|^2 - \langle x_i,x_c\rangle^2)
\]
which gives from   Barbalat's lemma 
\[
\lim_{t\to\infty} \langle x_i,x_c\rangle^2 = c^2,\quad \lim_{t\to\infty} |\langle x_i,x_c\rangle| = c.
\]
If $c=0$, then complete desynchronization already follows. Hence, in what follows, we assume $c>0$. 
For sufficiently large $t$, $\langle x_i,x_c\rangle$ cannot cross zero, and so each $\langle x_i,x_c\rangle$ has a fixed eventual sign. Hence, we have
\[
\lim_{t\to\infty} \langle x_i,x_c\rangle =  \sigma_i c.
\]
Our goal is to show that
\[
R_\infty>0  \quad \Longleftrightarrow \quad  X^0 \in \mathcal M_{>\frac N2}. 
\]
$(\Longleftarrow)$: If $X^0 \in \mathcal M_{>\frac N2}$, then $R(t)$ cannot converge to zero. Precisely, if there exists $I\subseteq \{1,\cdots,N\}$ with $|I|=n >\frac N2$ and $x_i^0 = x_j^0$ for $i,j \in I$, then we have
\[
R(t) \geq \frac{2n-N}{N}>0.
\]
First, we know that if $x_i^0 = x_j^0$, then $x_i(t) = x_j(t)$ for $t\geq0$. Thus, for $i\in I$, we write 
\[
x_i(t)= y_i(t),\quad i \in I.
\]
Then, we observe
\[
Nx_c(t) = \sum_{k=1}^N x_k(t) = ny(t) + \sum_{k\neq i} x_k(t)
\]
which gives
\[
N|x_c(t)| \geq  \left| n|y(t)| - \left\|  \sum_{k\neq I} x_k(t)  \right\|     \right| \geq n -(N-n) = 2n-N.
\]
Hence, $R(t)$ satisfies
\[
R(t) \geq \frac{2n-N}{N}>0.
\]
$(\Longrightarrow)$: Assume $R_\infty = c>0$. Then, we will show that the set $I_+:=\{ i: \sigma_i =1\}$ has strict majority, i.e., 
\[
|I_+| >\frac N2.
\]
Since $R(t)\to c>0$ and $q_i(t) \to \sigma_i c$ with $\sigma \in \{\pm1\}$, we have
\[
c^2 = \frac cN \sum_{i=1}^N \sigma_i,\quad \textup{or, equivalently,} \quad c = \frac1N \sum_{i=1}^N \sigma_i.
\]
Denote
\[
N_+ := |\{  i : \sigma_i = 1\}|,\quad N_- := |\{  i : \sigma_i = -1\}|.
\]
Then, we have
\[
N_+ + N_- = N,\quad c = \frac{N_+-N_-}{N}.
\]
Since $c>0$, we have
\[
N_+>\frac N2.
\]
Define
\[
\mathcal M_{>\frac N2} := \bigcup_{I \subseteq 1,\cdots,N, |I|>\frac N2}  \{x_i^0 = x_j^0,\quad i,j\in I\}. 
\]
Conversely, if $\kp_1<0$ and $R_\infty>0$, then the initial data must belong to $\mathcal M_{>\frac N2}$. In other words, 
\[
R_\infty >0 ~~ \Longrightarrow ~~ X^0 \in \mathcal M_{>\frac N2}.
\]
Previously, we have shown that $|I_+|>\frac N2$. Take any pair $i,j \in I_+$. Then, 
\[
x_i(t) - x_j(t) \to 0,\quad g_{ij}\to0.
\]
We split it into two cases. First, suppose that $\kp_2\geq0$. Since 
\[
\kp_2(q_i-q_j)^2\geq0 
\]
and
\[
-[\kp_1(q_i+q_j) + \kp_2(q_i^2 + q_j^2) ]  \to -2c(\kp_1+\kp_2c ) >0.
\]
Hence, there exist $T>0$ and $\lambda>0$ such that
\[
\dot g_{ij} \geq \lambda g_{ij},\quad t\geq T.
\]
Second, suppose that $\kp_2<0$. We observe
\[
(q_i - q_j)^2 = \langle x_i-x_j,x_c\rangle^2 \leq 2|x_c|^2g_{ij}.
\]
Hence, 
\[
\dot g_{ij} \geq E_{ij}(t) g_{ij},\quad E_{ij}:= 2\kp_2|x_c|^2 - \kp_1(q_i+q_j) - \kp_2(q_i^2 + q_j^2) \to -2\kp_1 c>0.
\]
Hence, again, there exist $T>0$ and $\lambda>0$ such that
\[
\dot g_{ij} \geq \lambda g_{ij},\quad t\geq T.
\]
This gives
\[
g_{ij}(t) \geq g_{ij}(T)e^{\lambda(t-T)},\quad t\geq T.
\]
Since we already know that $g_{ij}$ converges to zero, we should have $g_{ij}(T)=0$. But this only comes from $g_{ij}^0=0$, i.e., $x_i^0=x_j^0$. Hence, $X^0 \in \mathcal M_{>\frac N2}$. 

This shows that if we assume $X^0 \notin \mathcal M_{>\frac N2}$ and $R(0)<R_*$, then $c=0$ and  complete desynchronization emerges. 

\end{proof}

\subsection{Non-desynchronizability}

\begin{theorem}
Suppose that $\kp_1>0$ and $x_c^0\neq0$. Then, system \eqref{main}--\eqref{init} cannot exhibit complete desynchronization. 
\end{theorem}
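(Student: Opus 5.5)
The plan is to mimic part (i) of Theorem \ref{T3.1}, using the general order-parameter identity from the previous proof:
\[
\frac12\frac{d}{dt}R^2=\frac1N\sum_{i=1}^N(\kp_1+\kp_2 q_i)(R^2-q_i^2),\qquad q_i=\langle x_i,x_c\rangle,\quad |q_i|\le R.
\]
Here $\kp_2$ may have either sign; only $\kp_1>0$ is assumed. I will argue by contradiction and suppose $R(t)\to0$.

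The key observation is that once $R$ is small, the pairwise term dominates the higher-order term. Since $|\kp_2 q_i|\le|\kp_2|R$, choose $T>0$ such that $R(t)<\kp_1/(2(|\kp_2|+1))$ for $t\ge T$. Then $\kp_1+\kp_2q_i\ge\kp_1/2>0$ for all $i$ and all $t\ge T$. Each factor $R^2-q_i^2$ is nonnegative, so
\[
\frac12\frac{d}{dt}R^2\ \ge\ \frac{\kp_1}{2N}\sum_{i=1}^N(R^2-q_i^2)\ \ge 0,\qquad t\ge T.
\]
Hence $R^2$ is nondecreasing on $[T,\infty)$. Combined with $R^2\to0$, this forces $R\equiv0$ on $[T,\infty)$.

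It remains to transfer $R(T)=0$ back to time zero, and this is the step that needs care. A configuration with $x_c=0$ is an equilibrium of \eqref{main}, because every term on the right-hand side carries a factor of $x_c$. The system is a smooth (polynomial) ODE on $(\bbs^m)^N$, so backward uniqueness holds. Thus the solution through the equilibrium $X(T)$ is constant for all time, giving $X(0)=X(T)$ and $x_c^0=0$. This contradicts the assumption $x_c^0\neq0$.

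The only subtle point is the uniqueness argument. Taking the limit in $R$ alone is not enough: we need that $R(T)=0$ makes the entire configuration stationary, not merely the mean. That holds precisely because $x_c=0$ kills the full vector field, as in the proof of Theorem \ref{T3.1}(i). As an alternative, one could avoid reaching exactly zero by noting that $R^2$ is nondecreasing from a positive value once $R(T)>0$, and $R(T)>0$ follows from the same uniqueness argument. Either way, $R(t)\ge R(T)>0$ for all $t\ge T$, so complete desynchronization is impossible.
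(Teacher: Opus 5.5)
Your proof is correct and follows the paper's own argument. It uses the identity for $\frac12\frac{d}{dt}R^2$ and a smallness threshold on $R$ (the paper takes $\kappa_1/(2|\kappa_2|)$) that guarantees $\kappa_1+\kappa_2 q_i\ge \kappa_1/2$; monotonicity of $R^2$ on $[T,\infty)$ then contradicts $x_c^0\neq 0$. Your explicit backward-uniqueness step, based on $x_c=0$ annihilating the entire vector field, fills in a detail that the paper's proof of this statement leaves implicit; the paper spells that step out only in Theorem~\ref{T3.1}(i).
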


\begin{proof}
We observe 
\[
\frac12\frac{d}{dt} |x_c|^2 = \frac1N \sum_{i=1}^N (\kp_1+\kp_2\langle x_i,x_c\rangle )(|x_c|^2 - \langle x_i,x_c\rangle^2) .
\]
Choose
\[
\rho:= \begin{cases}
\displaystyle\frac{\kp_1}{2|\kp_2|},\quad \kp_2\neq0, \\
\infty,\quad \kp_2=0.
\end{cases}
\]
Then, whenever $|x_c(t)|<\rho$, then we have
\[
\kp_1+\kp_2\langle x_i,x_c\rangle \geq \kp_1-|\kp_2||x_c(t)| \geq \frac{\kp_1}{2}>0
\]
which gives
\[
\frac12\frac{d}{dt} |x_c|^2  \geq \frac{\kp_1}{2N}\sum_{i=1}^N (|x_c|^2 - \langle x_i,x_c\rangle^2)\geq0.
\]
Suppose to the contrary that $|x_c(t)|$ converges to zero as $t\to\infty$. Then, there exists $T>0$ such that
\[
|x_c(t)|<\rho,\quad t\geq T. 
\]
Hence, $|x_c(t)|^2$ is non-decreasing on $[T,\infty)$. Since we assume $x_c(0)\neq0$, this contradicts.

\end{proof}

\subsection{Complete bipolar synchronization}
Define the signed diameter: for $\sigma_i \in \{1,-1\}$, 
\[
\mathcal D_\sigma(t) := \max_{1\leq i,j\leq N} |1-\sigma_i\sigma_j\langle x_i(t),x_j(t)\rangle|.
\]
\begin{theorem}
Suppose that
\[
\kp_1<0,\quad \kp_1+\kp_2>0, \quad \beta:= \frac{-\kp_1}{\kp_2}\in (0,1), 
\]
Fix a bipartition of $[N]$ determined by signs $\sigma_i \in \{\pm1\}$ and let 
\[
N_+:=\#\{i:\sigma_i=1\},\quad N_-:=\#\{i:\sigma_i=-1\}.
\]
Without loss of generality, assume $N_+>N_-\geq1$ and set 
\[
c:= \frac{N_+-N_-}{N}.
\]
If $c>\beta$, then there exists an explicit neighborhood of the corresponding bipolar configuration such that every solution starting in this neighborhood converges to complete bipolar synchronization. More precisely, there exists $\veps_B>0$ such that if the initial signed diameter is sufficiently small, then there exists $p_\infty\in \bbs^m$ for which 
\[
\lim_{t\to\infty} x_i(t) = \sigma_i p_\infty,\quad i\in [N],\quad \lim_{t\to\infty} \|x_c(t)\| = c = \frac{N_+-N_-}{N}.
\]
Moreover, the signed diameter decays exponentially. 
\end{theorem}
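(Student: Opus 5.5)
The plan is to reduce to the aligned case and then run a bootstrap on the signed diameter. Set $\tilde x_i := \sigma_i x_i$. Since the vector field $x_c - q_i x_i$ is odd in $x_i$ while $q_i$ changes sign, the variables $\tilde x_i$ satisfy
\[
\dot{\tilde x}_i = (\kp_1 + \kp_2 \sigma_i \langle \tilde x_i, x_c\rangle)\bigl(\sigma_i x_c - \langle \tilde x_i, \sigma_i x_c\rangle \tilde x_i\bigr),
\]
so $\mathcal D_\sigma = \max_{i,j}(1-\langle \tilde x_i,\tilde x_j\rangle)$ is the ordinary diameter functional of the $\tilde x_i$. Near the reference configuration $\tilde x_i = p$ we have $x_c \approx c\,p$, $q_i \approx \sigma_i c$, and the effective coupling is
\[
\kp_1 + \kp_2 q_i \approx \kp_1 + \sigma_i \kp_2 c .
\]
For $\sigma_i = 1$ this equals $\kp_2(c-\beta) > 0$, because $c > \beta$ and $\kp_2 > 0$ (the latter from $\kp_1 < 0 < \kp_1 + \kp_2$). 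For $\sigma_i = -1$ it equals $\kp_1 - \kp_2 c < 0$, but the projected direction $\sigma_i x_c = -x_c$ also flips. So the anti-aligned agents are repelled from $x_c$, which means they are attracted toward $-x_c$, the position of their own cluster. This sign bookkeeping is the heuristic reason the bipolar state attracts.

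The quantitative tool is the pairwise identity already used in Appendix C:
\[
\dot g_{ij} = \kp_2 (q_i - q_j)^2 - \bigl(\kp_1(q_i+q_j) + \kp_2(q_i^2+q_j^2)\bigr) g_{ij}.
\]
Rewrite it for $\tilde g_{ij} := 1 - \sigma_i\sigma_j h_{ij}$ and split into three cases.
\begin{itemize}
\item For $i,j$ in the same cluster, $\tilde g_{ij} = g_{ij}$. The term $(q_i - q_j)^2$ is bounded by $2R^2 g_{ij}$, with the reverse sign since $\kp_2 > 0$. So it must be absorbed by the linear term, whose coefficient tends to $2c(\kp_1 + \kp_2 c) = 2c\kp_2(c-\beta) > 0$ for $+$ pairs.
\item For $-$ pairs the coefficient is $-2c\kp_1 + 2\kp_2 c^2 > 0$. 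Here the absorption is the step needing care: I would compare $2\kp_2 R^2$ against these rates, and if the crude bound fails, use the finer estimate $(q_i - q_j)^2 = \langle x_i - x_j, x_c\rangle^2$ with $x_c \approx c\,p$ and $x_i - x_j$ nearly orthogonal to $p$, which gives $(q_i-q_j)^2 = O(\mathcal D_\sigma^2)$.
\item For mixed pairs, $\tilde g_{ij} = 1 + h_{ij}$ and $\dot{\tilde g}_{ij} = -\dot g_{ij}$. The analogous computation gives decay rate $2c\kp_2 \cdot \bigl(\text{positive}\bigr)$ plus the same quadratic remainder.
\end{itemize}

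To run this rigorously, bound $|q_i - \sigma_i c|$ and $|R - c|$ by $C\mathcal D_\sigma$. Since $x_c = \frac1N\sum_k \sigma_k \tilde x_k$, each quantity $\langle \tilde x_i, \sigma_k \tilde x_k\rangle$ differs from $\sigma_k$ by at most $\mathcal D_\sigma$. This yields
\[
\dot{\tilde g}_{ij} \le -\bigl(\lambda_0 - C\mathcal D_\sigma\bigr)\tilde g_{ij} + C\mathcal D_\sigma^2,
\]
with $\lambda_0 := 2c\min\{\kp_2(c-\beta),\, -\kp_1 + \kp_2 c\} > 0$. Taking the maximum over pairs (a.e. in time, as for $D(\Theta)$ in Section \ref{sec:5}) gives
\[
\dot{\mathcal D}_\sigma \le -(\lambda_0 - C'\mathcal D_\sigma)\mathcal D_\sigma .
\]
A bootstrap identical in structure to the proof of Theorem \ref{T3.2} then shows that if $\mathcal D_\sigma(0) < \veps_B := \lambda_0/(2C')$, the diameter stays below $\veps_B$ and decays like $e^{-\lambda_0 t/2}$.

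It then follows that $R \to c$ and $q_i \to \sigma_i c$. Moreover, $\|\dot x_i\| \le C\|x_c - q_i x_i\| \le C'\mathcal D_\sigma^{1/2}$, and the square root still decays exponentially, so the velocities are integrable. Hence $x_i(t) \to x_i^\infty$, and the vanishing diameter forces $x_i^\infty = \sigma_i p_\infty$.

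I expect the main obstacle to be the sign-indefinite term $\kp_2(q_i - q_j)^2$ within the $+$ cluster. It is comparable to $g_{ij}$ with constant $2\kp_2 c^2$, which can exceed $2\kp_2 c(c-\beta)$. The crude bound therefore fails, and the refined orthogonality estimate, making this term quadratic in $\mathcal D_\sigma$, is essential.
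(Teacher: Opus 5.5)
Your treatment of same-cluster pairs is correct and matches the paper. From $x_c=\frac1N\sum_k\sigma_k\tilde x_k$ you get $|q_i-\sigma_i c|\le\mathcal D_\sigma$, so $\kp_2(q_i-q_j)^2=O(\mathcal D_\sigma^2)$ within a cluster; the paper writes this as $|e_i-e_j|\le 2\mathcal H_\sigma$. The gap is in the mixed pairs, where the claim ``plus the same quadratic remainder'' is false. For $\sigma_i=1$, $\sigma_j=-1$ and $\tilde g_{ij}=2-g_{ij}$, the pairwise identity you quote gives exactly
\[
\dot{\tilde g}_{ij}=2\kp_1(q_i+q_j)+\kp_2(q_i+q_j)^2-\bigl(\kp_1(q_i+q_j)+\kp_2(q_i^2+q_j^2)\bigr)\tilde g_{ij}.
\]
Here $q_i+q_j$ is only $O(\mathcal D_\sigma)$. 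So the forcing $2\kp_1(q_i+q_j)$ is first order, and nothing forces the decay rate $\approx 2\kp_2c^2$ to dominate it. In fact the plain maximum $\mathcal D_\sigma$ is not a Lyapunov function under $c>\beta$ alone.

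Here is a counterexample. Put one positive agent $i$ at $u$ and the other $N_+-1$ positive agents at $w$. Put one negative agent $a$ with $-x_a=w$ and the other $N_--1$ negative agents with $-x_b=u$. Every signed distance is then $0$ or $H:=1-\langle u,w\rangle$, so $\mathcal D_\sigma=\tilde g_{ia}=H$. On the other hand $q_i+q_a=\frac{4-N}{N}H$, hence
\[
\dot{\tilde g}_{ia}=\Bigl(2|\kp_1|\tfrac{N-4}{N}-2\kp_2c^2\Bigr)H+O(H^2).
\]
This is positive for arbitrarily small $H$ when, for instance, $\kp_1=-0.2$, $\kp_2=1$, $(N_+,N_-)=(13,7)$, which gives $c=0.3>\beta=0.2$. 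So your inequality $\dot{\mathcal D}_\sigma\le-(\lambda_0-C'\mathcal D_\sigma)\mathcal D_\sigma$ is false, and the bootstrap cannot close on $\mathcal D_\sigma$. The obstacle you singled out, the same-cluster quadratic term, is therefore not the real one.

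The missing idea is to exploit the sign and triangular structure of this linear cross term.
\begin{itemize}
\item Write $\sigma_iq_i=c+e_i$ with $e_i=-\frac1N\sum_k\sigma_k\tilde g_{ik}$. The forcing becomes $-2\kp_2\beta(e_i-e_a)$.
\item The mixed distances enter $e_i$ ($i$ positive) with a plus sign and $e_a$ ($a$ negative) with a minus sign, so they only help. This gives the one-sided bound
\[
-2\kp_2\beta(e_i-e_a)\le a_+D_++a_-D_-,\qquad a_\pm:=2\kp_2\beta(N_\pm-1)/N,
\]
where $D_\pm$ are the signed diameters within each cluster.
\item The $D_\pm$ inequalities carry no linear forcing at all, so the system is triangular.
\end{itemize}
The paper therefore runs the bootstrap on the weighted functional $\mathcal H_\sigma=AD_++BD_-+D_0$, where $D_0$ is the maximal mixed distance. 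Taking $A\ge 2a_+/\lambda$ and $B\ge 2a_-/\lambda$, with $\lambda>0$ a uniform lower bound for the linear decay rates, yields $\mathcal H_\sigma'\le-\frac\lambda2\mathcal H_\sigma+C\mathcal H_\sigma^2$. Once this is in place, the rest of your argument goes through as you describe: exponential decay, integrability from $\|\dot x_i\|\le C\mathcal D_\sigma^{1/2}$, and identification of the limit $\sigma_ip_\infty$.
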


\begin{proof}
Let $\sigma_i \in \{1,-1\}$ and imagine that the agent with $\sigma_i=1$ converges to $p$ while the agent with $\sigma_i=-1$ converges to $-p$. Denote
\[
N_+ := |\{  i : \sigma_i = 1\}|,\quad N_- := |\{  i : \sigma_i = -1\}|.
\]
Without loss of generality, we assume $N_+>N_-$ and denote
\[
c:= \frac{N_+-N_-}{N}\in (0,1).
\]
We need to assume
\[
\frac{N_+-N_-}{N}>\frac{-\kp_1}{\kp_2}, \quad \textup{i.e.,}\quad c>\beta.
\]
Denote signed vector
\[
y_i := \sigma _i x_i
\]
and our goal is to show that all $y_i$ converge  to $p$. Denote
\[
x_c = m = \frac1N \sum_{j=1}^N x_j = \frac1N \sum_{j=1}^N \sigma_j y_j,\quad w_i := \langle y_i,m\rangle. 
\]
Then, $y_i$ satisfies
\[
\dot y_i = \sigma_i \dot x_i = (\sigma_i \kp_1 +\kp_2w_i ) (m - w_i y_i),\quad b_i := \sigma_i \kp_1+\kp_2w_i.
\]
At the target bipolar state $w_i = c$, the corresponding coefficients are 
\[
b_i = \begin{cases}
\kp_1 +\kp_2c = \kp_2(c-\beta),\quad \sigma_i=1, \\
-\kp_1 + \kp_2c = \kp_2(c+\beta),\quad \sigma_i=-1.
\end{cases}
\]
Hence, if $c>\beta$, then $b_i>0$ for all $i$. Define the signed pairwise distance
\[
H_{ij}:= 1-\langle y_i,y_j\rangle = 1-\sigma_i \sigma_j \langle x_i,x_j\rangle
\]
which satisfies
\[
\dot H_{ij} = (b_i - b_j)(w_i - w_j) - (b_i w_i + b_j w_j)H_{ij}.
\]
If complete bipolar synchronization occurs, then $m\to cp$ and $w_i= \langle y_i,m\rangle \to c$. Hence,  we denote
\[
e_i := w_i - c .
\]
Also, we write $w_i$ in terms of $H_{ij}$:
\begin{align*}
w_i = \langle y_i,m\rangle = \frac1N \sum_{k=1}^N \sigma_k \langle y_i,y_k\rangle = \frac1N \sum_{k=1}^N \sigma_k ( 1-H_{ik}) = c- \frac1N \sum_{k=1}^N \sigma_k H_{ik}
\end{align*}
which gives
\[
e_i = -\frac1N \sum_{k=1}^N \sigma_k H_{ik}.
\]
For simplicity, if we denote
\[
\Phi_s(w) = w(w-\beta s),\quad s\in \{1,-1\},
\]
then the dynamics of $H_{ij}$ becomes
\[
\dot H_{ij} = \kp_2(e_i - e_j)^2 - \kp_2\beta(\sigma_i - \sigma_j)(e_i - e_j) - \kp_2[ \Phi_{\sigma_i}(w_i ) + \Phi_{\sigma_j}(w_j)]H_{ij}.
\]
Denote
\[
P:= \{ i:\sigma_i=1\},\quad M:= \{i : \sigma_i=-1\}, \quad |P|=N_+,\quad |M|=N_-.
\]
Define the diameters
\[
D_+ := \max_{i,j\in P} H_{ij},\quad D_-:= \max_{i,j\in M} H_{ij}, \quad D_0 := \max_{i\in P, a\in M} H_{ia},
\]
and
\[
\mathcal H_\sigma := A D_+ + BD_- + D_0, 
\]
where $A,B\geq1$ are constants that will be determined later. If there is no pair or singleton, then the diameter becomes zero. If $i\in P$, then $\sigma_i=1$ and 
\[
e_i = -\frac1N \sum_{k=1}^N \sigma_k H_{ik} = -\frac1N \left(  \sum_{k\in P} H_{ik} - \sum_{a\in M} H_{ia}  \right) = \frac1N \left(  \sum_{a\in M} H_{ia} - \sum_{k\in P} H_{ik}  \right) ,\quad i\in P.
\]
Similarly, if $a\in M$, 
\[
e_a = \frac1N  \left(  \sum_{b\in M} H_{ab} - \sum_{k\in P} H_{ak}  \right) ,\quad a\in M. 
\]
We a priori assume, for a sufficiently small $\veps_0>0$, 
\[
\max_{i,j} H_{ij} \leq \veps_0.
\]
Then, 
\[
|e_i|\leq \veps_0,\quad w_i = c+e_i \in [c-\veps_0,c+\veps_0]. 
\]
Now, we choose $\veps_0 := \frac{c-\beta}{2}$. Then, we observe
\[
c-\veps_0 = \frac{c+\beta}{2}>0,\quad c-\veps_0 - \beta = \frac{c-\beta}{2} >0.
\]
Thus, for $s=1$, 
\[
\Phi_+(w) = w(w-\beta) \geq (c-\veps_0)(c-\veps_0-\beta) = \frac{c+\beta}{2}\frac{c-\beta}{2} = \frac{c^2-\beta^2}{4}>0.
\]
On the other hand for $s=-1$, 
\[
\Phi_-(w) = w(w+\beta) \geq (c-\veps_0)(c-\veps_0+\beta) >0.
\]
Thus, for any pair $(i,j)$, we have a positive lower bound
\[
\Phi_{\sigma_i}(w_i ) + \Phi_{\sigma_j}(w_j) \geq 2\cdot \frac{c^2-\beta^2}{4} = \frac{c^2-\beta^2}{2}.
\]
Hence, we have
\[
\kp_2 (\Phi_{\sigma_i}(w_i ) + \Phi_{\sigma_j}(w_j)) \geq \lambda :=\frac{ \kp_2}{ 2} (c^2-\beta^2) .
\]
Suppose that $\mathcal H_\sigma \leq \veps_0$. Since $A,B\geq1$, 
\[
D_+,D_-,D_0\leq \mathcal H_\sigma. 
\]
$\bullet$ (Inequality for $D_+$): Since $i,j\in P$, we have $\sigma_i = \sigma_j=1$. Hence, we have
\[
\dot H_{ij} = \kp_2(e_i - e_j)^2 -  \kp_2(\Phi_{\sigma_i}(w_i ) + \Phi_{\sigma_j}(w_j))H_{ij}.
\]
We observe
\[
|e_i - e_j|^2 \leq 4\mathcal H_\sigma^2. 
\]
Thus, we have
\[
\dot H_{ij} \leq -\lambda H_{ij} + 4\kp_2\mathcal H_\sigma^2
\]
which gives
\[
D_+' (t) \leq -\lambda D_+ + 4\kp_2\mathcal H_\sigma^2.
\]
$\bullet$ (Inequality for $D_-$): If $i,j \in M$, we have $\sigma_i = \sigma_j=-1$. Hence, by a similar argument, we have
\[
D_-' (t) \leq -\lambda D_- + 4\kp_2\mathcal H_\sigma^2.
\]
$\bullet$ (Inequality for $D_0$): For $i\in P$ and $a\in M$, we have $\sigma_i - \sigma_a=2$. Then, we have
\[
\dot H_{ia} = \kp_2(e_i - e_a)^2 - 2\kp_2\beta(e_i - e_a) - \kp_2( \Phi_+(w_i) + \Phi_-(w_a))H_{ia}. 
\] 
We need to focus on the second term. For $i\in P$, 
\[
e_i = \frac1N \left( \sum_{b\in M} H_{ib} - \sum_{k\in P} H_{ik}\right) \geq - \frac1N \sum_{k\in P} H_{ik} \geq - \frac{N_+-1}{N} D_+.
\]
Similarly for $a\in M$, 
\[
e_a = \frac1N\left(  \sum_{b\in M} H_{ab} - \sum_{k\in P} H_{ak}  \right) \leq \frac1N \sum_{b\in M} H_{ab} \leq \frac{N_--1}{N}D_-.
\]
Hence, we have
\[
e_i - e_a \geq -\frac{N_+-1}{N}D_+ - \frac{N_--1}{N} D_-
\]
and
\[
-2\kp_2\beta(e_i - e_a) \leq 2\kp_2\beta \left( \frac{N_+-1}{N}D_+ + \frac{N_--1}{N} D_- \right) =:a_+D_+ + a_-D_{-},
\]
where
\[
a_+ := \frac{2\kp_2\beta(N_+-1)}{N},\quad a_- := \frac{2\kp_2\beta(N_--1)}{N}.
\]
Hence, we have
\[
\dot D_0(t) \leq -\lambda D_0 + a_+D_+ + a_-D_- + 4\kp_2\mathcal H_\sigma^2.
\]
Since
\[
\mathcal H_\sigma := AD_+ + BD_- + D_0, 
\]
where 
\[
A:= \max\left\{  1,\frac{2a_+}{\lambda}   \right\},\quad B:= \max\left\{ 1,\frac{2a_-}{\lambda} \right\}, 
\]
we have
\[
A\lambda - a_+ \geq \frac{A\lambda}{2},\quad B\lambda - a_-\geq \frac{B\lambda}{2}.
\]
Now, we add all the inequalities to find
\[
\mathcal H_\sigma ' \leq -\frac{\lambda}{2}\mathcal H_\sigma + 4\kp_2(A+B+1)\mathcal H_\sigma^2.
\]
Note that this inequality holds when $\mathcal H_\sigma \leq \veps_0$. Now, we set
\[
\veps_B := \min \left\{ \veps_0,\frac{ \lambda}{16\kp_2(A+B+1)} \right\}.
\]
If we assume that $\mathcal H_\sigma (0) <\veps_B$, then we have 
\[
\mathcal H_\sigma(t) <\veps_0,\quad t>0,
\]
and 
\[
\mathcal H_\sigma ' \leq -\frac\lambda4\mathcal H_\sigma, \quad t>0.
\]
Hence, $\mathcal H_\sigma(t)$ decays exponentially. In addition, we observe
\[
x_c - cy_i = \frac1N \sum_{k=1}^N \sigma_k ( y_k - y_i)
\]
which gives
\[
\|x_c - cy_i\| \leq \sqrt{2\mathcal H_\sigma(t)}.
\]
Thus, 
\[
\|\dot y_i(t)\| \leq (|\kp_1|+\kp_2) \sqrt{2\mathcal H_\sigma(t)}
\]
which implies
\[
\int_0^\infty \|\dot y_i(t)\|dt <\infty.
\]
This gives 
\[
\lim_{t\to\infty} x_i(t)  = \sigma_i p_\infty, \quad \lim_{t\to\infty} \|x_c(t)\| = \frac{N_+-N_-}{N}.
\]

\end{proof}

\begin{remark}
The condition $c>\beta$ is not an existence condition for bipolar equilibria; rather, it is the threshold which ensures local attraction of the majority cluster. 
\end{remark}

  \begin{figure}[H]
\centering
\mbox{
\subfigure{  
\includegraphics[width=1\textwidth]{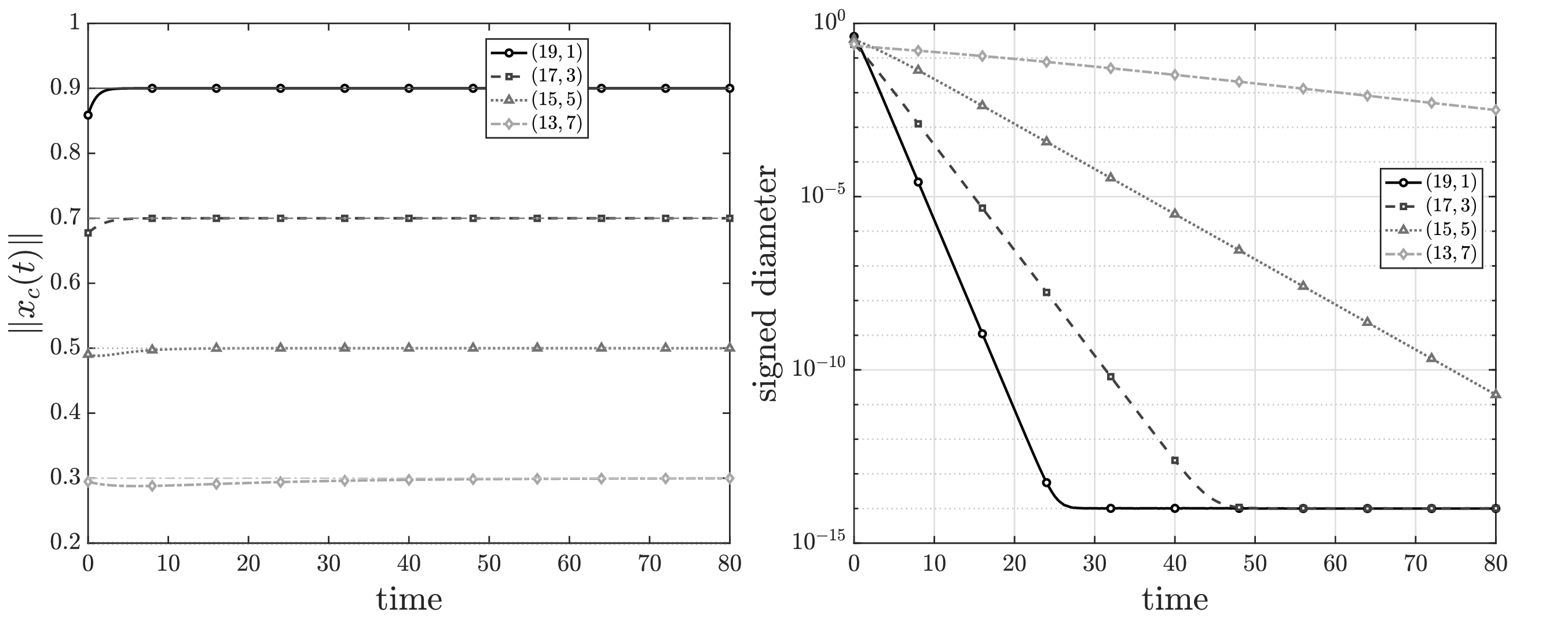}\label{fig:3-1}}
}
\caption{ $(\kp_1,\kp_2,N)=(-0.2,1,20)$: The left panel shows the temporal evolution of the   order parameter for four bipartite partitions $(N_+,N_-)=(19,1), (17,3), (15,5), (13,7)$. In each case, the particles converge to two antipodal clusters $x_i\to\pm p$ and hence the order parameter converges to $\frac{|N_+-N_-|}{N}$. The corresponding limiting values are 0.9, 0.7, 0.5, 0.3 which are larger than $\frac{-\kp_1}{\kp_2}=0.2$. The right panel shows the signed diameter on a logarithmic scale whose exponential decay confirms convergence to complete bipolar synchronization in agreement with the sufficient stability condition $c>\beta=0.2$.     }  \label{fig5-1}
\end{figure}

\begin{remark}
Explicit classes of initial configurations leading to bi-cluster formation have been constructed for related agent-based system, such as the Cucker-Smale model \cite{CHHJK}. In addition, the formation of two subgroups under competing attractive and repulsive interactions is also reminiscent of bi-cluster flocking in \cite{FHJ}. 
\end{remark}

 \subsection{Epilogue}
 So far, when $\kp_1>0$ and $\kp_1+\kp_2<0$,  we identify the balanced value $R^2=\alpha$, construct an explicit basin of attraction leading to it and show that non-balanced nonzero-mean equilibria are linearly unstable.   In Appendix \ref{sec:app.C}, we study emergent dynamics in slightly different settings. First, when $\kp_1<0$, regardless of the sign of $\kp_2$,  we   provide a sufficient condition leading to the complete desynchronization where the order parameter converges to 0. Second, we see that if $\kp_1>0$, regardless of the sign of $\kp_2$, then the order parameter cannot converge to zero. In other words, complete desynchronization cannot happen. Hence,  the sign of $\kp_1$ determines whether complete desynchronization can occur from nontrivial initial data. More precisely, complete desynchronization can occur when $\kp_1<0$, whereas it can never happen when $\kp_1>0$. 
 
 Lastly, when $\kp_1<0$ and $\kp_1+\kp_2>0$, we provide an explicit basin of attraction leading to complete bipolar synchronization where the agents split into two groups, each converging to one of the two poles of $\bbs^m$. 
 
 Although our main focus is on the regime $\kp_1>0$ and $\kp_1+\kp_2<0$, we also include additional results to enhance the completeness of the paper and improve the results in the previous work \cite{HK}. The results for general $\kp_1$ and $\kp_2$ are summarized in the figures below.

 \begin{figure}[h]
\centering
\mbox{
\subfigure{  
\includegraphics[width=1\textwidth]{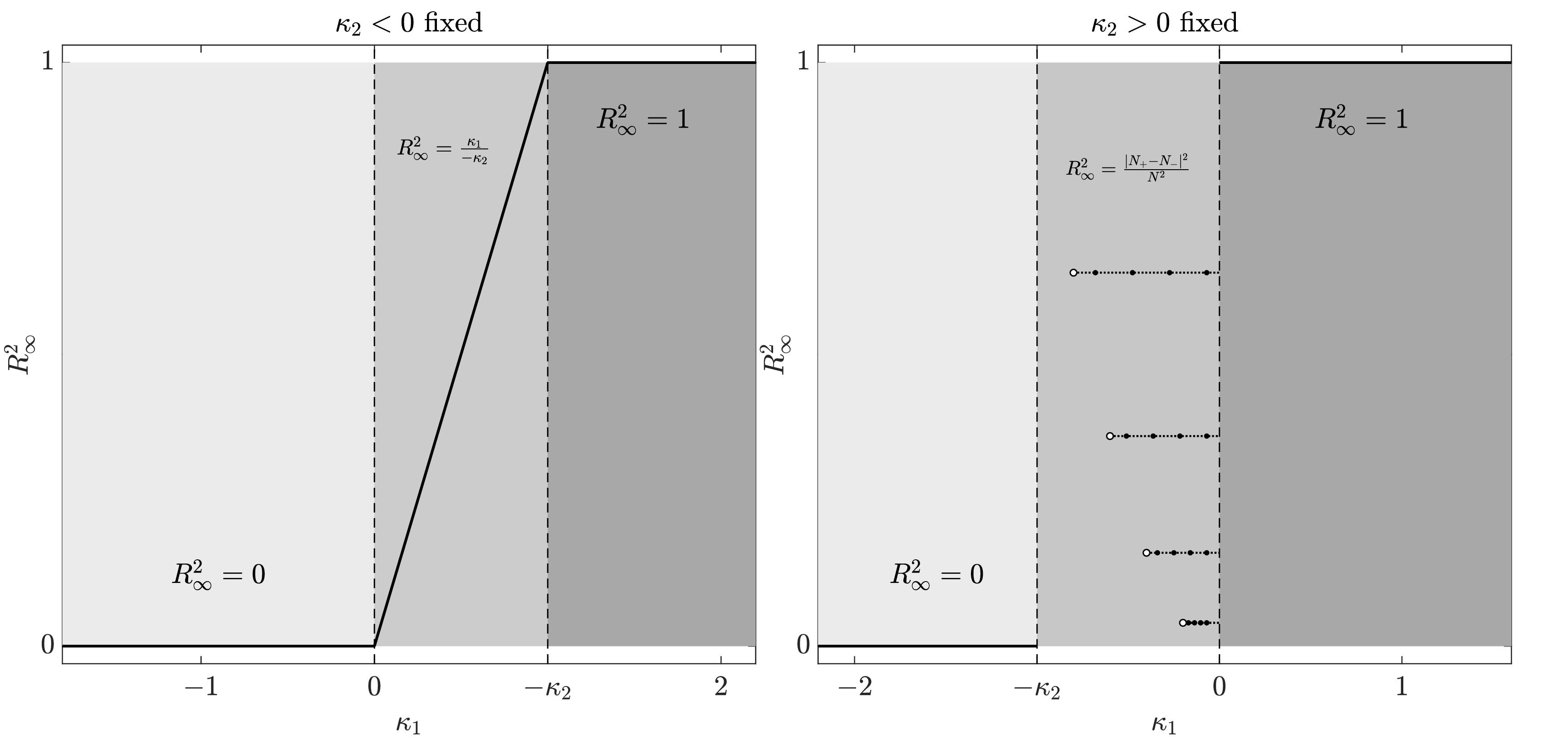}\label{fig:3-1}}
}
\caption{  (Summary of the dimension-dependent asymptotic regimes): In the regime $\kp_1>0,\kp_2<0,\kp_1+\kp_2<0$, the dynamics admits balanced states with the continuously selected value $R_\infty^2 = \alpha$. In contrast, in the regime $\kp_1<0,\kp_2>0,\kp_1+\kp_2>0$, complete bipolar synchronization gives discrete order-parameter levels determined by the population imbalance. Thus, the two competing parameter regimes exhibit qualitatively different mechanisms of the order-parameter selection. }  \label{fig5-1}
\end{figure}

 \begin{figure}[h]
\centering
\mbox{
\subfigure{  
\includegraphics[width=0.5\textwidth]{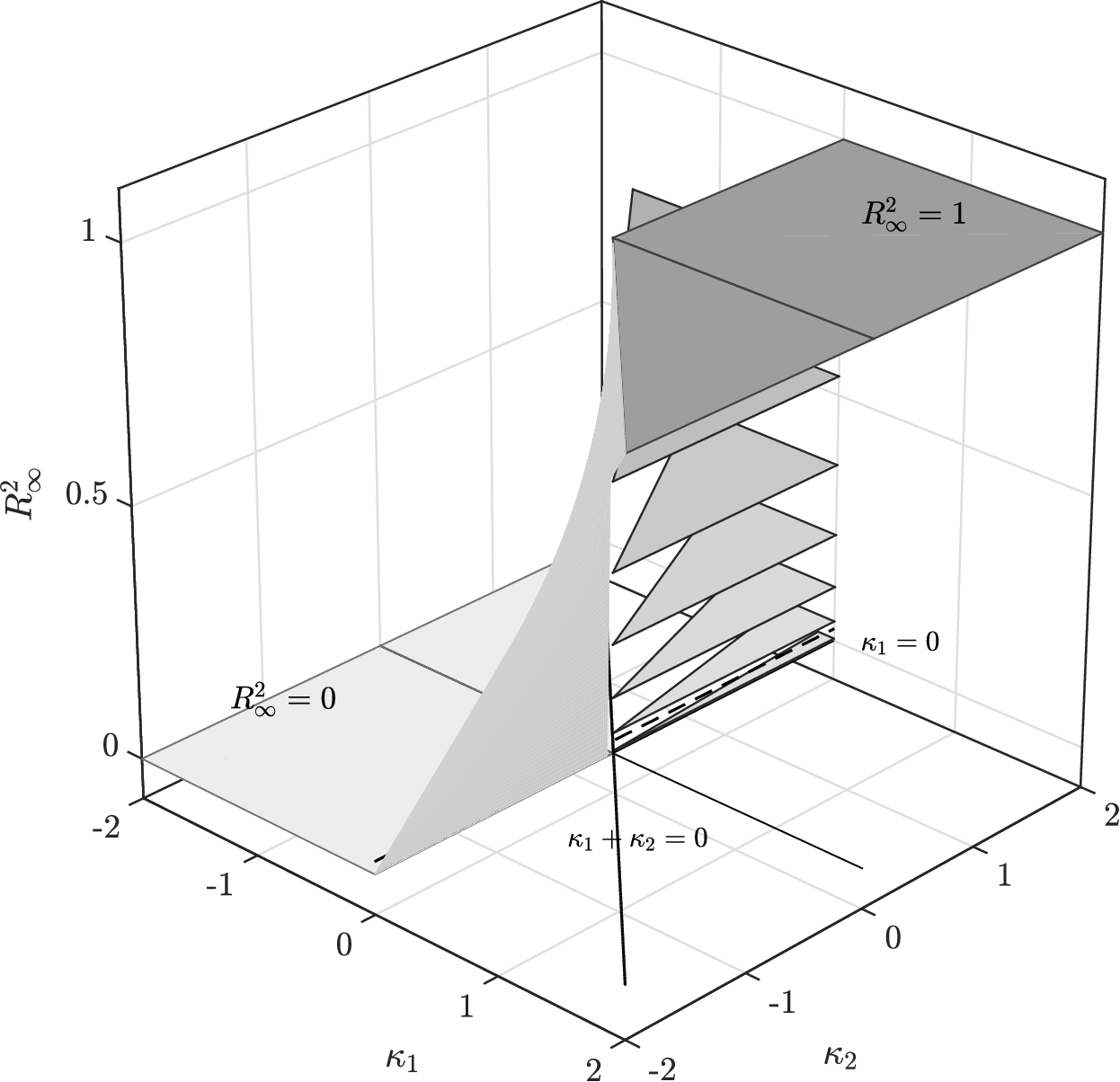}\label{fig:3-1}}
\subfigure{  
\includegraphics[width=0.5\textwidth]{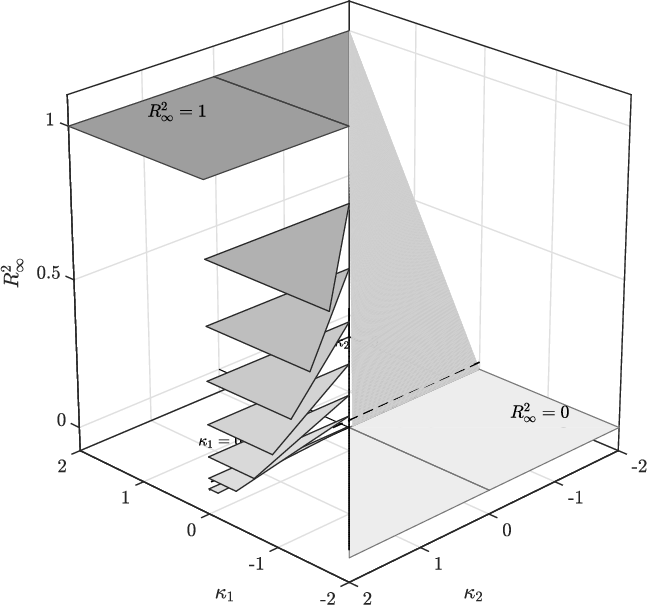}\label{fig:3-1}}
}
\caption{ Three-dimensional phase diagram of the admissible asymptotic branches of $R_\infty^2$ in the $(\kp_1,\kp_2)$-parameter space shown from two different viewing angles.   The horizontal sheets $R_\infty^2=0$ and $R_\infty^2=1$ represent complete desynchronization and complete synchronization, respectively. In the regime $\kp_1>0,\kp_2<0,\kp_1+\kp_2<0$, the inclined surface $R_\infty^2=\alpha$ represents the balanced latitude branch.  For $\kp_2>0$, complete bipolar synchronization gives the discrete quantized levels $R_\infty^2 = \frac{(N_+-N_-)^2}{N^2}$ where only the portions satisfying $\frac{|N-2k|}{N}>\beta=\frac{-\kp_1}{\kp_2}$ are stable. The planes $\kp_1=0$ and $\kp_1+\kp_2=0$ mark the critical boundaries for desynchronizability and synchronizability, respectively, while the coexistence of several sheets indicates multistability and basin-dependent selection.  }  \label{fig5-1}
\end{figure}

 \begin{figure}[H]
\centering
\mbox{
\subfigure{  
\includegraphics[width=0.5\textwidth]{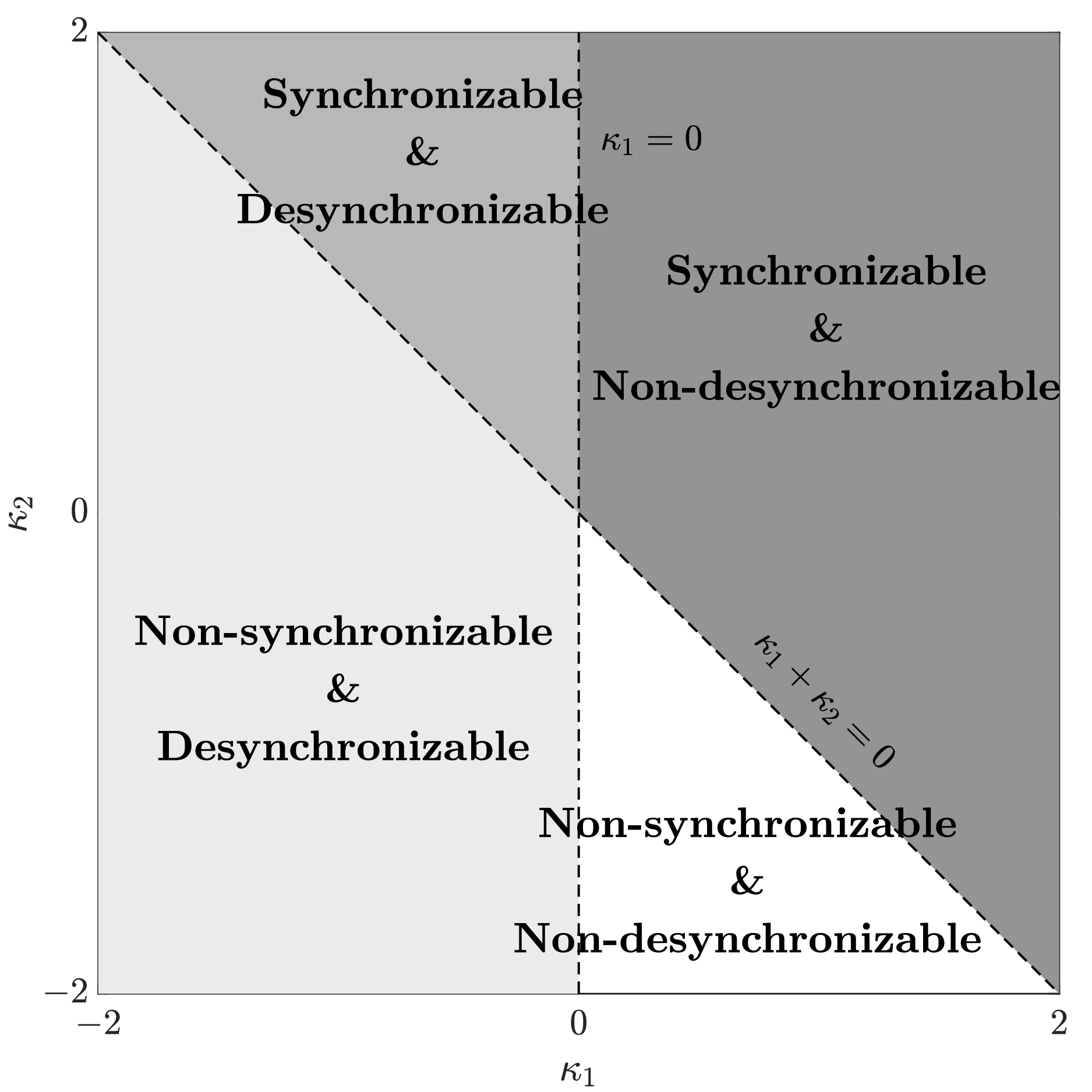}\label{fig:3-1}}
\subfigure{  
\includegraphics[width=0.5\textwidth]{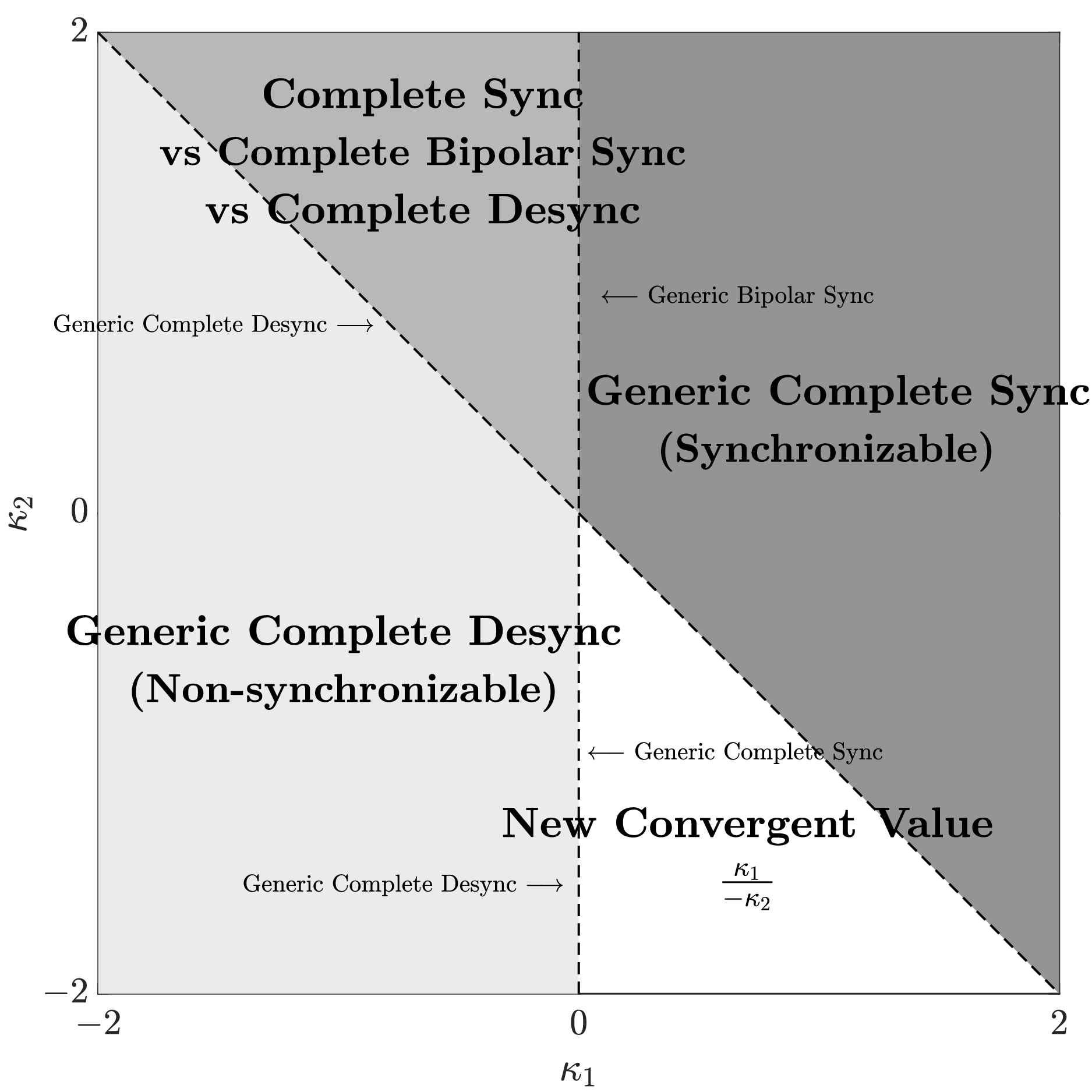}\label{fig:3-1}}
}
\caption{ Classification of the $(\kp_1,\kp_2)$-parameter plane according to the complementary dynamical regimes studied in this paper. The left panel indicates the regions associated with synchronizability and desynchronizability, with the lines $\kp_1+\kp_2=0$ and $\kp_1=0$ marking the corresponding critical boundaries. The right panel supplements this classification with the asymptotic behaviors established in the present work. }  \label{fig5-1}
\end{figure}

\end{document}